 \newcommand{\bs}{\bigskip}
 \newcommand{\ms}{\medskip}
 \newcommand{\n}{\noindent}
 \newcommand{\s}{\smallskip}
 \newcommand{\hs}[1]{\hspace*{ #1 mm}}
 \newcommand{\vs}[1]{\vspace*{ #1 mm}}
 \newcommand{\setempty}{\varnothing}
 \newcommand{\nat}{\mathbb{N}}
 \newcommand{\integer}{\mathbb{Z}}
 \newcommand{\co}{\mathrm{co}\mbox{-}}
 \newcommand{\BB}{{\cal B}}
 \newcommand{\CC}{{\cal C}}
 \newcommand{\FF}{{\cal F}}
 \newcommand{\HH}{{\cal H}}
 \newcommand{\KK}{{\cal K}}
 \newcommand{\LL}{{\cal L}}
 \newcommand{\NN}{{\cal N}}
 \newcommand{\GG}{{\cal G}}
 \newcommand{\MM}{{\cal M}}
 \newcommand{\PP}{{\cal P}}
 \newcommand{\UU}{{\cal U}}
 \newcommand{\VV}{{\cal V}}
 \newcommand{\dl}{\mathrm{L}}
 \newcommand{\nl}{\mathrm{NL}}
 \newcommand{\pp}{\mathrm{PP}}
 \newcommand{\poly}{\mathrm{poly}}
 \newcommand{\fp}{\mathrm{FP}}
 \newcommand{\sharpp}{\#\mathrm{P}}
 \newcommand{\gapp}{\mathrm{GapP}}
 \newcommand{\reg}{\mathrm{REG}}
 \newcommand{\cfl}{\mathrm{CFL}}
\theoremstyle{plain}
 \newtheorem{theorem}{Theorem}[section]
 \newtheorem{lemma}[theorem]{Lemma}
 \newtheorem{proposition}[theorem]{{\bf Proposition}}
 \newtheorem{corollary}[theorem]{Corollary}
\newtheorem{definition}[theorem]{Definition}}
\newtheorem{example}[theorem]{Example}}
 \newenvironment{proofsketch}{\par \noindent
            {\bf Proof Sketch. \hs{2}}}{\hfill$\Box$ \vspace*{3mm}}
 \newenvironment{proofof}[1]{\vspace*{5mm} \par \noindent
         {\bf Proof of #1.\hs{2}}}{\hfill$\Box$ \vspace*{3mm}}
 \newenvironment{proofsketchof}[1]{\vspace*{5mm} \par \noindent
         {\bf Proof Sketch of #1.\hs{2}}}{\hfill$\Box$ \vspace*{3mm}}
 \newenvironment{proof}{\par \noindent
            {\bf Proof. \hs{2}}}{\hfill$\Box$ \vspace*{3mm}}
 \newcommand{\ceilings}[1]{\lceil #1 \rceil}
 \newcommand{\floors}[1]{\lfloor #1 \rfloor}
 \newcommand{\pair}[1]{\langle #1 \rangle}
 \newcommand{\parity}{\oplus}
\newcommand{\ignore}[1]{}
\newcommand{\dollar}{\$}
 \newcommand{\oned}{1\mathrm{D}}
 \newcommand{\onep}{1\mathrm{P}}
 \newcommand{\onen}{1\mathrm{N}}
 \newcommand{\twod}{2\mathrm{D}}
 \newcommand{\twon}{2\mathrm{N}}
 \newcommand{\onecequal}{1\mathrm{C_{=}}}
 \newcommand{\oneparity}{1\oplus}
 \newcommand{\twoparity}{2\oplus}
 \newcommand{\twocequal}{2\mathrm{C_{=}}}
 \newcommand{\oneu}{1\mathrm{U}}
\newcommand{\sharpl}{\mathrm{\#\mathrm{L}}}
\newcommand{\gapl}{\mathrm{GapL}}
\newcommand{\cequall}{\mathrm{C}_{=}\mathrm{L}}
\newcommand{\onesharp}{1\#}
\newcommand{\onegap}{1\mathrm{Gap}}
\newcommand{\onef}{1\mathrm{F}}
\newcommand{\parityl}{\oplus\mathrm{L}}
\newcommand{\onesp}{1\mathrm{SP}}
\newcommand{\onedpd}{1\mathrm{DPD}}
\newcommand{\onenpd}{1\mathrm{NPD}}
\newcommand{\pl}{\mathrm{PL}}
\newcommand{\twosharp}{2\#}
\newcommand{\twogap}{2\mathrm{Gap}}
\newcommand{\dbraleft}{\:[\!\!\![\;}
\newcommand{\dbraright}{\;]\!\!\!]\,}
\begin{document}

\pagestyle{plain}
\pagenumbering{arabic}
\setcounter{page}{1}
\setcounter{footnote}{0}

\begin{center}
{\Large {\bf Power of Counting by Nonuniform Families of \s\\
Polynomial-Size
Finite Automata}}\footnote{This corrects and extends a preliminary report published in the Proceedings of the 24th International Symposium on Fundamentals of Computation Theory (FCT 2023), Trier, Germany, September 18--24, 2023, Lecture Notes in Computer Science, vol. 14292, pp. 421--435, Springer Cham, 2023.} \bs\\
{\sc Tomoyuki Yamakami}\footnote{Present Affiliation: Faculty of
Engineering, University of Fukui, 3-9-1 Bunkyo, Fukui 910-8507,
Japan}
\bs\\
\end{center}


\begin{abstract}
Lately, there have been intensive studies on strengths and limitations of nonuniform families of promise decision problems solvable by various types of polynomial-size finite automata families, where ``polynomial-size'' refers to the polynomially-bounded state complexity of a finite automata family.
In this line of study, we further expand the scope of these studies to families of partial counting and gap functions, defined in terms of nonuniform families of polynomial-size nondeterministic finite automata, and their relevant families of promise decision problems. Counting functions have an ability of counting the number of accepting computation paths produced by nondeterministic finite automata.
With no unproven hardness assumption, we show numerous separations and collapses of complexity classes of those partial counting and gap function families and their induced promise decision problem families.
We also investigate their relationships to pushdown automata families of polynomial stack-state complexity.

\ms

\n{\bf Key words.} nonuniform polynomial state complexity, counting functions, gap functions, counting complexity classes, closure properties, stack-state complexity
\end{abstract}


\sloppy
\section{Background and New Approach}\label{sec:introduction}

We quickly review basic background knowledge on nonuniform (polynomial) state complexity and counting and gap functions and discuss a new approach taken in this work toward the better understandings of counting.

\subsection{Nonuniform Automata Families and Counting}

In computational complexity theory, ``nonuniformity'' has played a distinguishing role.
Nonuniformity is often formulated by the use of nonuniform families of underlying machines. Within the framework of finite automata, mostly \emph{state complexity of transformation} has been studied in the literature (see surveys, e.g., \cite{GMR+16,HK11}).
Its quantum version was also investigated in  \cite{VY15}. This notion measures the increase of inner states by converting one type of automata of $n$ states to another type of automata.

From slightly different perspective, Berman and Lingas \cite{BL77} and Sakoda and Sipser \cite{SS78} studied in the late 1970s the computational power of  ``nonuniform'' families $\{M_n\}_{n\in\nat}$ of finite automata $M_n$, indexed by natural numbers $n$, built with polynomially many inner states (or, of \emph{polynomial state complexity}) in $n$.
A series of subsequent studies \cite{Gef12,Kap09, Kap12,Kap14, KP15, Yam19a, Yam19b,Yam22a} along the line of Sakoda and Sipser has made a significant contribution to a development of the \emph{theory of nonuniform (polynomial) state complexity}.
A further expansion of the scope of this theory has been expected to promote our basic understandings of the theory.

To simplify our further discussion, we here introduce the basic notations $\oned$ and $\onen$ of Sakoda and Sipser to denote respectively the collections of
nonuniform families of promise decision problems\footnote{A \emph{promise decision problem} over alphabet $\Sigma$ is a pair $(A,R)$ satisfying $A,R\subseteq \Sigma^*$ and $A\cap R=\setempty$. In particular, when $A\cup R=\Sigma^*$, $(A,R)$ is identified with the language $A$.} over fixed
alphabets\footnote{In this work, for any given family, we always fix an input alphabet for all promise problems in this family. This situation slightly differs from \cite{Kap09,Kap12,Kap14,KP15}, in which alphabets may vary  according to promise problems in the family.}
solvable by one-way deterministic finite automata (or 1dfa's, for
short) and by one-way nondeterministic finite automata (or 1nfa's) using  polynomially many inner states.

After an early study of \cite{BL77,SS78}, Kapoutsis \cite{Kap09,Kap12}  revitalized the study of nonuniform families of finite automata by expanding the scope of underlying machines to
probabilistic and alternating finite automata.
Geffert \cite{Gef12} further investigated the behaviors of nonuniform families of alternating finite automata. Yamakami expanded underlying machines to quantum finite automata \cite{Yam22a}, nondeterministic finite automata with fewer accepting computation paths \cite{Yam22b}, and width-bounded two-way nondeterministic finite automata (or 2nfa's) \cite{Yam19a} in direct connection to the \emph{linear space hypothesis} (LSH) \cite{Yam23a}. The  ``relativization'' of nonuniform (polynomial) sate complexity classes was introduced in  \cite{Yam19b}.
As another natural extension of finite automata families, nonuniform families of pushdown automata were studied in \cite{Yam21a}, where polynomial state complexity is replaced by \emph{polynomial stack-state complexity} (see Section \ref{sec:various-type} for its definition).

Nondeterminism has been a core concept of theoretical computer science against the opposite concept of determinism. To fully understand nondeterminism, it is important to investigate the nature of nondeterminism, in particular, from the viewpoint of comparing between the number of accepting computation paths and that of rejecting computation paths of nondeterministic computation.
Unambiguous computation, for instance, meets the criteria of having at most one accepting computation path.
Even acceptance probability of probabilistic computation is often understood as a ratio of the number of accepting computation paths of nondeterministic computation over the total number of halting computation paths.
Unambiguous computation and unbounded-error probabilistic computation of finite automata families were already studied in \cite{Kap09,Kap12,Yam22a} in terms of the nonuniform polynomial state complexity classes $\oneu$ and $\onep$.
These studies have signified the importance of investigating how the number of accepting computation paths affects the entire behaviors of nondeterministic computation, leading to
Valiant's \cite{Val75,Val79} notion of \emph{counting functions}, which
concerns the number of accepting computation paths of nondeterministic Turing machines (or TMs) on given inputs.

The centerpiece of the subsequent research has pivoted around two crucial notions of \emph{counting function} that computes the total number of accepting computation paths of each run of an underlying nondeterministic TM \cite{Val75,Val79}, and \emph{gap function} that computes  the difference between the number of accepting computation paths and that of rejecting computation paths produced by nondeterministic TMs \cite{FFK94}.
These counting and gap functions have played an essential role in capturing various complexity classes.
The reader may refer to, e.g., \cite{For98} for those counting complexity classes.
The study of ``counting'' is thus deeply rooted at the nature of nondeterminism, and thus
decision problems characterized in terms of counting functions seem to tailor the essence of nondeterminism.

A main purpose of this  work is therefore to explore the roles of ``counting'' within the framework of nonuniform (polynomial) state complexity and to further enrich this research field initiated by Sakoda and Sipser for fully understanding the nature of nondeterminism.

\subsection{Exploitation of Close Connections to One-Tape Linear-Time Machines}\label{sec:exploitation}

The most important discovery of this work is a possible exploitation of a close connection between one-way finite automata and \emph{one-tape linear-time TMs}\footnote{As discussed in \cite{TYL10}, a one-tape linear-time nondeterministic/probabilistic/quantum TM is extremely sensitive to the definition of ``runtime''.
Here, the runtime of a nondeterministic TM means the length of any longest computation path. This definition is called the \emph{strong definition} of runtime \cite{TYL10}.}
in order to verify some of the separations among complexity classes of nonuniform families of finite automata.
Earlier, Hennie \cite{Hen65} and Tadaki, Yamakami, and Li \cite{TYL10} demonstrated close connections between one-way finite automata and one-tape linear-time TMs.
Hennie's fundamental result, for instance, sates that one-tape linear-time (more strongly, $o(n\log{n})$-time) deterministic TMs recognize
only regular languages \cite{Hen65}.
Using one-tape linear-time model of nondeterministic TMs, Tadaki, Yamakami, and Li \cite{TYL10} in 2004 initiated a study on counting and gap functions whose collections are respectively denoted by $\mathrm{1\mbox{-}\#LIN}$ and $\mathrm{1\mbox{-}GapLIN}$, where the prefix {``1-''} stands for ``one-tape'' and the suffix ``LIN'' does for ``linear time''.
These function classes  can help us characterize numerous one-tape linear-time counting complexity classes of decision problems, such as  $\mathrm{1\mbox{-}DLIN}$ (by deterministic TMs),  $\mathrm{1\mbox{-}NLIN}$ (by nondeterministic TMs),  $\mathrm{1\mbox{-}ULIN}$\footnote{This unambiguous complexity class was not formally introduced in \cite{TYL10,Yam10} but its properties were discussed there in connection to the inverse of functions.} (by unambiguous TMs), $1\mbox{-}\!\oplus\!\mathrm{LIN}$ (by parity\footnote{A \emph{parity} machine is a nondeterministic machine that accepts exactly when there are an odd number of accepting computation paths on each input.} TMs), $\mathrm{1\mbox{-}C_{=}LIN}$ (by exact counting\footnote{A nondeterministic machine is said to be \emph{exact counting} if it accepts an input exactly when there are the same numbers of accepting and rejecting computation paths.} TMs), $\mathrm{1\mbox{-}SPLIN}$ (by stoic probabilistic\footnote{A probabilistic machine that is viewed as a ``nondeterministic'' machine is called  \emph{stoic} if the machine accepts (resp., rejects) an input when  the number of accepting computation paths is one more than the number of rejecting computation paths (resp., all halting computation paths are rejecting) \cite{FFK94}.} TMs), and $\mathrm{1\mbox{-}PLIN}$ (by unbounded-error probabilistic TMs) \cite{TYL10,Yam10}.
These complexity classes $1\mbox{-}\!\oplus\!\mathrm{LIN}$, $\mathrm{1\mbox{-}C_{=}LIN}$, $\mathrm{1\mbox{-}SPLIN}$, and $\mathrm{1\mbox{-}PLIN}$ turn out to possess quite distinctive characteristics.


\begin{figure}[t]
\centering
\includegraphics*[height=4.3cm]{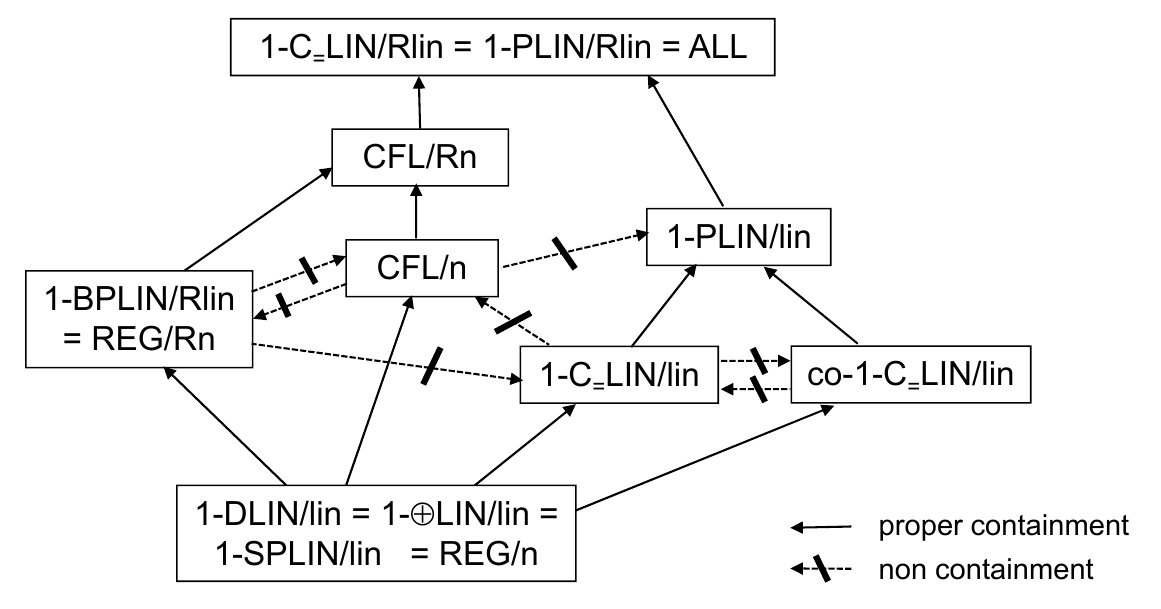}
\caption{Containments and separations of ``advised'' one-tape linear-time complexity classes shown in \cite{TYL10,Yam10,Yam11,Yam16}. The term ``ALL'' refers to the family of all languages. The notation $1\mbox{-}\mathrm{BPLIN}$ is the bounded-error variant of $1\mbox{-}\mathrm{PLIN}$ \cite{TYL10}. The complexity classes $\reg$/n \cite{TYL10} and $\cfl$/n \cite{Yam10} are the collections of all languages recognized respectively by 1dfa's and 1npda's with size-$n$ advice. The suffixes ``/Rn'' and ``/Rlin'' indicate the use of ``randomized'' advice, which is a distribution of advice strings, opposed to ``deterministic'' advice.}\label{fig:one-tape-figure}
\end{figure}


Nonuniformity can also be implemented by the use of ``advice'', which is an external source of information, to enhance the computational power of underlying machines.
Such advice is provided in the form of \emph{advice strings} to those machines \emph{in parallel to} standard input strings.
Finite automata and pushdown automata equipped with such (Karp-Lipton \cite{KL82} style)  ``advice'' were studied in a series of intensive research  \cite{DH95,TYL10, Yam08,Yam10,Yam11,Yam16}.
It turns out that the additional information contributes to the significant empowerment of underlying machines.

The distinctive role of linear-size (Karp-Lipton style) advice for one-tape linear-time TMs was further investigated in \cite{TYL10,Yam10}.
The supplemental use of such advice naturally introduces two nonuniform variants of $\mathrm{1\mbox{-}\#LIN}$ and $\mathrm{1\mbox{-}GapLIN}$, known  as $\mathrm{1\mbox{-}\#LIN/lin}$ and $\mathrm{1\mbox{-}GapLIN/lin}$, where the suffix ``/lin'' refers to the use of ``deterministic linear-size (Karp-Lipton style) advice.''
Similarly, the linear-size advised variants of the complexity classes $\mathrm{1\mbox{-}DLIN}$, $\mathrm{1\mbox{-}C_{=}LIN}$, $1\mbox{-}\!\oplus\!\mathrm{LIN}$, $\mathrm{1\mbox{-}SPLIN}$, and $\mathrm{1\mbox{-}PLIN}$ are expressed respectively as $\mathrm{1\mbox{-}DLIN/lin}$, $\mathrm{1\mbox{-}C_{=}LIN/lin}$, $1\mbox{-}\!\oplus\!\mathrm{LIN/lin}$, $\mathrm{1\mbox{-}SPLIN/lin}$, and $\mathrm{1\mbox{-}PLIN/lin}$. Figure~\ref{fig:one-tape-figure} summarizes known containments and separations among those one-tape linear-time complexity classes.

\subsection{Counting by Nonuniform Automata Families}\label{sec:state-complexity}

This work first formulates in Section \ref{sec:def-counting-func} the key function classes, respectively called $\onesharp$ and $\onegap$, of counting and gap (partial) functions.
More precisely, a counting function family in $\onesharp$ is an infinite series $\{(f_n,D_n)\}_{n\in\nat}$ consisting of partial (counting) functions $f_n$ producing the total number of accepting computation paths of 1nfa's  defined only on domains $D_n$. As for a gap function family $\{(g_n,D_n)\}_{n\in\nat}$ in $\onegap$, each partial (gap) function $g_n$  produces the difference between the total numbers of accepting and rejecting computation paths of a 1nfa on a domain $D_n$.
Here, a partial function $(f_n,D_n)$ can be viewed as a natural generalization of a promise decision problem $(L^{(+)}_n,L^{(-)}_n)$ used for $\oned$ and $\onen$.
Remember that, unlike $\mathrm{1\mbox{-}\#LIN/lin}$ and $\mathrm{1\mbox{-}GapLIN/lin}$, $\onesharp$ and $\onegap$ are composed of nonuniform families of ``partial'' functions whose values are ``defined'' only on predetermined domains.
In comparison to $\onesharp$ and $\onegap$, we also consider $\onef$, which consists of partial functions computable by nonuniform families of \emph{one-way deterministic finite transducers} (or 1dft's) with polynomially many inner states on specified domains.

With the use of such partial function families in $\onesharp$ and $\onegap$,
we can characterize various nonuniform polynomial state complexity classes:
$\oneu$ (by unambiguous automata families), $\onep$ (by unbounded-error probabilistic automata families), $\oneparity$ (by parity automata families), $\onecequal$ (by exact counting automata families), and $\onesp$ (by stoic probabilistic automata families),
following the naming practice\footnote{In \cite{SS78}, the terminology of $\oned$ and $\onen$ comes from 1-way Deterministic finite automata families and 1-way Nondeterministic finite automata families, respectively.} of Sakoda and Sipser.
For their precise definitions, refer to Section \ref{sec:def-complexity-class}.

\subsection{Main Contributions of This Work}

As the main contributions of this work, we prove in Sections \ref{sec:relationship} and \ref{sec:relation-pushdown} the containments and separations among the aforementioned nonuniform (polynomial) state complexity classes.
A key to the proofs of two main theorems (Theorems \ref{N-vs-cequal} and \ref{parity-vs-onep}) of this work is an exploitation of the aforementioned  close connection between nonuniform families of finite automata and one-tape linear-time TMs with linear-size (Karp-Lipton style) advice, which is stated in Lemmas \ref{property-1cequal} and \ref{property-onep}.
Many proofs of separation results among complexity classes exploit the fundamental properties of 1nfa's, in particular, one-way tape head move and the absence of memory space.

We further compare the complexity classes $\oneu$, $\onen$, and $\onep$ with the complexity classes $\onedpd$ and $\onenpd$  induced respectively by one-way deterministic and nondeterministic pushdown automata families \cite{Yam21a}.
In summary, Figure \ref{fig:class-separations} depicts the relationships among those nonuniform state complexity classes, shown in this work.
These relationships are quite different from those of the corresponding one-tape linear-time complexity classes, shown in Figure \ref{fig:one-tape-figure}.


In Section \ref{sec:functional-operation}, we further look into structural properties of $\onesharp$ and $\onegap$; in particular, closure properties under various functional operations (such as addition, subtraction, multiplication, and division). These closure properties were first studied by Ogiwara and Hemachandra \cite{OH93} for $\sharpp$ and $\gapp$. These closure properties of function classes are closely related to collapses of complexity classes of decision problems.
By exploiting such relationships, we use the separation results of Section \ref{sec:relationship} among counting complexity classes of promise decision problems  to verify the desired non-closure properties of $\onesharp$.



\begin{figure}[t]
\centering
\includegraphics*[height=4.8cm]{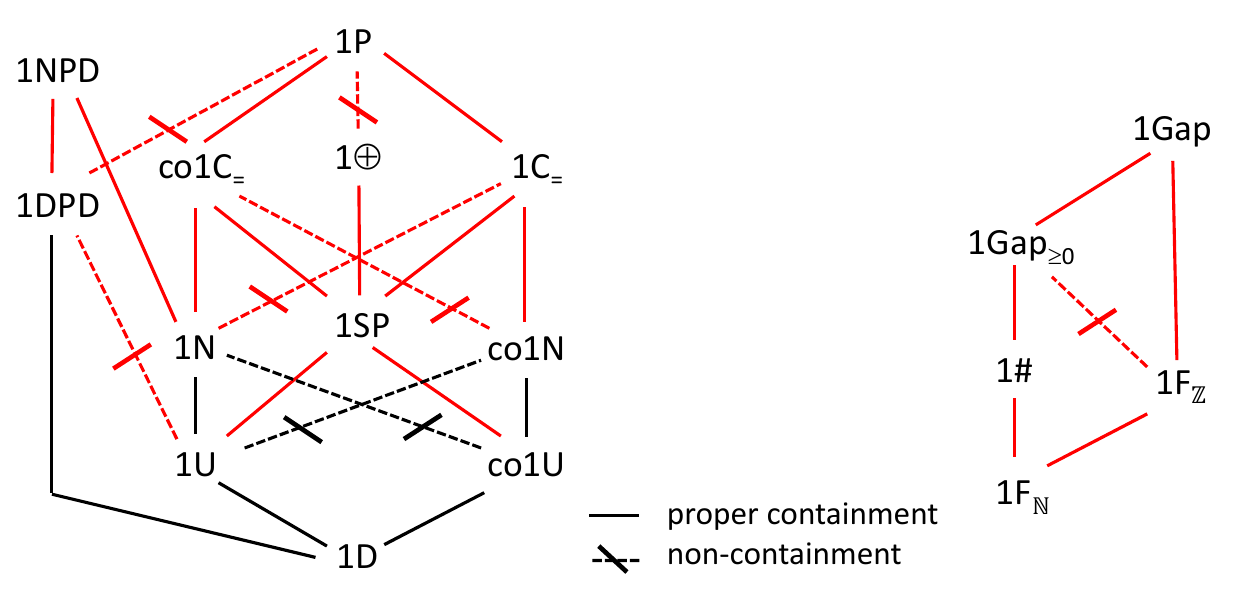}
\caption{Containments and separations of nonuniform polynomial state complexity classes shown in this work, marked by red lines and red dotted lines, except for $\oned\neq\oneu\neq\onen$ \cite{Yam22b,Yam22c}, $\co\oneu\nsubseteq\onen$ \cite{Yam22c}, and $\oned\neq\onedpd$ \cite{Yam21a} marked by black lines and black dotted lines.}\label{fig:class-separations}
\end{figure}


\section{Preparation: Notions and Notation}\label{sec:preparation}

We begin with explaining a number of basic but important notions and notation used in the subsequent sections.

\subsection{Numbers, Strings, and Promise Decision Problems}\label{sec:numbers}

The notation $\integer$ indicates the set of all integers.
All nonnegative integers are called \emph{natural numbers} and
form a unique set denoted by $\nat$. We also use $\nat^{+}$ to express
$\nat-\{0\}$.
For two integers $m$ and $n$ with $m\leq n$, the notation $[m,n]_{\integer}$ indicates the \emph{integer set} $\{m,m+1,m+2,\ldots,n\}$.
In particular,  we abbreviate $[1,n]_{\integer}$ as $[n]$
whenever $n\geq1$.
Given a set $S$,  $\PP(S)$ indicates the \emph{power set} of $S$.

We assume the reader's familiarity with automata theory. Let $\Sigma$
denote an \emph{alphabet}, which is a finite nonempty set of ``symbols'' or
``letters''. A finite sequence $s$ of symbols in $\Sigma$ is called a
\emph{string} over $\Sigma$ and its \emph{length} $|s|$ is the total number of symbols in $s$.
In this work, the \emph{empty string} (i.e., the string of length $0$) is always denoted by $\lambda$.
We use the notation $\Sigma^*$ to denote the set of all strings over $\Sigma$.
A \emph{language} over $\Sigma$ is simply a subset of $\Sigma^*$.
For any number $n\in\nat$, $\Sigma^n$ (resp., $\Sigma^{\leq n}$) denotes the set of all strings of length exactly $n$ (resp., at most $n$). Given a string $w\in\Sigma^*$ and a symbol $\sigma\in\Sigma$, $\#_{\sigma}(w)$ denotes the total number of occurrences of $\sigma$ in $w$.

A \emph{promise (decision) problem} over alphabet $\Sigma$ is
a pair $(L^{(+)},L^{(-)})$ of sets satisfying that
$L^{(+)}\cup L^{(-)} \subseteq \Sigma^*$ and $L^{(+)}\cap L^{(-)} =\setempty$, where instances in $L^{(+)}$ are generally called \emph{positive} (or YES instances) and those in $L^{(-)}$ are \emph{negative} (or NO instances).
All strings (or instances) in $L^{(+)}\cup L^{(-)}$ are customarily said to be \emph{valid} or \emph{promised}. A
language can be viewed as a special case of a promise problem of the form $(L,\overline{L})$, where $\overline{L}$ is the \emph{complement} of $L$ (i.e., $\overline{L}=\Sigma^*-L$).

To treat integers as binary strings, we use the following simple encoding schemes. For any positive integer $k$, the notation $bin(k)$ denotes the binary representation of $k$ leading with  $1$. Additionally, we set $bin(0)$ to be the empty string $\lambda$.
We further translate integers $k$ to binary strings $trans(k)$ as follows: $0$ is translated into $trans(0)=\lambda$,
a positive integer $k$ is translated into $trans(k)=1bin(k)$, and a negative integer $-k$ for $k>0$ is interpreted as $trans(-k)=0bin(k)$ in Section \ref{sec:functional-operation}.

\subsection{Various Types of One-Way Finite Automata}\label{sec:various-type}

In association with the main theme of ``counting'', we use various finite automata models.
Following \cite{Yam22a,Yam22b}, all
finite automata in this work are restricted so that they
should move their tape heads only in one
direction \emph{without making any $\lambda$-move},
where a $\lambda$-move refers to a step in which the tape head scans no input symbol
but it may change the inner state of the automaton.
Such machines are briefly called \emph{one way} machines in this work.
A \emph{one-way nondeterministic finite
automaton} (or a 1nfa, for short) is of the form
$(Q,\Sigma,\{\rhd,\lhd\},\delta,q_0,Q_{acc},Q_{rej})$, where $Q$
is a finite set of inner states, $\Sigma$ is an (input) alphabet,
$\rhd$ and $\lhd$ are two endmarkers,\footnote{We remark that automata  models with endmarkers and automata models with no endmarkers in general have different state complexities. Such a difference seems quite significant in the case of, e.g., one-way pushdown automata. See \cite{Yam21b} for a detailed discussion.}
$\delta$ is a transition function\footnote{This definition of $\delta$ clearly indicates the exclusion of any $\lambda$-move. When $\lambda$-moves are allowed, the corresponding automaton is called a 1.5nfa (as well as 1.5dfa) in \cite{Yam22a} in accordance with 1.5 quantum finite automata. For quantum finite automata, nevertheless, the use of $\lambda$-move causes a significant increase of computational power over the non-$\lambda$-move model. See, e.g., \cite{Yam22a}.}
mapping $(Q-Q_{halt}) \times\check{\Sigma}$ to $\PP(Q)$, where $\check{\Sigma}=\Sigma\cup\{\rhd,\lhd\}$, and
$Q_{acc}$ and $Q_{rej}$ are respectively sets of accepting states
and of rejecting states with $Q_{halt} = Q_{acc}\cup Q_{rej} \subseteq Q$ and $Q_{acc}\cap Q_{rej}=\setempty$.
We intentionally include $Q_{acc}$ and $Q_{rej}$
to the definition of 1nfa's.
An input (string) $x$ is given to an input tape, surrounded by the endmarkers $\rhd$ and $\lhd$. All tape cells are indexed by natural numbers, where $\rhd$ is located at cell $0$ and $\lhd$ is at cell $|x|+1$.
By the definition of $\delta$, when either $M$'s tape head moves off $\lhd$ or $M$ enters a halting state (i.e., either an accepting state or a rejecting state) on each computation path, $M$ is assumed to \emph{halt on this computation path}.
Since $M$ makes no $\lambda$-move, the use of the right endmarker $\lhd$ makes it possible that every computation path of $M$ must halt within $|x|+2$ steps for any input $x$.
We thus say that $M$ \emph{halts} on an input if $M$ starts on this input and eventually halts on all computation paths. For any input $x$, $M$ is said to \emph{accept} $x$ if there is an accepting computation path (i.e., a computation path ending with an accepting state), and $M$ is said to \emph{reject} $x$ if all computation paths end with rejecting states.
Moreover, we write $sc(M)$ for the
\emph{state complexity} $|Q|$ of $M$.
Given an input string $x$, the notation $M(x)$ stands for the ``outcome'' of $M$ on input $x$ whenever $M$ halts on $x$.
Given a promise problem $(L^{(+)},L^{(-)})$, a finite automaton $M$ is said to \emph{solve} it if $M$ accepts all instances in $L^{(+)}$ and rejects all instances in $L^{(-)}$; however, we do not impose any condition on invalid strings.
To express the total number of accepting computation paths of $M$ on $x$, we use the notation $\#M(x)$; by contrast, $\#\overline{M}(x)$ expresses the total  number
of rejecting computation paths of $M$ on $x$.

A \emph{one-way deterministic finite automaton} (or a 1dfa), in contrast,  uses a  transition function $\delta$ that maps $(Q-Q_{halt})\times \check{\Sigma}$ to $Q$.

Given a 1nfa $M$, we can modify it into another ``equivalent'' 1nfa $N$ so that  (1) $N$ makes exactly $c$ nondeterministic choices at every step and (2) $N$ produces exactly $c^{|\rhd x\lhd|}$ computation
paths on all inputs $x$ together with two endmarkers, where $c$ is an appropriately chosen constant in $\nat^{+}$.
For convenience, we call this specific form the \emph{branching normal form}.

\begin{lemma}[branching normal form]\label{branching-normal-form}
Let $M$ be any 1nfa solving a promise problem $(L^{(+)},L^{(-)})$.
(1) There exists another 1nfa $N$ in a branching normal form such that  $\#M(x)=\#N(x)$ for all valid $x$ and $sc(N)\leq 3 sc(M)$.
(2) There exists another 1nfa $N'$ in a branching normal form such that  $\#M(x)-\#\overline{M}(x) = \#N(x)-\#\overline{N}(x)$ for all valid $x$, and $sc(N)\leq 4 sc(M)$.
\end{lemma}

\begin{proof}
(1) Given a 1nfa $M$ as in the premise of the lemma, we wish to modify it to the desired 1nfa $N$ so that, for any input $x$, $N$ reads all $|x|+2$ input symbols (including the two endmarkers) written on an input tape before halting.
For this purpose, let $M= (Q,\Sigma,\{\rhd,\lhd\},\delta,q_0,Q_{acc},Q_{rej})$ and let $c$ denote the maximum number of nondeterministic choices of $M$.
We prepare $2c$ new inner states $p_1,p_2, \ldots,p_c, p_{rej,1},p_{rej,2}, \ldots,p_{rej,c}$ other than $Q$, where we treat each $p_{rej,i}$ as an additional rejecting state for every index $i\in[c]$. Formally, we set $Q'_{acc}=Q_{acc}$, $Q'_{rej}=Q_{rej}\cup
\{p_{rej,1},\ldots,p_{rej,c}\}$, and $Q'=Q\cup \{p_1,\ldots,p_c\}\cup \{p_{rej,1},\ldots,p_{rej,c}\}$ for $N$.
Note that $sc(N)=|Q|+2c \leq 3 sc(M)$ since $c\leq |Q|$.

For each symbol $\sigma\in\Sigma$ and any inner state $q\in Q$, we set $\delta'(q,\sigma)=\delta(q,\sigma)\cup\{p_1,p_2,\ldots,p_m\}$,
where $m=c-|\delta(q,\sigma)|$, and $\delta'(p_i,\sigma)=\{p_1,p_2,\ldots,p_c\}$, and $\delta'(p_i,\lhd)=\{p_{rej,1},p_{rej,2},\ldots,p_{rej,c}\}$ for any $i\in[c]$.
Clearly, $N$ makes exactly $c$ nondeterministic choices at every step. By the above definition of $N$, we also obtain $\#M(x)=\#N(x)$ for all inputs $x$.

(2) A basic idea is similar to (1). Here, instead of using $c$ itself, we take the minimal even number $d$ satisfying $d\geq c$ and prepare $3d$ new inner states $p_1,\ldots,p_d, p_{acc,1},\ldots,p_{acc,d}, p_{rej,1},\ldots,p_{rej,d}$ other than $Q$. Note that $sc(N)=|Q|+3d\leq 4 sc(M)$. In particular, we need to define  $\delta'(p_i,\lhd)=\{p_{acc,1},\ldots,p_{acc,d/2}, p_{rej,1},\ldots,p_{rej,d/2}\}$ for all $i\in[d]$. This implies that each inner state $p_i$ leads to an equal number of accepting and rejecting computation paths. It thus follows that  $\#M(x)-\#\overline{M}(x)$ coincides with $\#N(x)-\#\overline{N}(x)$ for all inputs $x$.
\end{proof}


The following property of 1nfa's is useful for proving Lemma \ref{cequal-closure} and Proposition \ref{onecequal-vs-onep}.

\begin{lemma}\label{cequal-path-number}
For every 1nfa $M$, there exists another 1nfa $N$ such that, for any $x$, if $\#M(x)=\#\overline{M}(x)$, then $\#N(x)=\#\overline{N}(x)$, and if $\#M(x)\neq \#\overline{M}(x)$, then $\#N(x) > \#\overline{N}(x)$. Here, the inequality $>$ can be replaced by $<$ by swapping between accepting states and ejecting states. Moreover, $sc(N)\leq sc(M)^2$ holds.
\end{lemma}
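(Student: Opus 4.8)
The plan is to force the gap $\#N(x)-\#\overline{N}(x)$ to equal a quantity that is always nonnegative and vanishes exactly when $\#M(x)=\#\overline{M}(x)$. Writing $a=\#M(x)$ and $r=\#\overline{M}(x)$, the natural candidate is $(a-r)^2$, since $(a-r)^2\geq 0$ with equality if and only if $a=r$. Once $N$ is arranged so that $\#N(x)-\#\overline{N}(x)=(a-r)^2$ (or any fixed positive multiple of it), both required implications are immediate: $a=r$ yields $\#N(x)=\#\overline{N}(x)$, while $a\neq r$ forces $\#N(x)-\#\overline{N}(x)\geq 1>0$. The companion statement, with $<$ replacing $>$, then follows by interchanging the accepting and rejecting states of $N$, which negates its gap and turns $(a-r)^2$ into $-(a-r)^2\leq 0$.

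Before building $N$, I would first replace $M$ by its branching normal form via Lemma~\ref{branching-normal-form}, so that every computation path halts properly and $a+r=t$ for a fixed number $t$ depending only on $|x|$; in particular $r=t-a$, and the notions of ``agreeing'' and ``disagreeing'' paths below are well defined on every path. The pairing I want is as follows: a computation path of $N$ should represent an ordered pair of $M$-paths, and $N$ should accept when the two paths agree (both accepting or both rejecting) and reject when they disagree. This yields $\#N=a^2+r^2$ and $\#\overline{N}=2ar$, hence $\#N-\#\overline{N}=(a-r)^2$, exactly as desired, and the path-count identities are a routine verification.

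The main obstacle is the state bound $|Q_N|\leq 2|Q_M|+2$, and I expect this to be the crux of the whole lemma. The naive realization of the pairing is the Cartesian product $M\times M$, whose state set has size $|Q_M|^2$ and therefore badly violates the required linear bound. Counting pairs of paths is inherently a degree-two operation, so any construction that literally runs two synchronized simulations spends $\Theta(|Q_M|^2)$ states; a budget of $2|Q_M|+2$ allows essentially only two interleaved single-pass copies of $M$ plus a constant number of auxiliary states. Meeting the bound must therefore avoid forming a genuine product: I would try to exploit the branching normal form, where $r=t-a$ is forced by the fixed total $t$, so that the condition $a=r$ becomes the single comparison $2a=t$, and then look for an economical one-pass device that encodes this comparison directly rather than tabulating all pairs. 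Pinning down such a device is where I expect the real difficulty to lie, the remaining verifications being straightforward.
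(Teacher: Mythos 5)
Your target identity is exactly the one the paper uses: build $N$ so that each computation path of $N$ encodes an ordered pair of computation paths of $M$, accept on agreement and reject on disagreement, giving $\#N(x)=\#M(x)^2+\#\overline{M}(x)^2$ and $\#\overline{N}(x)=2\#M(x)\cdot\#\overline{M}(x)$, hence $\#N(x)-\#\overline{N}(x)=(\#M(x)-\#\overline{M}(x))^2$. The genuine gap in your proposal is that you never actually construct $N$: you correctly note that the literal realization is the product automaton on state set $Q_M\times Q_M$, decide that this is forbidden by the bound $|Q_N|\le 2|Q_M|+2$, and then leave the search for an ``economical one-pass device'' as an unsolved crux. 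That search cannot succeed along the lines you sketch: with the branching normal form in place, any automaton built from two interleaved single-pass copies of $M$ plus $O(1)$ auxiliary states (e.g.\ an initial nondeterministic branch into two modified copies) has a gap that is an affine function of $a=\#M(x)$, since each copy contributes $\pm(2a-t)$ plus a constant, and an affine function cannot vanish exactly at $a=t/2$ while being strictly positive on both sides. So a proof ending where yours ends is incomplete.

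What you should know is that the paper's own proof is precisely the naive product construction you rejected: it sets $Q_N=\{(q_1,q_2)\mid q_1,q_2\in Q_M\}\cup\{q_{acc},q_{rej}\}$ and then asserts $|Q_N|\le 2|Q_M|+2$, which for that state set is plainly $|Q_M|^2+2$; the stated bound is a slip and should be read as $|Q_N|\le |Q_M|^2+2$, or simply as ``polynomial in $|Q_M|$,'' which is all that any application of the lemma in the paper requires (e.g.\ Lemma \ref{cequal-closure} and Proposition \ref{onecequal-vs-onep} only need polynomial size to be preserved). So your instinct about the degree-two obstruction was sound; the correct resolution is not to hunt for a linear-size device but to carry out the product construction and adjust the bound. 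One further small point: the branching normal form is not needed here --- the agreement/disagreement pairing is well defined provided $N$ halts improperly whenever either simulated path does, which is how the paper handles it --- though invoking Lemma \ref{branching-normal-form} first does no harm beyond inflating the constant.
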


\begin{proof}
From a given 1nfa $M$ with a set $Q_M$ of inner states, we design a new 1nfa $N$ equipped with a set $Q_N$ of inner states as follows. Let  $Q_{M,acc}=\{a_1,\ldots,a_{m_1}\}$ and $Q_{M,rej}=\{r_1,\ldots,r_{m_2}\}$ for two numbers $m_1,m_2\in\nat^{+}$.
For simplicity, we assume that $M$ always halts just after reading $\lhd$.
The desired 1nfa $N$ performs a simultaneous simulation of $M$'s two separate computation paths composed of two separate series of inner states by maintaining two separate elements of these series as a single inner state of $N$.
More formally, we set $Q_N=\{(q_1,q_2)\mid q_1,q_2\in Q_M\}$, $Q_{N,acc}=\{(a_i,a_j)\mid i,j\in[m_1]\}\cup \{(r_i,r_j)\mid i,j\in[m_2]\}$, and $Q_{N,rej}=\{(a_i,r_j)\mid i\in[m_1],j\in[m_2]\} \cup \{(r_i,a_j)\mid i\in[m_2],j\in[m_1]\}$. Since $|Q_{N}|\leq |Q_{M}|^2$, we obtain $sc(N)\leq sc(M)^2$.

If two computation paths end in the same type of halting states (i.e., either accepting states or rejecting states), then $N$ enters its own accepting states. Otherwise, $N$ enters its own rejecting states.
The behavior of $N$ is formally dictated in terms of its transition function $\delta_N$ defined by $\delta_N((q_1,q_2),\sigma) = \{(p_1,p_2)\mid p\in\delta_M(q_1,\sigma),p_2\in\delta_M(q_2,\sigma)\}$ for any $\sigma\in\check{\Sigma}$.
By the definitions of $Q_{N,acc}$ and $Q_{N,rej}$, it follows that $\#N(x)$ equals $\#M(x)^2+\#\overline{M}(x)^2$ and that $\#\overline{N}(x)$ equals $2\#M(x)\cdot \#\overline{M}(x)$. From these equations, we instantly obtain $\#N(x)-\#\overline{N}(x)= (\#M(x)-\#\overline{M}(x))^2$.
It then follows that, for any valid string $x$, $\#M(x)=\#\overline{M}(x)$ implies $\#N(x)=\#\overline{N}(x)$. On the contrary, if $\#M(x)\neq \#\overline{M}(x)$, then $\#N(x)>\#\overline{N}(x)$ follows.
\end{proof}


When a finite automaton is further equipped with a \emph{write-once}\footnote{A \emph{write-once tape} means that its tape head always moves to the next blank cell after the tape head writes a non-blank symbol.} output tape, we call it a \emph{finite transducer} (which can compute ``functions'') to distinguish it from the aforementioned finite automata acting as \emph{language acceptors}.
In this work, we consider only one-way deterministic finite transducers (or 1dft's, for short) equipped with transition functions $\delta$ mapping $(Q-Q_{halt})\times \check{\Sigma}$ to $Q\times\Gamma^*$ for two alphabets $\Sigma$ and $\Gamma$. Since a 1dft does not need to ``accept'' or ``reject'' an input, the 1dft requires only a set $Q_{halt}$ of halting states.
We also remark that, since a 1dft always moves its input-tape head to the right, we need to allow the 1dft to write multiple symbols (that is, a ``string'') onto its output tape in a single step by moving its tape head over multiple tape cells. The \emph{single-step output size} refers to the maximum number of symbols written down on the output tape in a single step.


A \emph{one-way nondeterministic pushdown automaton} (or a 1npda, for short) $M$ is a nonuple $(Q,\Sigma,\{\rhd,\lhd\}, \Gamma,\delta,q_0,\bot, Q_{acc}, Q_{rej})$ similar to a 1nfa but additionally equipped with a stack alphabet $\Gamma$ and a bottom marker $\bot$ for a stack, where $\delta$ maps $(Q-Q_{halt})\times \check{\Sigma}_{\lambda}\times \Gamma$ to $\PP(Q\times \Gamma^{*})$, where $\check{\Sigma}_{\lambda} = \check{\Sigma}\cup\{\lambda\}$.
A 1npda starts with $\rhd{x}\lhd$ ($x\in\Sigma^*$) written on its input tape and $\bot$ (bottom marker) on its stack. The (input) tape head is initially located at the leftmost tape cell and the stack head is always scanning the topmost stack cell.
A transition ``$(p,\gamma)\in \delta(q,\sigma,a)$'' means that, when reading $\sigma$ on the input tape and $a$ at the top cell of the stack, $M$ changes its inner state from $q$ to $p$ and replaces $a$ by $\gamma$ on the stack in a single step. Whenever $\sigma\neq \lambda$, the tape head must move to the right.
When $\sigma=\lambda$, on the contrary, $M$ makes a $\lambda$-move (i.e., the tape head stays still). We demand that the bottom marker is not pushed or popped at any step.
The \emph{push size} $e$ of $M$ is defined to be $\max_{(p,q,\sigma,a)}\{|\gamma|: (p,\gamma)\in \delta(q,\sigma,a)\}$ with $p,q\in Q$, $\sigma\in\check{\Sigma}_{\lambda}$, and $a\in\Gamma$. Hence, the range of $\delta$ can be restricted to $\PP(Q\times \Gamma^{\leq e})$.
The \emph{stack content} refers to the string in $(\Gamma^{(-)})^*\bot$ stored in the stack, where $\Gamma^{(-)}=\Gamma-\{\bot\}$. The \emph{stack height} is the length of this stack content. A \emph{stack height history} of a computation path of $M$ is a series of stack heights obtained as the tape head moves to the right following this computation path.
The \emph{stack-state complexity} of $M$, denoted by $ssc(M)$, is defined to be $|Q||\Gamma^{\leq e}|$. The fundamental notions and notation of 1nfa's are used also for 1npda's.

A deterministic variant of a 1npda is called a \emph{one-way deterministic pushdown automaton} (or a 1dpda) if the condition  $|\delta(q,\sigma,a)\cup\delta(q,\lambda,a)|\leq1$ holds for all
triplets $(q,\sigma,a)\in Q\times \check{\Sigma}\times \Gamma$.

\subsection{Nonuniform Families of Promise Problems and Finite Automata}\label{sec:nonuniform-family}

Given a fixed alphabet
$\Sigma$, we consider a family
$\LL=\{(L_n^{(+)},L_n^{(-)})\}_{n\in\nat}$ of promise problems,
each $(L_n^{(+)},L_n^{(-)})$ of which indexed by $n$ is a promise problem  over $\Sigma$. The set $L_n^{(+)}\cup L_n^{(-)}$ of all promised (or valid) instances is succinctly denoted by $L_n$. Here, instances in $L_n$ may not be limited to length-$n$ strings. It is important to note that the underlying alphabet $\Sigma$ is fixed for all promise problems $(L_n^{(+)},L_n^{(-)})$ in $\LL$ in accordance with the setting of \cite{Yam19a,Yam19b,Yam21a,Yam22a,Yam22b}.

The \emph{complement} of $\LL$  is $\{(L_n^{(-)},L_n^{(+)})\}_{n\in\nat}$ and is denoted $\co\LL$.
Given two families $\LL=\{(L_n^{(+)},L_n^{(-)})\}_{n\in\nat}$ and $\KK=\{(K_n^{(+)},K_n^{(-)})\}_{n\in\nat}$, we define the \emph{intersection} $\LL\cap \KK$ to be $\{(L_n^{(+)}\cap K_n^{(+)}, L_n^{(-)}\cup K_n^{(-)})\}_{n\in\nat}$ and the \emph{union} $\LL\cup\KK$ to be $\{(L_n^{(+)}\cup K_n^{(+)}, L_n^{(-)}\cap K_n^{(-)})\}_{n\in\nat}$.

To solve (or recognize) $\LL$, we use  a nonuniform family $\MM = \{M_n\}_{n\in\nat}$ of finite automata indexed by natural numbers.  Formally, we say that $\MM$ \emph{solves} (or \emph{recognizes}) $\LL$ if, for any index  $n\in\nat$, $M_n$ solves $(L_n^{(+)},L_n^{(-)})$.
We say that $\MM$ is of \emph{polynomial size} if there is a polynomial $p$ satisfying $sc(M_n)\leq p(n)$ for all $n\in\nat$.
The complexity class $\oned$ (resp., $\onen$) consists of all families of promise problems solvable by nonuniform families of polynomial-size 1dfa's (resp., 1nfa's).
Similarly, a family $\{N_n\}_{n\in\nat}$ of 1dpda's (resp., 1npda's) is said to be of \emph{polynomial size} if there exists a polynomial $p$ satisfying $ssc(N_n)=|Q_n||\Gamma_n^{\leq e_n}|\leq p(n)$ for all $n\in\nat$, where $e_n$ is the push size of $N_n$.
We note that $\Sigma$ is fixed independent of index $n$ but $\Gamma_n$ is allowed to vary according to $n$.
We then define two complexity classes $\onedpd$ and $\onenpd$  using polynomial-size 1dpda's and 1npda's, respectively \cite{Yam21a}.

We further consider nonuniform families of \emph{partial functions}. Let $f$ denote a partial function from $\Sigma^*$ to $\Gamma^*$ for two alphabets $\Sigma$ and $\Gamma$, where ``partial'' means that $f$ is treated as ``defined''  on its domain $D$, which is a subset of $\Sigma^*$, and $f$ is treated as ``undefined'' on $\Sigma^*-D$.
To emphasize the domain of $f$, we intend to write $(f,D)$ instead of $f$.  Given two alphabets $\Sigma$ and $\Gamma$, we express  a family $\FF$ of partial functions mapping $\Sigma^*$ to $\Gamma^*$ as  $\{(f_n,D_n)\}_{n\in\nat}$.

A family $\{M_n\}_{n\in\nat}$ of 1dft's is said to have \emph{polynomial size} if there are a polynomial $p$ and a constant $c>0$ such that (i)  $|Q_n|\leq p(n)$ and (ii) the single-step output size of $M_n$ is upper-bounded by $c\log{n}+c$ for any $n\in\nat$. Moreover, we say that $\{M_n\}_{n\in\nat}$
\emph{computes} $\FF=\{(f_n,D_n)\}_{n\in\nat}$ if, for any $n\in\nat$ and for all $x\in D_n$, $M_n$ on input $x$ produces $f_n(x)$ on its write-once output tape. Remark that, even if $x\notin D_n$, $M_n$ may possibly write an arbitrary string on its output tape but we invalidate such a string.

The notation $\onef$ denotes the collection of all families $\{(f_n,D_n)\}_{n\in\nat}$ of partial functions computed by certain families $\{M_n\}_{n\in\nat}$ of polynomial-size 1dft's.

\section{Introduction of Counting Functions and Counting Complexity Classes}\label{sec:counting-class}

Let us introduce two main function classes, $\onesharp$ and $\onegap$, composed of nonuniform families of \emph{partial functions} that are ``witnessed'' by polynomial-size 1nfa's.

\subsection{Definitions of Counting and Gap Functions}\label{sec:def-counting-func}

Following Valiant's \cite{Val75,Val79} definition of the counting function class $\sharpp$ and the gap function class $\gapp$ of Fenner, Fortnow, and Kurtz \cite{FFK94}, we wish to formulate two counting and gap function classes $\onesharp$ and $\onegap$ of our interest using nonuniform families of 1nfa's.

\begin{definition}
(1) The counting function class $\onesharp$ (pronounced ``one sharp'' or ``one pound'') is defined to be the collection of all families  $\{(f_n,D_n)\}_{n\in\nat}$ of partial functions such that there exists a nonuniform family $\{M_n\}_{n\in\nat}$ of polynomial-size 1nfa's for which $f_n(x)=\#M_n(x)$ for all $n\in\nat$ and all $x\in D_n$.

(2) In contrast, we define $\onegap$ to be the collection of all families $\{(f_n,D_n)\}_{n\in\nat}$ of partial functions characterized as $f_n(x)= \#M_n(x) - \#\overline{M}_n(x)$ for all $n\in\nat$ and all $x\in D_n$ for an appropriately chosen nonuniform family $\{M_n\}_{n\in\nat}$ of polynomial-size 1nfa's.
Additionally, the function class $\onegap_{\geq0}$ is composed of all partial function families in $\onegap$ that take only nonnegative integer values unless their outputs are undefined.
\end{definition}

For the sake of the reader, we quickly review two simple examples.

\begin{example}
Given a string $x\in\{0,1,2\}^*$, we define $f_n(x)$ to be the total number of $1$s in $x$ and $g_n(x)$ to be the total number of $1$s minus the total number of $0$s in $x$. We then claim that $\{(f_n,\Sigma^n)\}_{n\in\nat}$ and $\{(g_n,\Sigma^n)\}_{n\in\nat}$ are respectively in $\onesharp$ and $\onegap$. To prove that $\{(f_n,\Sigma^n)\}_{n\in\nat}\in\onesharp$, let us consider the 1nfa's $M_n$ that behaves as follows.
On input $x$, nondeterministically choose a number $i\in[|x|]$ and read the $i$th tape cell (by moving a tape head from $\rhd$ to the right and making a nondeterministic choice of either reading a tape symbol or skipping the tape cell). If the content of this tape cell is $1$, then accept $x$; otherwise, reject  $x$. It follows that $\#M_n(x)$ equals the total number of 1s in $x$, and thus we obtain $f_n(x)=\#M_n(x)$ for all $x$.
To see that  $\{(g_n,\Sigma^n)\}_{n\in\nat}\in\onegap$, in contrast, we slightly modify $M_n$ to $N_n$ as follows. Nondeterministically choose a number $i\in[|x|]$ and read the $i$th tape cell. If its content is $1$ (resp., $0$), then accept (resp., reject) $x$. When $2$ is found in the tape cell, branch out into two transitions with one accepting state and one rejecting state. It is not difficult to see that $\#N_n(x)-\#\overline{N}_n(x)$ equals $g_n(x)$.
\end{example}


Before stating the second example, we explain the special notation $[i_1,i_2,\ldots,i_k]$.
For any $k\in\nat^{+}$ and any $i_1,i_2,\ldots,i_k\in\nat^{+}$,  $[i_1,i_2,\ldots,i_k]$ denotes the binary string of the form $1^{i_1}01^{i_2}0\cdots 01^{i_k}0$.
Given any $n\in\nat^{+}$, let $A_n$ denote the collection of all such strings $[i_1,i_2,\ldots,i_k]$ with $k\in\nat^{+}$ and $i_1,i_2,\ldots,i_k\in[n]$.
We further define $A_{\infty}$ to be the set $\bigcup_{n\in\nat^+}A_n$.
For any $r=[i_1,i_2,\ldots,i_k]\in A_{\infty}$, $Set(r)$ denotes the set $\{i_1,i_2,\ldots,i_k\}$. In comparison, another notation $MSet(r)$ is used for the corresponding \emph{multi set}. Obviously, it follows that $|Set(r)|\leq |MSet(r)|=k$.
For any index $e\in[k]$, the notation $(r)_{(e)}$ indicates the $e$th entry $i_e$ in $r$.
Moreover, the set $A_n(m,k)$ consists of all strings of the form $[i_1,i_2,\ldots,i_k]$ with $i_1,i_2,\ldots,i_k\in [m]$. See \cite{Yam22b} for more information.


\begin{example}
As another concrete example of counting function families, let us consider $D_n=\{r_1\#r_2\mid r_1,r_2\in A_n\}$ and $f_n(x) = |\{e\in\nat^{+}\mid (r_1)_{(e)} \in Set(r_2)\}|$ for $x=r_1\# r_2$ in $D_n$.
We then show that the function family  $\{(f_n,D_n)\}_{n\in\nat}$ belongs to $\onesharp$ by considering the following 1nfa $M_n$ on any input $x$. Assume that $x$ has the form $r_1\# r_2$ for $r_1,r_2\in A_n$ with $r_1=[i_1,i_2,\ldots,i_k]$ and $r_2=[j_1,j_2,\ldots,j_l]$, where $i_m,j_m\in[n]$. The 1nfa $M_n$ nondeterministically chooses an index $e\in[n]$ (by moving its tape head to the right from $\rhd$ and making a nondeterministic choice of either reading a number $i_e$ or skipping it) and then reads $i_e$ ($=(r_1)_{(e)}$). Once reading $i_e$, $M_n$ moves its tape head to $\#$ and starts checking whether $i_e$ matches $j_m$ for a certain index $m\in[l]$ deterministically by moving the tape head to the right.
If there exists such an index $m$, then $M_n$ accepts; otherwise, $M_n$ rejects. Clearly, we obtain $f_n(x)=\#M_n(x)$. This entire description of $M_n$ requires only $O(n)$ inner states.
We further define $g_n(x) = |\{e\in \nat^{+}\mid (r_1)_{(e)} \in Set(r_2)\}| - |\{e\in \nat^{+}\mid (r_1)_{(e)} \notin Set(r_2)\}|$. Since $\#\overline{M}_n(x)$ matches $|\{e\in \nat^{+}\mid (r_1)_{(e)}\notin Set(r_2)\}|$, the equality $g_n(x)=\#M_n(x) - \#\overline{M}_n(x)$ follows. Therefore, $\{(g_n,D_n)\}_{n\in\nat}$ belongs to $\onegap$.
\end{example}

Let us recall from Section \ref{sec:nonuniform-family} that $\onef$ consists of all families of partial functions, mapping $\Sigma^*$ to $\Gamma^*$ for arbitrary alphabets $\Sigma$ and $\Gamma$, which are computed by nonuniform families of polynomial-size 1dft's. Here, we use the translation $trans(\cdot)$ between integers and binary strings described in Section \ref{sec:numbers}.
In accordance with the definitions of $\onesharp$ and $\onegap$,  $trans(\cdot)$ makes it possible to treat functions mapping $\Sigma^*$ to $\nat$ or $\integer$ as functions $f^{(trans)}$ mapping $\Sigma^*$ to $\{0,1\}^*$ defined by $f^{(trans)}(x) = trans(f(x))$ whenever $f(x)$ is defined.
As customary in computational complexity theory, we freely identify $f_n$ with $f_n^{(trans)}$, and thus we can define $\onef_{\integer}$ (resp., $\onef_{\nat}$) to be the class composed of all nonuniform families $\{(f_n,D_n)\}_{n\in\nat}$ of partial functions mapping $\Sigma^*$ to $\integer$ (resp., to $\nat$) for arbitrary alphabets $\Sigma$ under the condition that  $\{(f_n^{(trans)},D_n)\}_{n\in\nat}$ must belong to $\onef$.  Clearly, it follows that $\onef_{\nat}\subseteq \onef_{\integer}\subseteq \onef$.


In the end of this section, we briefly argue a direct connection to so-called \emph{weighted automata} \cite{Sch61}. Notice that a single 1nfa $M$ can introduce the counting function $f$ of the form $f(x)=\# M(x)$ for all $x$.
Let us consider the following weighted automaton $N$. The initial weight function assigns weight $1$ to an initial inner state, say, $q_0$.
For each transition of $N$ has weight $1$ except for any transition to a rejecting state of $M$, which has weight $0$. The weight of each computation path $\gamma$ is simply the product of the weights of all transitions made along $\gamma$. For each input $x$, $N$ finally produces the weight $w_N(x)$, which is the sum of the weights of all possible computation paths of $N$. It is clear that $f(x)$ matches $w_N(x)$. Thus, weighted automata can define counting functions.
On the contrary, a weight automaton $N$ with natural numbers can be simulated by the 1nfa $M$ that produces $w$ branches for each transition of $N$ with weight $w$. Moreover, the values of an initial weight function of $N$ must be simulated at the first move of $M$. Therefore, the number of computation paths of $M$ equals the weight of $N$.
See Section \ref{sec:discussion} for a relevant subject.

\subsection{Basic Relationships among 1F, 1\#, and 1Gap}\label{sec:basic-rel-among}

In Section \ref{sec:def-counting-func}, we have introduced five function classes, $\onef_{\nat}$, $\onef_{\integer}$, $\onesharp$, $\onegap$ and $\onegap_{\geq0}$, which consist of various families of counting and gap functions. Hereafter, we wish to discuss basic relationships, which are shown in Figure \ref{fig:class-separations}, among those function classes.
We remark that these basic relationships reflect similar relationships among $\fp$, $\sharpp$, and $\gapp$ (see, e.g., \cite{For98}).


\begin{lemma}\label{inclusion-1F-to-1sharp}
(1) $\onef_{\nat}\subsetneqq \onesharp \subseteq \onegap_{\geq0}$. (2) $\onef_{\integer} \nsubseteq \onegap_{\geq0}$ and $\onef_{\integer} \subseteq \onegap$.
\end{lemma}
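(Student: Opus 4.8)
The plan is to prove the four assertions with two reusable devices: a \emph{doubling construction} that converts the binary value emitted by a 1dft into a count of accepting paths of a 1nfa, and a \emph{rerouting/sign} argument that controls which paths are counted as accepting versus rejecting. For the two positive inclusions $\onef_{\nat}\subseteq\onesharp$ in (1) and $\onef_{\integer}\subseteq\onegap$ in (2), I would start from a polynomial-size family $\{T_n\}$ of 1dft's computing $\{(f^{(trans)}_n,D_n)\}$, so that on $x\in D_n$ the transducer writes $trans(f_n(x))$, namely a sign bit followed by the bits $w_1w_2\cdots w_\ell$ of $bin(|f_n(x)|)$. The target 1nfa $N_n$ simulates $T_n$ (costing only the polynomially many states of $T_n$) and, scanning these bits from the most significant one, stays ``active'' until it reaches a position $t$ with $w_t=1$; there it nondeterministically either remains active or \emph{forks} into a doubling subroutine that performs one binary branch for each of the $\ell-t$ remaining output bits and then accepts. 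Each fork yields exactly $2^{\ell-t}$ accepting leaves, so the accepting count is $\sum_{t:\,w_t=1}2^{\ell-t}=|f_n(x)|$, while the unique all-active path reaching $\lhd$ is made to halt improperly and contributes nothing. For $\onef_{\integer}\subseteq\onegap$ I would read the leading sign bit first, store it in the finite control, and route the doubling leaves to accepting states when the sign is positive and to rejecting states when negative (forcing the opposite side empty); then $\#N_n(x)-\#\overline{N}_n(x)=f_n(x)$ with the correct sign, and $f_n(x)=0$ gives both counts $0$.

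The inclusion $\onesharp\subseteq\onegap_{\geq0}$ is the easiest: given $\{M_n\}$ with $f_n(x)=\#M_n(x)$, I would build $N_n$ by simulating $M_n$ and rerouting every path that rejects or halts improperly in $M_n$ to a fresh nonhalting sink, so that such paths halt improperly in $N_n$. Then $\#\overline{N}_n(x)=0$ while $\#N_n(x)=\#M_n(x)$, whence the gap equals $\#M_n(x)\ge 0$ and lies in $\onegap_{\geq0}$ with only a polynomial state blow-up. Dually, $\onef_{\integer}\nsubseteq\onegap_{\geq0}$ follows by taking the constant family $f_n\equiv-1$, for which a trivial 1dft outputs $f^{(trans)}_n(x)=trans(-1)=01$ on every input: it belongs to $\onef_{\integer}$ yet is negative-valued, so it cannot coincide with any nonnegative function of $\onegap_{\geq0}$.

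For the strict separation $\onef_{\nat}\subsetneqq\onesharp$ I would exhibit one family in $\onesharp$ that no polynomial-size (indeed, no finite) 1dft family can compute. A convenient witness already forces unbounded state at $n=1$: since $A_1=\{(10)^{k}\mid k\ge1\}$, on inputs $(10)^{k}\#(10)\in D_1$ all entries equal $1$, so the match-count is $k$ and computing $f^{(trans)}_1$ amounts to the conversion $1^{k}\mapsto bin(k)$. Because a write-once, one-way transducer only appends to its output, the string $P(k)$ written before the endmarkers is a prefix of $P(k')$ for $k<k'$ and hence of $bin(k')$; as the binary encodings $bin(k')$ for large $k'$ share only the common prefix $1$, every $P(k)$ has length at most $1$, forcing the total output length to be bounded and contradicting $|bin(k)|\to\infty$. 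Thus this family is not in $\onef_{\nat}$, whereas it lies in $\onesharp$ (guess a matching index and accept), and the first paragraph supplies the inclusion, giving properness.

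The hard part will be the doubling construction. I must arrange the $\ell-t$ binary branchings to occur \emph{while} the single one-way head keeps advancing and $T_n$ may emit several, or zero, output bits per input step: the forked subtree has to branch by $2^{m}$ at a step that writes $m$ bits and by $1$ at a step that writes none, and the within-block remainder at the forking position must be tallied exactly. Checking that this produces precisely $2^{\ell-t}$ accepting leaves, with all bookkeeping carried by a constant-size gadget layered on top of the simulation of $T_n$ (so that $sc(N_n)$ stays polynomial), is the delicate step; the remaining claims are comparatively routine.
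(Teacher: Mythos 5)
Your proof is correct, and for three of the four assertions it is essentially the paper's own argument: the doubling construction that turns the bits of $1bin(|f_n(x)|)$ emitted by the transducer into $\sum_{t:w_t=1}2^{\ell-t}$ accepting paths is exactly the paper's proof of $\onef_{\nat}\subseteq\onesharp$ (and, with the sign bit steering leaves to accepting or rejecting states, of $\onef_{\integer}\subseteq\onegap$); your rerouting argument for $\onesharp\subseteq\onegap_{\geq0}$ and the negative-valued witness for $\onef_{\integer}\nsubseteq\onegap_{\geq0}$ also match the paper, and your worry about steps that emit $m$ bits is handled by a constant-size gadget since $m$ is bounded by the maximum output length of a single transition. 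Where you genuinely diverge is the strictness $\onef_{\nat}\neq\onesharp$: the paper proves the conditional statement that $\onef_{\nat}=\onesharp$ would imply $\oned=\onen$ and then invokes the known separation $\oned\neq\onen$ of Sakoda--Sipser/Kapoutsis, whereas you exhibit a concrete $\onesharp$-family (the matching-count function of the paper's own Example) whose $n=1$ member already requires the unary-to-binary conversion $1^k\mapsto bin(k)$, and you kill it with a prefix argument on write-once one-way output: the output produced on the common prefix $(10)^k$ must be a prefix of every $1bin(k')$ with $k'\geq k$, hence of length at most $2$ (not $1$ as you wrote, since $trans$ prepends a sign bit --- this is immaterial), while the suffix $\#(10)\lhd$ can only contribute a bounded number of further output symbols, contradicting $|bin(k)|\to\infty$. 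Your route is self-contained and in fact stronger (it shows non-computability by \emph{any} single 1dft, not merely by polynomial-size families), while the paper's route is shorter and reuses an established separation; both are valid.
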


All inclusion relations of this lemma are in fact \emph{strict}. However, we will delay the proofs of the separations of $\onesharp\neq\onegap_{\geq0}$ and $\onef_{\integer}\neq \onegap$ to Proposition \ref{separation-onef} in Section \ref{sec:implications} because they leverage separation results on $\onecequal$, $\co\onecequal$, $\oned$, and $\onep$.

\begin{proofsketchof}{Lemma \ref{inclusion-1F-to-1sharp}}
(1)
We first show the inclusion of $\onef_{\nat}\subseteq \onesharp$.
Given any family $\FF=\{(f_n,D_n)\}_{n\in\nat}$ of partial functions in $\onef_{\nat}$,
It suffices to construct a new 1nfa $H_n$, which produces exactly $f_n(x)$ accepting computation paths, namely, $f_n(x)=\#H_n(x)$. We remark that $M_n$ produces either $\lambda$ or $1bin(m)$ for a certain number $m\in\nat^{+}$ on its output tape.
Since the corresponding family $\{(f_n^{(trans)},D_n)\}_{n\in\nat}$ belongs to $\onef$, there is a family $\{M_n\}_{n\in\nat}$ of polynomial-size 1dft's computing it.
On input $x$, $H_n$ simulates $M_n$ on $x$ to obtain either $\lambda$ or $1bin(m)$ and then generates $m$ different computation paths when $\lambda$ is not an output.
Since $f_n(x)$ equals either $0$ or $m$, we force $H_n$ to have either $0$ or $m$ accepting computation paths.

We next show that $\onef_{\nat}\neq \onesharp$. It suffices to show that (*) $\onef_{\nat}=\onesharp$ implies $\oned=\onen$. Since $\oned\neq \onen$ \cite{SS78}, we immediately obtain $\onef_{\nat}\neq \onesharp$ from (*). To prove (*), we assume on the contrary that $\onef_{\nat}=\onesharp$.
Let $\LL=\{(L_n^{(+)},L_n^{(-)})\}_{n\in\nat}$ be any family in $\onen$ and take a family $\MM=\{M_n\}_{n\in\nat}$ of 1nfa's that solves $\LL$. Let us define $D_n=L_n^{(+)}\cup L_n^{(-)}$ and  $f_n(x)=\#M_n(x)$ for any $x$.
The function family $\FF=\{(f_n,D_n)\}_{n\in\nat}$ obviously belongs to $\onesharp$. Since $\onef_{\nat}=\onesharp$, $\FF$ belongs to $\onef_{\nat}$.
In other words,  $f_n(x)\geq1$ holds for all $x\in L_n^{(+)}$ and   $f_n(x)=0$ holds for all $x\in L_n^{(-)}$. We take a family $\NN=\{N_n\}_{n\in\nat}$ of polynomial-size 1dft's that computes $\{(f_n^{(trans)},D_n)\}_{n\in\nat}$. Consider the following 1dfa $K_n$. On input $x$, simulate $N_n$. Since $trans(0)=\lambda$, if $N_n$ writes at least one non-blank symbol on this output tape, then we accept; otherwise, we reject. This implies that $\LL$ is in $\oned$. We thus conclude that $\onen\subseteq \oned$.

To show that $\onesharp\subseteq \onegap_{\geq0}$, let $\FF=\{(f_n,D_n)\}_{n\in\nat}$ denote any family in $\onesharp$. Take a family  $\MM=\{M_n\}_{n\in\nat}$ of polynomial-size 1nfa's that witnesses ``$\FF\in\onesharp$''. Let us consider the following 1nfa's $N_n$. On input $x$, simulate $M_n$ on $x$.
On each computation path, if $M_n$ enters an accepting state, then $N_n$ does the same. Otherwise, $N_n$ branches into two computation paths and accepts on one path and rejects on the other. This implies that $\FF$ belongs to $\in \onegap_{\geq0}$.

(2) Since $\onef_{\integer}$ contains families of partial functions handling negative values, $\onef_{\integer}$ is not included in $\onegap_{\geq0}$.
For the inclusion $\onef_{\integer}\subseteq \onegap$, its proof is in essence similar to that of (1); however, if $f_n(x)\geq0$, then we produce exactly $f_n(x)$ more accepting computation paths than rejecting ones; otherwise,   exactly $f_n(x)$ more rejecting computation paths than accepting ones.
\end{proofsketchof}

Given two function classes $\UU$ and $\VV$, the special notation $\UU-\VV$ denotes the collection of families $\{(f_n,D_n)\}_{n\in\nat}$ such that there are two families $\GG=\{(g_n,E_n)\}_{n\in\nat}\in \UU$ and $\GG'=\{(g'_n,E'_n)\}_{n\in\nat}\in \VV$ for which $D_n=E_n\cap E'_n$ and $f_n = g_n - g'_n$ (i.e., $f_n(x)=g_n(x)-g'_n(x)$ for all $x\in D_n$) for any index  $n\in\nat$.
With this notation, $\onegap$ is succinctly expressed in terms of $\onef_{\nat}$, $\onesharp$, and $\onegap$ as follows.

\begin{lemma}\label{one-gap-character}
(1) $\onegap = \onesharp - \onesharp = \onesharp - \onef_{\nat} = \onef_{\nat} - \onesharp$. (2) $\onegap = \onegap - \onegap$.
\end{lemma}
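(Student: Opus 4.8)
The plan is to reduce both parts to a single two-track 1nfa construction together with a complement-counting trick, after recording the trivial inclusions coming from $\onef_{\nat}\subseteq\onesharp\subseteq\onegap$ (Lemma \ref{inclusion-1F-to-1sharp}). For the easy directions of (1): $\onegap\subseteq\onesharp-\onesharp$ is immediate by taking $g_n=\#M_n$ and $g'_n=\#\overline{M}_n$ (both in $\onesharp$, the latter by swapping accepting and rejecting states, with common domain $D_n$); and since $\onef_{\nat}\subseteq\onesharp$, both $\onesharp-\onef_{\nat}$ and $\onef_{\nat}-\onesharp$ sit inside $\onesharp-\onesharp$. Conversely, for $\onesharp-\onesharp\subseteq\onegap$, given $g_n=\#M_n$ and $g'_n=\#M'_n$ I build one 1nfa $N_n$ that first makes a single nondeterministic choice between a \emph{left} track simulating $M_n$ (routing $M_n$'s accepting paths to acceptance and every other path to an improper halt) and a \emph{right} track simulating $M'_n$ (routing $M'_n$'s accepting paths to rejection and the rest to an improper halt). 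Then $\#N_n=\#M_n$ and $\#\overline{N}_n=\#M'_n$, so the gap of $N_n$ equals $g_n-g'_n$, with $sc(N_n)=O(sc(M_n)+sc(M'_n))$.

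The main obstacle is the remaining inclusion $\onegap\subseteq\onesharp-\onef_{\nat}$ (and its mirror $\onegap\subseteq\onef_{\nat}-\onesharp$): extracting a deterministically computable subtrahend. The only natural $\onef_{\nat}$-candidate is a total path count, and here lies a parity trap: forcing a 1nfa into uniform $c$-ary branching with all paths halting properly makes its total equal $c^{|x|+2}$, which forces an \emph{even} gap, so a single such automaton cannot preserve an odd gap. I plan to sidestep this by never normalizing $M_n$ itself. Fix $c=2^{k_n}$ a power of two with $c\geq sc(M_n)$ (hence $c\geq$ the branching of $M_n$), so that $c^{|x|+2}\in\onef_{\nat}$: a 1dft emits the leading $11$ on $\rhd$ and then $0^{k_n}$ per scanned symbol, producing $trans(2^{k_n(|x|+2)})$ with constant state complexity.

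I first delay $M_n$'s halting so that it always scans the whole input, then construct a \emph{separate} 1nfa $R_n$ with uniform branching $c$: at each step it assigns $M_n$'s $d$ genuine moves to a simulation track and the surplus $c-d$ choices to an always-accepting overflow track, and it rejects exactly on those paths that remain in the simulation track and end in an $M_n$-rejecting state. Then $R_n$ has $c^{|x|+2}$ total paths and $\#\overline{R}_n=\#\overline{M}_n$, so $\#R_n=c^{|x|+2}-\#\overline{M}_n$. Now set $g_n=\#M_n+\#R_n$ and $g'_n=c^{|x|+2}$; then $g_n\in\onesharp$ as a sum of two counting functions (realized by an initial branch), $g'_n\in\onef_{\nat}$, and $g_n-g'_n=\#M_n-\#\overline{M}_n=f_n$ on $D_n$. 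Crucially, because $g_n$ is built as a sum rather than as twice one count, no parity constraint is imposed, which is exactly how the trap is avoided. The mirror inclusion $\onegap\subseteq\onef_{\nat}-\onesharp$ follows from $f_n=c^{|x|+2}-\bigl((c^{|x|+2}-\#M_n)+\#\overline{M}_n\bigr)$, using the accept/reject-swapped version of $R_n$, whose accepting count is $c^{|x|+2}-\#M_n$.

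Finally, for part (2): $\onegap\subseteq\onegap-\onegap$ holds because $\onesharp\subseteq\onegap$ turns the identity $\onegap=\onesharp-\onesharp$ from (1) into a subtraction of two $\onegap$-families. For $\onegap-\onegap\subseteq\onegap$ I reuse the two-track construction: given $g_n=\#M_n-\#\overline{M}_n$ and $g'_n=\#M'_n-\#\overline{M'}_n$, a single 1nfa $N_n$ with a left track simulating $M_n$ faithfully (accept$\to$accept, reject$\to$reject) and a right track simulating $M'_n$ with accepting and rejecting outcomes interchanged satisfies $\#N_n=\#M_n+\#\overline{M'}_n$ and $\#\overline{N}_n=\#\overline{M}_n+\#M'_n$, whose gap is precisely $g_n-g'_n$; here both counts are retained, so no complement-counting or parity bookkeeping is needed. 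Throughout, domains compose as $E_n\cap E'_n$, matching the definition of the $-$ operation, and all state complexities remain polynomial.
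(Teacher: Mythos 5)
Your proof is correct, and for the crux of part (1) --- realizing the gap as (a counting function) minus (the total path count $c^{|x|+2}$) --- it is essentially the paper's argument: the paper also normalizes one of the two automata to uniform $c$-ary branching and rewrites $\#M_n-\#N_n$ as $(\#M_n+\#\overline{N}_n)-c^{|x|+2}$. You differ in a few secondary places, mostly to your advantage. First, you insist that $c$ be a power of two so that $x\mapsto trans(c^{|x|+2})$ is emitted by a constant-size 1dft; the paper silently asserts that $c^{|x|+2}$ lies in $\onef_{\nat}$, and for general $c$ a one-way transducer cannot produce the binary expansion of $c^{|x|+2}$ left to right, so your padding of $c$ up to a power of two is a detail genuinely worth recording. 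Second, for $\onesharp-\onesharp\subseteq\onegap$ you discard unwanted paths by letting them halt improperly, whereas the paper neutralizes them by splitting each into one accepting and one rejecting path; both are legitimate under the paper's definitions (improperly halting paths are counted by neither $\#M$ nor $\#\overline{M}$), and the splitting trick has the minor virtue of keeping all paths proper. Third, for part (2) you give a direct two-track automaton whose gap is $(\#M_n-\#\overline{M}_n)-(\#M'_n-\#\overline{M'}_n)$, while the paper instead decomposes each gap function into a difference of two $\onesharp$-families via part (1), regroups algebraically, and invokes closure of $\onesharp$ under addition; your route is more elementary and self-contained, the paper's makes the reduction to part (1) explicit. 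The parity discussion is slightly overstated (the parity of the gap of a normalized automaton is that of $c^{|x|+2}$, not always even), but this does not affect your construction, which correctly avoids normalizing the minuend automaton.
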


\begin{proofsketch}
(1) We want to show that $\onegap\subseteq \onesharp-\onesharp$.
We begin with any family $\FF=\{(f_n,D_n)\}_{n\in\nat}$ in $\onegap$. Take a family $\{M_n\}_{n\in\nat}$ of polynomial-size 1nfa's witnessing the fact of ``$\FF\in\onegap$''. We define $N_n$ as $\overline{M}_n$ (which is obtained from $M_n$ by flipping accepting and rejecting states). Define $g_n(x)=\#M_n(x)$ and $h_n(x) =\#N_n(x)$ so that $f_n(x) = g_n(x) - h_n(x)$ holds for all strings $x\in D_n$.
Since $\GG=\{(g_n,D_n)\}_{n\in\nat}$ and $\HH=\{(h_n,D_n)\}_{n\in\nat}$ are both in $\onesharp$,
we conclude that $\FF\in\onesharp-\onesharp$.

To show that $\onesharp-\onesharp\subseteq \onesharp-\onef_{\nat}$, let us take two families $\GG=\{(g_n,D_n)\}_{n\in\nat}$ and $\GG'=\{(g'_n,D_n)\}_{n\in\nat}$ in $\onesharp$.
We define the partial function $h_n$ by setting $h_n=g_n-g'_n$. Note that $\HH = \{(h_n,D_n)\}_{n\in\nat}$ is in $\onesharp - \onesharp$.
There are two families $\{M_n\}_{n\in\nat}$ and $\{N_n\}_{n\in\nat}$ satisfying both $g_n(x)=\#M_n(x)$ and $g'_n(x)=\#N_n(x)$ for all $n\in\nat$ and all $x\in D_n$.
By Lemma \ref{branching-normal-form}, we can assume that $N_n$ makes exactly $c$ nondeterministic choices at every step until reading $\lhd$, where $c$ is an appropriate positive constant.
This instantly implies that $\#N_n(x)+\#\overline{N}_n(x)= c^{|\rhd x\lhd|}$ ($=c^{|x|+2}$).
By setting $f_n(x)= \#M_n(x)+\#\overline{N}_n(x)$, we can easily obtain $\{(f_n,D_n)\}_{n\in\nat}\in\onesharp$ (see also Lemma \ref{closure-onesharp}).
It is not difficult to verify that $h_n(x) = f_n(x) - c^{|x|+2}$, and thus
$\HH$ belongs to $\onesharp-\onef_{\nat}$.

With a similar idea, we can show that $\onesharp-\onesharp\subseteq \onef_{\nat}-\onesharp$.
Since $\onef_{\nat}\subseteq \onesharp$ by Lemma \ref{inclusion-1F-to-1sharp}(1), it follows that $\onesharp-\onef_{\nat} \subseteq \onesharp - \onesharp$ and $\onef_{\nat}-\onesharp \subseteq \onesharp - \onesharp$.

Finally, we want to verify that $\onesharp-\onesharp \subseteq \onegap$.
Choose two families $\{M_n\}_{n\in\nat}$ and $\{N_n\}_{n\in\nat}$ satisfying $f_n(x) = \#M_n(x)-\#N_n(x)$ for all $n\in\nat$ and all $x$ in a certain domain $D_n$. We define a new 1nfa $H_n$ in the following way. On input $x$, branch out into two computation paths. On one path, simulate $M_n$; on the other path, simulate $N_n$. If $M_n$ accepts, then accept the input. By contrast, if $M_n$ rejects $x$, then further branch out into two distinct configurations, one of which leads to an accepting state and the other leads to a rejecting state. A similar treatment is applied to $N_n$ but ``accept'' and ``reject'' should be exchanged. As a consequence, we can conclude that $\{(f_n,D_n)\}_{n\in\nat}$ belongs to $\onegap$.

(2) It is obvious that $\onegap\subseteq \onegap - \onegap$. To show the converse inclusion, let us consider a family $\HH=\{(h_n,D_n)\}_{n\in\nat}$ in $\onegap - \onegap$. There are two families $\{(f_n,D_n)\}_{n\in\nat}$ and $\{(g_n,D_n)\}_{n\in\nat}$ in $\onegap$ satisfying $h_n=f_n-g_n$ for all $n\in\nat$. From (1), these two families are further characterized by four families $\{(f_n^{(i)},D_n)\}_{n\in\nat}$ and $\{(g_n^{(i)},D_n)\}_{n\in\nat}$ for $i\in\{1,2\}$ in $\onesharp$ satisfying that $f_n=f_n^{(1)}-f_n^{(2)}$ and $g_n=g_n^{(1)}-g_n^{(2)}$ for all $n\in\nat$. We then obtain $f_n-g_n=(f_n^{(1)}-f_n^{(2)}) - (g_n^{(1)}-g_n^{(2)}) = (f_n^{(1)}+g_n^{(2)}) - (f_n^{(2)}+g_n^{(1)})$. It thus suffices to define $h_n^{(1)}= f_n^{(1)}+g_n^{(2)}$ and $h_n^{(2)}=f_n^{(2)}+g_n^{(1)}$.
By combining two $\onesharp$-function families, it is possible to show that both $\{(h_n^{(1)},D_n)\}_{n\in\nat}$ and $\{(h_n^{(2)},D_n)\}_{n\in\nat}$ belong to $\onesharp$ (see also Lemma \ref{closure-onesharp}). Since $h_n=h_n^{(1)}-h_n^{(2)}$ for all $n\in\nat$, we conclude by (1) that $\HH$ is indeed in $\onegap$.
\end{proofsketch}


We next examine closure properties of $\onesharp$ associated with \emph{homomorphisms}. Given two alphabets $\Sigma$ and $\Gamma$, a homomorphism $h:\Sigma\to\Gamma^*$ is said to be \emph{non-erasing} if $h(\sigma)\neq\lambda$ for all $\sigma\in\Sigma$, and $h$ is called  \emph{prefix-free} if there is no pair $\sigma,\tau\in\Sigma$ such that $h(\sigma)$ is a proper prefix of $h(\tau)$. As usual, $h$ is naturally expanded to a map from $\Sigma^*$ to $\Gamma^*$. The \emph{inverse} of $h$, denoted $h^{-1}$, is defined as $h^{-1}(y) = \{x\in\Sigma^*\mid h(x)=y\}$ for any $y\in\Gamma^*$.
It is important to note that the inverse $h^{-1}$ of a non-erasing prefix-free homomorphism $h$ becomes a partial function. Here, we wish to apply homomorphisms and inverse homomorphisms to partial functions.
We write $K$ for either the set of all homomorphisms or the set of all inverses of non-erasing prefix-free homomorphisms. We say that a class of families of partial functions is \emph{closed under} $K$ if, for any element $k$ in $K$ and for any $\FF=\{(f_n,D_n)\}_{n\in\nat}$ in this class, the family $\{(l_n,C_n)\}_{n\in\nat}$ also belongs to the class, where  $C_n=\{x\in\Sigma^*\mid k(x)\in D_n\}$ and $l_n(x) = f_n(k(x))$ for all $x\in C_n$.

\begin{lemma}\label{homomorphism}
$\onesharp$ is closed under both homomorphism and inverse of non-erasing prefix-free homomorphism.
\end{lemma}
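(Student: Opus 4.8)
The plan is to reduce both closure properties to explicit constructions of new polynomial-size $1$nfa families whose accepting-path counts realize the transformed functions, verifying each count identity by an induction on the number of input symbols consumed. Throughout, I fix $\FF=\{(f_n,D_n)\}_{n\in\nat}$ in $\onesharp$ together with a witnessing family $\{M_n\}_{n\in\nat}$ of polynomial-size $1$nfa's, so that $f_n(x)=\#M_n(x)$ on $D_n$; by Lemma~\ref{branching-normal-form} I may, when convenient, assume each $M_n$ is in branching normal form so that path counts are easy to track.

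For closure under a homomorphism $k:\Sigma\to\Gamma^*$ (where $\{M_n\}$ runs over $\Gamma$ and I want $\#M'_n(w)=\#M_n(k(w))$ for $w\in C_n=\{w\mid k(w)\in D_n\}$), I would build $M'_n$ over $\Sigma$ so that reading a single symbol $\sigma$ simulates the \emph{entire} run of $M_n$ over the fixed block $k(\sigma)$. For each $\sigma$ define the block multiplicity $T_\sigma(q,q')$ to be the number of $M_n$-computation paths that read $k(\sigma)$ and go from $q$ to $q'$; since $|k(\sigma)|$ is a constant, $T_\sigma(q,q')$ is bounded by a fixed power of $sc(M_n)$, hence by a polynomial in $n$. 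The one real difficulty is that a single nondeterministic step of a $1$nfa reaches each target at most once (its transition image is a \emph{set}), so it cannot directly produce a multiplicity larger than one. I resolve this by attaching a bounded \emph{copy label} to each simulated state: from any copy of $q$, on reading $\sigma$, $M'_n$ branches to all states $(q',j)$ with $j\in[T_\sigma(q,q')]$. Then exactly $T_\sigma(q,q')$ paths join copies of $q$ to copies of $q'$, the label being irrelevant to later steps, and an induction on the number of symbols read shows that the number of $M'_n$-paths reaching copies of $q$ after $w_1\cdots w_i$ equals the number of $M_n$-paths reaching $q$ after $k(w_1)\cdots k(w_i)$ --- precisely $M_n$'s own path-count recurrence, with the endmarkers handled by the blocks $\rhd$ and $\lhd$ and with erasing symbols handled by the identity multiplicity $T_\sigma=I$. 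The state count is $sc(M_n)$ times a polynomial in $n$, so the new family is polynomial-size and $\{(l_n,C_n)\}\in\onesharp$.

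For closure under the inverse of a non-erasing prefix-free homomorphism $h:\Sigma\to\Gamma^*$, the original family runs over $\Sigma$ and I want a family over $\Gamma$ whose counts equal $\sum_{x:\,h(x)=y}\#M_n(x)$. I would let $M'_n$ read $y$ left to right while \emph{decoding} it into blocks: prefix-freeness makes the factorization of $y$ into codewords unique and recoverable by a greedy deterministic scan that only remembers the current partial codeword (a state drawn from a constant-size set), and non-erasure guarantees that each block consumes at least one symbol, so the scan reads all of $y$ in at most $|y|$ completed blocks. While inside a block $M'_n$ moves deterministically, contributing no branching; upon completing a codeword $h(\sigma)$ it performs exactly the nondeterministic transition of $M_n$ on $\sigma$ (branching over every symbol carrying that codeword if $h$ is not injective). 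Hence $\#M'_n(y)$ equals $\sum_{x:\,h(x)=y}\#M_n(x)$, and equals $0$ when $y$ fails to decode; the state count is again $sc(M_n)$ times a constant, so the family is polynomial-size.

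I expect the main obstacle to be the multiplicity bookkeeping in the homomorphism case: because $1$nfa transitions are sets, faithfully compressing a multi-symbol block into a single head move without distorting path counts forces the copy-label device together with the composition induction, and one must check that labels never interact across successive blocks. The inverse direction is delicate for the complementary reason --- the decoding must add no spurious multiplicity --- and this is exactly where the two hypotheses are used: prefix-freeness forces an unambiguous, deterministic parse (so the parser itself contributes multiplicity one), while non-erasure keeps the set of preimages finite and the scan progressing. Relaxing either hypothesis breaks the count identity, since an erasing $h$ admits infinitely many preimages of a single $y$ and an ambiguously decodable code multiplies the path counts.
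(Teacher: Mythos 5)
Your proposal is correct and follows essentially the same route as the paper's proof: a block-wise simulation of $M_n$ on each image $k(\sigma)$ for closure under homomorphism, and a greedy deterministic decoding of the prefix-free code (applying $M_n$'s nondeterministic step only when a codeword completes) for the inverse case. Your copy-label device for realizing block multiplicities $T_\sigma(q,q')>1$ by set-valued $1$nfa transitions merely makes explicit a path-counting detail that the paper's one-sentence argument for the homomorphism case leaves implicit, so no further changes are needed.
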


\begin{proof}
We first examine the case of homomorphisms. Let us take any homomorphism $k:\Gamma\to\Sigma^*$ and expand it to a map from $\Gamma^*$ to $\Sigma^*$ in a standard way.
For any $\FF=\{(f_n,D_n)\}_{n\in\nat}$ over $\Gamma$, we then define $C_n=\{x\in\Sigma^*\mid k(x)\in D_n\}$ and $l_n(x) = f_n(k(x))$ for all numbers $n\in\nat$ and all strings $x\in C_n$. To show that $\{(l_n,C_n)\}_{n\in\nat}\in \onesharp$, it suffices to simulate an underlying 1nfa witnessing ``$\FF\in\onesharp$'' after executing $k$. Since $k$ does not depend on $n$, we can embed the behavior of $k$ into inner states of the desired simulator.

We next examine the case of inverses of non-erasing prefix-free homomorphisms.
Let us consider a family $\FF=\{(f_n,D_n)\}_{n\in\nat}$ of partial functions over alphabet, say, $\Sigma$ in $\onesharp$ and a non-erasing prefix-free homomorphism $h:\Sigma\to\Gamma^*$. As usual, we naturally expand $h$ to a map from $\Sigma^*$ to $\Gamma^*$. Since $h^{-1}$ becomes a function, we can  define $E_n=\{y\in\Gamma^* \mid h^{-1}(y)\in D_n\}$, which equals $\{h(x)\mid x\in D_n\}$. We further set $g_n(y) = f_n(h^{-1}(y))$ for all $y\in E_n$. For the obtained family  $\GG=\{(g_n,E_n)\}_{n\in\nat}$, we wish to verify that $\GG$ also belongs to $\onesharp$.

For the above purpose, let $\{M_n\}_{n\in\nat}$ denote a family of polynomial-size 1nfa's ``witnessing'' the membership of $\FF$ to $\onesharp$.  By modifying each 1nfa $M_n$, we introduce another 1nfa $N_n$ as follows. Let $H_{\Sigma}=\{h(\sigma)\mid \sigma\in\Sigma\}$. On input $y\in\Gamma^*$, $N_n$ moves its tape head over $y$ from left to right and identifies a prefix, say, $z$ of $y$ that appears in $H_{\Sigma}$.  Since $|z|$ is constantly bounded, it is possible to remember $z$ using $N_n$'s inner states.
Since $h$ is non-erasing and prefix-free, $z$ is not empty and there exists a unique symbol $\sigma\in\Sigma$ satisfying $h(\sigma)=z$.
The machine $N_n$ then runs $M_n$ on $\sigma$ to obtain its outcome.
We repeat this procedure on the rest of $y$ until $N_n$ completes the simulation of $M_n$'s computation on the entire string $h^{-1}(y)$. In the end, $N_n$ produces $g_n(y)$. This demonstrates that $\GG$ is indeed in $\onesharp$.
\end{proof}


We further seek out closure properties under various functional operations. Such properties were discussed by  Ogiwara and Hemachandra \cite{OH93} in the polynomial-time setting for $\sharpp$ and $\gapp$.

Let $\circ$ denote any binary functional operation. Given two partial functions $f$ and $g$ from $\Sigma^*$ to $\nat$ (or $\integer$) for a fixed alphabet $\Sigma$, we define new partial function $f\circ g$ by setting  $(f\circ g)(x) = f(x)\circ g(x)$ for all valid strings $x$ in $\Sigma^*$.
Let $\UU$ denote an arbitrary collection of families of partial functions. We say that $\UU$ is \emph{closed under} $\circ$ if, for any two families $\FF=\{(f_n,D_n)\}_{n\in\nat}$ and $\GG=\{(g_n,E_n)\}_{n\in\nat}$ in $\UU$, another family $\{(f_n\circ g_n,D_n\cap E_n)\}_{n\in\nat}$ also belongs to $\UU$.
As long as there is no confusion, we express the family $\{(f_n\circ g_n,D_n\cap E_n)\}_{n\in\nat}$ as $\FF\circ \GG$ for brevity.

\begin{lemma}\label{closure-onesharp}
The function classes $\onesharp$ and $\onegap$ are closed under addition and multiplication.
\end{lemma}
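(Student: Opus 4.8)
The plan is to establish the closure for $\onesharp$ directly by explicit 1nfa constructions, and then to reduce the two $\onegap$ cases to the $\onesharp$ cases via the characterization $\onegap = \onesharp - \onesharp$ from Lemma \ref{one-gap-character} together with elementary algebraic identities. Throughout, given two families $\FF = \{(f_n,D_n)\}_{n\in\nat}$ and $\GG = \{(g_n,E_n)\}_{n\in\nat}$ in $\onesharp$, witnessed respectively by polynomial-size 1nfa families $\{M_n\}_{n\in\nat}$ and $\{N_n\}_{n\in\nat}$ with $f_n(x) = \#M_n(x)$ and $g_n(x) = \#N_n(x)$, the target family is $\FF\circ\GG = \{(f_n\circ g_n,\, D_n\cap E_n)\}_{n\in\nat}$, so it suffices to build, for each $n$, a polynomial-size 1nfa $H_n$ with $\#H_n(x) = (f_n\circ g_n)(x)$ for every $x\in D_n\cap E_n$; the values outside the domain are irrelevant.

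For addition, I would let $H_n$ make a single nondeterministic choice upon reading the left endmarker $\rhd$, dispatching into one of two disjoint copies: on the first it simulates $M_n$ on the whole input and on the second it simulates $N_n$. Since the two copies share no inner state, every accepting path of $H_n$ belongs to exactly one of them, whence $\#H_n(x) = \#M_n(x) + \#N_n(x) = f_n(x) + g_n(x)$ on $D_n\cap E_n$, while $sc(H_n)\le sc(M_n) + sc(N_n) + O(1)$ remains polynomially bounded. For multiplication, I would first invoke Lemma \ref{branching-normal-form} to assume both $M_n$ and $N_n$ are in branching normal form, so that on any input $x$ every computation path reads all of $\rhd x\lhd$ and halts properly; this preserves the accepting-path counts $\#M_n(x)$ and $\#N_n(x)$. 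Then I would take $H_n$ to be the product automaton running $M_n$ and $N_n$ in lockstep, with inner states the pairs $(q,q')\in Q_{M_n}\times Q_{N_n}$, each reading step making the joint nondeterministic choice of one $M_n$-move together with one $N_n$-move, and with $H_n$ accepting precisely when both components end in accepting states. Because both automata consume exactly $|x|+2$ symbols, each path of $H_n$ corresponds bijectively to a pair (path of $M_n$, path of $N_n$), and the pair is accepting iff both constituents are; hence $\#H_n(x) = \#M_n(x)\cdot\#N_n(x) = f_n(x)\cdot g_n(x)$, with $sc(H_n)\le sc(M_n)\cdot sc(N_n) + O(1)$, a product of two polynomials and so still polynomial.

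For $\onegap$, I would apply Lemma \ref{one-gap-character}(1) to write any two given families as $f_n = g_n^{(1)} - g_n^{(2)}$ and $f'_n = h_n^{(1)} - h_n^{(2)}$ with all four component families in $\onesharp$. The identities $f_n + f'_n = (g_n^{(1)} + h_n^{(1)}) - (g_n^{(2)} + h_n^{(2)})$ and $f_n f'_n = (g_n^{(1)} h_n^{(1)} + g_n^{(2)} h_n^{(2)}) - (g_n^{(1)} h_n^{(2)} + g_n^{(2)} h_n^{(1)})$ exhibit the sum and the product as differences of $\onesharp$-functions, where the just-established closure of $\onesharp$ under addition and multiplication guarantees that each parenthesized term lies in $\onesharp$; Lemma \ref{one-gap-character}(1) then gives membership of both $f_n + f'_n$ and $f_n f'_n$ in $\onegap$.

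The one genuine subtlety, and the step I expect to require the most care, is making the path count exact for multiplication: a path that dies (an empty transition) or halts improperly would break the intended bijection between paths of $H_n$ and pairs of constituent paths, inflating or deflating the product. This is precisely why I pass to the branching normal form first, which forces every path of $M_n$ and $N_n$ to run to completion and halt in a genuine accepting or rejecting state. Everything else—the additive and multiplicative bounds on state complexity, and the bookkeeping that the construction computes the right value exactly on $D_n\cap E_n$ where both functions are defined—is then routine.
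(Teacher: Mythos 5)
Your proposal is correct and follows essentially the same route as the paper: disjoint-union branching for addition, a lockstep product automaton for multiplication, and reduction of the $\onegap$ cases to the $\onesharp$ cases via the decomposition of Lemma \ref{one-gap-character}(1) with the same algebraic identities. The only cosmetic difference is that you invoke the full branching normal form to justify the path bijection in the product construction, where the paper merely normalizes both machines to read all input symbols before halting (possible since 1nfa's make no $\lambda$-moves); both suffice.
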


\begin{proof}
(1) We first focus on the case of $\onesharp$. Given any two function families $\FF=\{(f_n,D_n)\}_{n\in\nat}$ and $\GG=\{(g_n,E_n)\}_{n\in\nat}$ in $\onesharp$, we take two families $\{M_n\}_{n\in\nat}$ and $\{N_n\}_{n\in\nat}$ of polynomial-size 1nfa's that respectively ``witness'' the memberships of $\FF$ and $\GG$ to $\onesharp$. We intend to prove that $\FF\circ \GG$ is also in $\onesharp$ for any operator $\circ\in\{+,*\}$. For simplicity, we assume that $M_n$ and $N_n$ halt just after reading $\lhd$.

For ``addition'', we define a new 1nfa $H_n$ from $M_n$ and $N_n$ as follows. On input $x$, $H_n$ first branches out into two computation paths and then simulates $M_n$ on one path and $N_n$ on the other path. Clearly, $H_n$ produces $\#M_n(x)+\#N_n(x)$ accepting computation paths. Hence, if we set $h_n(x)=\#H_n(x)$ for all valid inputs $x$ and set $\FF+\GG=\{(h_n,D_n\cap E_n)\}_{n\in\nat}$, then $\FF+\GG$ belongs to $\onesharp$. We remark that, by the definition of partial functions, we do not need to check whether the input $x$ belongs to $D_n\cap E_n$ or not.

For ``multiplication'', we define another 1nfa $K_n$ in the following fashion. On input $x$, $K_n$ simulates $M_n$ and $N_n$ simultaneously by keeping their own inner states together as single inner states. This is possible because there is no $\lambda$-move conducted by $M_n$ and $N_n$. After both $M_n$ and $N_n$ halt simultaneously, if $M_n$ enters an accepting state, then we set $K_n$'s outcome (i.e., accepting or rejecting) to be $N_n$ 's outcome. Otherwise, $K_n$ ignores $N_n$'s outcome and simply enters a rejecting state.  This implies that the value $\#K_n(x)$ matches $\#M_n(x)\cdot \#N_n(x)$, which equals $f_n(x)\cdot g_n(x)$. We therefore define $k_n(x)=\#K_n(x)$ for all valid strings $x$. It then follows that $k_n=f_n\cdot g_n$, and thus $\FF\cdot \GG =\{(k_n,D_n\cap E_n)\}_{n\in\nat}$ belongs to $\onesharp$.

(2) Next, we turn our attention to the case of $\onegap$. Let $\FF=\{(f_n,D_n)\}_{n\in\nat}$ and $\GG=\{(g_n,E_n)\}_{n\in\nat}$ denote any two families of partial functions in $\onegap$. Without loss of generality, we assume that $D_n=E_n$ for all $n\in\nat$ since, otherwise, we can take $D_n\cap E_n$ as a new domain. By Lemma \ref{one-gap-character}(1), there are four families $\{(f_n^{(i)},D_n)\}_{n\in\nat}$ and $\{(g_n^{(i)},D_n)\}_{n\in\nat}$ for any $i\in\{1,2\}$ in $\onesharp$ satisfying that $f_n=f_n^{(1)}-f_n^{(2)}$ and $g_n=g_n^{(1)}-g_n^{(2)}$ for all $n\in\nat$. We then define $k_n= f_n^{(1)}\cdot g_n^{(1)}+f_n^{(2)}\cdot g_n^{(2)}$ and $h_n = f_n^{(1)}\cdot g_n^{(2)}+f_n^{(2)}\cdot g_n^{(1)}$. It follows that $f_n\cdot g_n = (f_n^{(1)}-f_n^{(2)}) (g_n^{(1)}-g_n^{(2)}) = (f_n^{(1)}\cdot g_n^{(1)}+f_n^{(2)}\cdot g_n^{(2)}) -  (f_n^{(1)}\cdot g_n^{(2)}+f_n^{(2)}\cdot g_n^{(1)}) = k_n - h_n$. By (1), $\{(k_n,D_n)\}_{n\in\nat}$ and $\{(h_n,D_n)\}_{n\in\nat}$ both fall in $\onesharp$.
Since $f_n\cdot g_n=k_n-h_n$, the family $\FF\cdot \GG =\{(f_n\cdot g_n,D_n)\}_{n\in\nat}$ belongs to $\onesharp - \onesharp$. Lemma \ref{one-gap-character}(1) thus concludes that $\FF\cdot\GG$ belongs to  $\onegap$.
The case of ``addition'' is similar.
Since $f_n+g_n = (f_n^{(1)}+g_n^{(1)})-(f_n^{(2)}+g_n^{(2)})$, it suffices to set $k'_n = f_n^{(1)}+g_n^{(1)}$ and $h'_n=f_n^{(2)}+g_n^{(2)}$. The family $\FF+\GG=\{(f_n+g_n,D_n)\}_{n\in\nat}$ coincides with $\{(k'_n-h'_n,D_n)\}_{n\in\nat}$, and thus it falls in $\onesharp-\onesharp$, which equals $\onegap$ by Lemma \ref{one-gap-character}(1).
\end{proof}

\subsection{Definitions of Counting Complexity Classes}\label{sec:def-complexity-class}

Let us formally introduce nonuniform polynomial state complexity classes,  associated with the notion of ``counting'', which will be used in the rest of this work.
In particular, we wish to formulate five counting complexity classes $\oneu$, $\oneparity$, $\onecequal$, $\onesp$, and $\onep$ (except for $\oned$ and $\onen$) using function families in $\onesharp$ and $\onegap$.

\begin{definition}\label{def:class-by-count}
(1) We first define the parity class $\oneparity$ (pronounced ``one parity'') as the collection of all families $\{(L_n^{(+)},L_n^{(-)})\}_{n\in\nat}$ of promise problems such that there exists a function family $\{(f_n,D_n)\}_{n\in\nat}$ in $\onesharp$ satisfying that,
for any index $n\in\nat$, (i) $L_n^{(+)}\cup L_n^{(-)}\subseteq D_n$ and (ii)  $f_n(x)$ is odd for all $x\in L_n^{(+)}$, and $f_n(x)$ is even for all $x\in L_n^{(-)}$.

(2) In a similar way, the unambiguous class $\oneu$ is obtained by replacing Condition (ii) in (1) with the following condition : $f_n(x)=1$ for all $x\in L_n^{(+)}$ and $f_n(x)=0$ for all $x\in L_n^{(-)}$.
\end{definition}

The aforementioned definition of $\onen$ given in Section \ref{sec:state-complexity} is also rephrased in terms of function families in $\onesharp$ by simply demanding the following condition: $f_n(x)>0$ for all $x\in L_n^{(+)}$
and $f_n(x)=0$ for all $x\in L_n^{(-)}$.

\begin{definition}\label{def:class-by-gap}
(1) The exact counting class $\onecequal$ (pronounced ``one C equal'') is defined as the collection of all families $\{(L_n^{(+)}, L_n^{(-)})\}_{n\in\nat}$ of promise
problems such that there exists a function family
$\{(g_n,D_n)\}_{n\in\nat}$ in $\onegap$ satisfying the following condition: for any index $n\in\nat$, (i) $L_n^{(+)}\cup L_n^{(-)}\subseteq D_n$ and (ii) $g_n(x)=0$ for all $x\in L_n^{(+)}$, and $g_n(x)\neq 0$ for all $x\in L_n^{(-)}$.

(2) In contrast, the stoic probabilistic class $\onesp$ refers to the collection of all families $\{(L_n^{(+)},L_n^{(-)})\}_{n\in\nat}$ defined by replacing the above condition (ii) with the following one: $g_n(x)=1$ for all $x\in L_n^{(+)}$, and  $g_n(x)=0$ for all $x\in L_n^{(-)}$.

(3) Finally, the bounded-error probabilistic class $\onep$ is defined with the following condition: $g_n(x)>0$ for all $x\in L_n^{(+)}$ and $g_n(x)\leq 0$ for all $x\in L_n^{(-)}$.
\end{definition}

We remark that $\onep$ is originally defined in \cite{Kap09,Kap12} in terms of unbounded-error probabilistic finite automata; as stated in Lemma  \ref{onecequalC-character}, nonetheless, it can be rephrased in terms of nonuniform families of polynomial-size 1nfa's
as in Definition \ref{def:class-by-gap}(3).
The complexity classes $\onecequal$ and $\onesp$ are closely related to  $\co\onen$ and $\oneu$ because the latter two classes are obtained directly by replacing $\onegap$ in the above definitions of $\onecequal$ and $\onesp$  with  $\onesharp$, respectively.

\section{Relationships among Counting Complexity Classes}\label{sec:relationship}

Throughout this section, we intend to discuss relationships among various counting complexity classes $\oneu$, $\onen$, $\onesp$, $\oneparity$, $\onecequal$, and $\onep$, introduced in Section \ref{sec:def-complexity-class}. In particular, we pay our attention to the containment/separation relationships of these classes.
Our results are illustrated in Figure \ref{fig:class-separations}.
In the log-space setting, however, such relationships (among $\mathrm{UL}$, $\nl$, $\mathrm{SPL}$, $\parityl$, $\cequall$, and $\pl$) are not yet known except for trivial ones.

\subsection{Basic Closure Properties}\label{sec:basic-closure-prop}

We start with a brief discussion on basic closure properties of $\oneparity$, $\onesp$, and $\onecequal$. Let us first take a close look at the closure property under \emph{complementation}. The complexity class $\oned$ is obviously closed under complementation. It is also proven in \cite{Yam22a} that $\onep=\co\onep$. As shown below, this closure property also holds for $\oneparity$ and $\onesp$.
This situation sharply contrasts the situation that $\oneu$ and $\onen$ are not closed under complementation \cite{Kap12,Yam22b}.

\begin{lemma}\label{parity-complement}
$\oneparity=\co\oneparity$ and $\onesp=\co\onesp$.
\end{lemma}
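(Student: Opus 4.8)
The plan is to prove each equality by exhibiting, for a given witnessing function family, a new function family of the opposite parity (respectively, swapped acceptance behavior) that stays inside $\onesharp$ or $\onegap$ and whose domain covers the same promised instances. The essential fact I would exploit is that $\oneparity$ and $\onesp$ are defined through the numerical value of a counting/gap function rather than through an asymmetric notion of acceptance, so complementation amounts to a value transformation that I must implement at the level of finite automata without violating the polynomial state bound.

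For $\oneparity=\co\oneparity$, first I would take a family $\LL=\{(L_n^{(+)},L_n^{(-)})\}_{n\in\nat}$ in $\oneparity$ with witness $\{(f_n,D_n)\}_{n\in\nat}\in\onesharp$, so that $f_n(x)$ is odd on $L_n^{(+)}$ and even on $L_n^{(-)}$. To witness that $\co\LL=\{(L_n^{(-)},L_n^{(+)})\}_{n\in\nat}$ lies in $\oneparity$, I need a function $g_n\in\onesharp$ whose parity on $x$ is the flip of that of $f_n$; equivalently $g_n(x)\equiv f_n(x)+1\pmod 2$ on all promised instances. The clean way is to set $g_n=f_n+\mathbf{1}$, where $\mathbf{1}$ is the constant-$1$ function family (which lies in $\onef_{\nat}\subseteq\onesharp$ by Lemma \ref{inclusion-1F-to-1sharp}(1)). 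By the closure of $\onesharp$ under addition (Lemma \ref{closure-onesharp}), $\{(g_n,D_n)\}_{n\in\nat}\in\onesharp$, and $g_n(x)$ is even exactly when $f_n(x)$ is odd and odd exactly when $f_n(x)$ is even. Hence $g_n$ is odd on $L_n^{(-)}$ and even on $L_n^{(+)}$, witnessing $\co\LL\in\oneparity$; since complementation is an involution this gives the equality.

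For $\onesp=\co\onesp$, I would proceed analogously but with gap functions. Given $\LL\in\onesp$ via $\{(f_n,D_n)\}_{n\in\nat}\in\onegap$ with $f_n(x)=1$ on $L_n^{(+)}$ and $f_n(x)=0$ on $L_n^{(-)}$, I want a gap family witnessing $\co\LL\in\onesp$, namely one taking value $1$ on $L_n^{(-)}$ and $0$ on $L_n^{(+)}$. The natural choice is $g_n=\mathbf{1}-f_n$, so $g_n(x)=0$ on $L_n^{(+)}$ and $g_n(x)=1$ on $L_n^{(-)}$. Since $\onegap=\onesharp-\onesharp$ and is closed under subtraction (the constant $\mathbf{1}$ lies in $\onef_{\nat}\subseteq\onesharp$, and Lemma \ref{one-gap-character} together with the additive closure in Lemma \ref{closure-onesharp} shows $\onegap$ is closed under taking $\mathbf{1}-f_n$), the family $\{(g_n,D_n)\}_{n\in\nat}$ stays in $\onegap$ with the same domain, which certifies $\co\LL\in\onesp$.

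The main obstacle I anticipate is purely bookkeeping rather than conceptual: I must confirm that each value transformation is realizable within $\onesharp$ or $\onegap$ using only the closure lemmas already established, and that the domains $D_n$ are unchanged so that condition (1) (resp.\ (1$'$)), $L_n^{(+)}\cup L_n^{(-)}\subseteq D_n$, continues to hold for the complemented family. Since adding or subtracting the constant $\mathbf{1}$ neither shrinks the domain nor disturbs the polynomial state bound (the constant family is trivially computable by a fixed-size transducer, and the closure constructions of Lemma \ref{closure-onesharp} incur only a constant blow-up per index), these checks should go through smoothly; I would simply remark that $\mathbf{1}\in\onef_{\nat}$ and invoke Lemmas \ref{inclusion-1F-to-1sharp}, \ref{one-gap-character}, and \ref{closure-onesharp} to close the argument.
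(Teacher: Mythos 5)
Your proof is correct and follows essentially the same route as the paper: the paper realizes exactly the transformations $f_n\mapsto f_n+1$ (for $\oneparity$) and $f_n\mapsto 1-f_n$ (for $\onesp$) by directly modifying the underlying 1nfa's (adding one extra accepting path, and swapping accepting/rejecting states plus one extra accepting path, respectively), whereas you obtain the same functions by citing the constant family $\mathbf{1}\in\onef_{\nat}$ together with the closure lemmas. Both arguments are sound; the only difference is whether the machine surgery is done by hand or delegated to Lemmas \ref{inclusion-1F-to-1sharp}, \ref{one-gap-character}, and \ref{closure-onesharp}.
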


\begin{proof}
We first show that $\onesp=\co\onesp$. Take an arbitrary family $\LL=\{(L_n^{(+)},L_n^{(-)})\}_{n\in\nat}$ in $\onesp$ and choose a family $\FF=\{(f_n,D_n)\}_{n\in\nat}$ in $\onegap$ that ``witnesses'' the membership of  $\LL$ to $\onesp$. Moreover, we take a family $\{M_n\}_{n\in\nat}$ of polynomial-size 1nfa's that witnesses ``$\FF\in \onegap$''.
We modify each $M_n$ by swapping between the accepting states and the rejecting states of $M_n$ and by adding an extra accepting computation path  on every  input. The resulting  machine is denoted by $N_n$. It then follows that $\#N_n(x)=\#\overline{M}_n(x)+1$ and $\#\overline{N}_n(x)=\#M_n(x)$.
If we set $g_n(x)=\#N_n(x)-\#\overline{N}_n(x)$ for any $n\in\nat$ and $x\in D_n$, then the family $\GG=\{(g_n,D_n)\}_{n\in\nat}$ belongs to $\onegap$ and also makes $\co\LL$ fall in $\onesp$. Hence, this implies $\co\onesp\subseteq \onesp$, which is logically equivalent to $\onesp=\co\onesp$.

In a similar way, we obtain  $\oneparity = \co\oneparity$ by generating one extra accepting computation path on  each input given to an underlying 1nfa.
\end{proof}

It is known that $\onen$ is closed under intersection and union \cite{Yam22a} but not under complementation \cite{Kap12}. In comparison, we show the following closure properties of $\onecequal$ and $\co\onecequal$.

\begin{lemma}\label{cequal-closure}
$\onecequal$ is closed under intersection and $\co\onecequal$ is closed under union.
\end{lemma}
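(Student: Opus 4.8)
The plan is to prove both parts from a single device: the square of a gap function vanishes exactly when the gap itself vanishes. First I would establish the intersection closure for $\onecequal$ by hand, and then obtain the union closure for $\co\onecequal$ as a formal De Morgan consequence.

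For the intersection, let $\LL=\{(L_n^{(+)},L_n^{(-)})\}_{n\in\nat}$ and $\KK=\{(K_n^{(+)},K_n^{(-)})\}_{n\in\nat}$ be two families in $\onecequal$, and recall that $\LL\cap\KK$ has positive part $L_n^{(+)}\cap K_n^{(+)}$ and negative part $L_n^{(-)}\cup K_n^{(-)}$. I would fix witnessing gap function families $\{(f_n,D_n)\}_{n\in\nat}$ and $\{(g_n,E_n)\}_{n\in\nat}$ in $\onegap$ so that $f_n$ (resp.\ $g_n$) vanishes on $L_n^{(+)}$ (resp.\ $K_n^{(+)}$) and is nonzero on $L_n^{(-)}$ (resp.\ $K_n^{(-)}$). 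The candidate witness is the ``sum of squares'' $h_n = f_n\cdot f_n + g_n\cdot g_n$. On a positive instance $x\in L_n^{(+)}\cap K_n^{(+)}$ both $f_n(x)=0$ and $g_n(x)=0$, hence $h_n(x)=0$; on a negative instance $x\in L_n^{(-)}\cup K_n^{(-)}$ at least one of $f_n(x),g_n(x)$ is nonzero, so $h_n(x)=f_n(x)^2+g_n(x)^2>0$. Thus $h_n$ realizes the required $\onecequal$ pattern for $\LL\cap\KK$, and it remains only to check that the family $\{(h_n,D_n\cap E_n)\}_{n\in\nat}$ genuinely lies in $\onegap$.

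Membership in $\onegap$ follows from Lemma \ref{closure-onesharp}: since $\onegap$ is closed under multiplication, $f_n\cdot f_n$ and $g_n\cdot g_n$ are $\onegap$-functions, and since it is closed under addition, so is their sum. The one subtlety I expect to need care with is the domains, because these are families of \emph{partial} functions and a negative instance of $\LL\cap\KK$ may lie in $L_n^{(-)}$ while failing to lie in $E_n$ (or symmetrically). I would dispose of this by first making each witness total: the underlying 1nfa $M_n$ of $f_n$ defines the gap $\#M_n(x)-\#\overline{M}_n(x)$ for \emph{every} $x$, so we may harmlessly enlarge the domain of $f_n$ to all of $\Sigma^*$ without disturbing any $\onecequal$ condition, since those conditions constrain only the instances in $L_n^{(+)}\cup L_n^{(-)}\subseteq D_n$. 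With both witnesses total, $h_n$ is total and the sum-of-squares argument is valid at every negative instance regardless of which family it came from; in particular, an uncontrolled ``garbage'' value of the other function cannot cancel a nonzero square. This domain-alignment step is where I anticipate the only real care is required.

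Finally, the union closure for $\co\onecequal$ is a formal consequence of the intersection closure through the identity $\co(\LL'\cup\KK')=\co\LL'\cap\co\KK'$ for families of promise problems, which holds because complementation swaps the positive and negative parts, union takes the union of positive parts and the intersection of negative parts, and intersection does the reverse. Concretely, given $\LL',\KK'\in\co\onecequal$, we have $\co\LL',\co\KK'\in\onecequal$; the first part yields $\co\LL'\cap\co\KK'\in\onecequal$; and since $\co\LL'\cap\co\KK'=\co(\LL'\cup\KK')$, we conclude $\LL'\cup\KK'\in\co\onecequal$, as desired.
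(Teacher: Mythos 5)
Your proof is correct, but it takes a genuinely different route from the paper's. You work entirely at the level of gap functions: the witness for $\LL\cap\KK$ is the sum of squares $h_n=f_n\cdot f_n+g_n\cdot g_n$, whose membership in $\onegap$ you delegate to the closure of $\onegap$ under addition and multiplication (Lemma \ref{closure-onesharp}), and whose correctness is immediate because a nonzero square cannot be cancelled by another square. The paper instead argues directly with automata: it first invokes Lemma \ref{cequal-path-number} to normalize each witness so that negative instances yield strictly \emph{more} accepting than rejecting paths, and then builds a product 1nfa $P_n$ (accept if the two simulations agree, otherwise branch into one accepting and one rejecting path) whose gap works out to $\#M_n(x)\cdot\#N_n(x)-\#\overline{M}_n(x)\cdot\#\overline{N}_n(x)$; the one-sided normalization is what forces this product gap to be nonzero on negative instances. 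Your version buys modularity and robustness: it needs no normalization, it inherits its state-complexity bound from lemmas already proved (ultimately via Lemma \ref{one-gap-character}), and --- as you correctly flag --- the only delicate point is domain alignment, which your totalization of the witnesses handles cleanly; in particular your argument is insensitive to ``garbage'' values of the other witness on instances outside its promise, a case the paper's product construction must implicitly cover through Lemma \ref{cequal-path-number}. The paper's version buys an explicit, self-contained automaton with a transparent state count (essentially the product of the two state sets). Your derivation of the union closure of $\co\onecequal$ from the identity $\co(\LL'\cup\KK')=\co\LL'\cap\co\KK'$ is exactly the paper's concluding step.
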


\begin{proof}
Let $\LL,\KK\in\onecequal$ have the form  $\LL=\{(L_n^{(+)},L_n^{(-)})\}_{n\in\nat}$ and $\KK=\{(K_n^{(+)},K_n^{(-)})\}_{n\in\nat}$ over the same alphabet $\Sigma$.  Take two families $\{M_n\}_{n\in\nat}$ and $\{N_n\}_{n\in\nat}$ of polynomial-size 1nfa's satisfying the following condition. For any instance $x\in\Sigma^*$, $x\in L_n^{(+)}$ implies $\#M_n(x)=\#\overline{M}_n(x)$, $x\in L_n^{(-)}$ implies $\#M_n(x) \neq \#\overline{M}_n(x)$,  $x\in K_n^{(+)}$ implies $\#N_n(x)=\#\overline{N}_n(x)$, and $x\in K_n^{(-)}$ implies $\#N_n(x) \neq \#\overline{N}_n(x)$. We denote by $(Q_{M,acc,n},Q_{M,rej,n})$ (resp., $(Q_{N,acc,n},Q_{N,rej,n})$) the pair of the set of accepting states and that of rejecting states of $M_n$ (resp., $N_n$).

Lemma \ref{cequal-path-number} makes it  possible to replace $\#M_n(x)\neq \#\overline{M}_n(x)$ and $\#N_n(x)\neq \#\overline{N}_n(x)$ given in the above conditions by $\#M_n(x) > \#\overline{M}_n(x)$ and $\#N_n(x) > \#\overline{N}_n(x)$, respectively.
For the intersection $\LL\cap\KK$, we construct another family $\PP=\{P_n\}_{n\in\nat}$ of 1nfa's as follows. On input $x$, $P_n$ runs both $M_n$ and $N_n$ simultaneously by keeping inner states $q_1$ of $M_n$ and $q_2$ of $N_n$ as pairs $(q_1,q_2)$. If both $M_n$ and $N_n$ accept (resp., reject), then $P_n$ accepts (resp., rejects). Otherwise, $P_n$ further branches out into one accepting state and one rejecting state.
For simplicity, let $m(x)$ denote $\#M_n(x)\cdot \#\overline{N}_n(x)+ \#\overline{M}_n(x)\cdot \#N_n(x)$. It then follows that $\#P_n(x)$ equals $\#M_n(x)\cdot \#N_n(x)+m(x)$ and that $\#\overline{P}_n(x)$ equals $\#\overline{M}_n(x)\cdot \#\overline{N}_n(x)+m(x)$.
If $x\in L_n^{(+)}\cap K_n^{(+)}$, then we obtain $\#P_n(x)=\#\overline{P}_n(x)$. In contrast, if $x\in L_n^{(-)}\cup K_n^{(-)}$, then $\#P_n(x)>\#\overline{P}_n(x)$ follows. These facts together conclude that $\PP$ solves $\LL\cap \KK$.

Clearly, the union closure of $\co\onecequal$ comes directly from the intersection closure of $\onecequal$.
\end{proof}


As for the closure property under complementation, similarly to those of $\onen$ and $\oneu$, we will see in Corollary \ref{Cequal-complement} that $\onecequal$ does not satisfy this property.

\subsection{Complexity Class 1U}\label{sec:oneu-class}

We next discuss containments and separations of $\oneu$ in comparison with other counting complexity classes.
Between $\oneu$ and $\onesp$, we can claim that $\oneu\subseteq \onesp$ and $\co\oneu\subseteq \onesp$. It is further possible to prove that these inclusions are in fact strict.
For later convenience, we write $T_n$ to denote the set $\{x\# y\mid x,y\in\{0,1\}^{2^n}\}$, which represents pairs of exponentially-long strings for each index $n\in\nat$.

\begin{proposition}\label{oneu-vs-oneparity}
$\oneu\subsetneqq \onesp$.
\end{proposition}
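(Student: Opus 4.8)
The plan is to prove the inclusion $\oneu\subseteq\onesp$ directly from the definitions and then to extract strictness from the complementation behaviour of the two classes. For the inclusion, I would observe that the defining conditions of $\oneu$ and $\onesp$ are literally identical, namely $f_n(x)=1$ on every positive instance and $f_n(x)=0$ on every negative instance; the sole difference is that $\oneu$ demands a witness in $\onesharp$ whereas $\onesp$ permits a witness in $\onegap$. Since $\onesharp\subseteq\onegap_{\geq0}\subseteq\onegap$ by Lemma~\ref{inclusion-1F-to-1sharp}(1), any $\onesharp$-witness for a family $\LL\in\oneu$ is already a $\onegap$-witness meeting the $\onesp$-condition, so $\LL\in\onesp$. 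Hence $\oneu\subseteq\onesp$ with no extra work.

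For the strict separation, I would exploit Lemma~\ref{parity-complement}, which supplies $\onesp=\co\onesp$. Suppose, for contradiction, that $\oneu=\onesp$. Applying complementation to every member of the two (now equal) classes gives $\co\oneu=\co\onesp$, and Lemma~\ref{parity-complement} together with the assumption gives $\co\onesp=\onesp=\oneu$. Thus $\co\oneu=\oneu$, i.e.\ $\oneu$ would be closed under complementation. This contradicts the known fact that $\oneu$ is \emph{not} closed under complementation \cite{Kap12,Yam22b}. Therefore $\oneu\neq\onesp$, and combined with $\oneu\subseteq\onesp$ I conclude $\oneu\subsetneqq\onesp$.

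The genuinely hard content is the strictness, and in the argument above it is entirely absorbed into the cited non-closure of $\oneu$. If one prefers a self-contained separation within the present fixed-alphabet setting, the natural alternative is to produce an explicit separating family, and here the gadget $T_n=\{x\#y\mid x,y\in\{0,1\}^{2^n}\}$ is the tool of choice. One would pick a promise problem $\LL$ over $T_n$ whose complement $\co\LL$ lies in $\co\oneu\setminus\oneu$, verify $\co\LL\in\onesp$ by taking the gap function $1-f_n$ (which lies in $\onegap$ because $\onegap=\onesharp-\onesharp$ by Lemma~\ref{one-gap-character}(1) and the constant $1$ belongs to $\onesharp$), and then prove that no polynomial-size unambiguous $1$nfa computes $\co\LL$. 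This last lower bound — a fooling-set / counting argument forcing an exponential blow-up in the number of inner states on the exponentially long instances drawn from $T_n$ — is the main obstacle; once it is in place, $\co\LL\in\onesp\setminus\oneu$ and the strict inclusion follows, while every other step is routine bookkeeping.
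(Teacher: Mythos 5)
Your argument is correct, but it reaches the strict separation by a genuinely different route from the paper. For the inclusion $\oneu\subseteq\onesp$ you simply note that the two defining conditions coincide and that $\onesharp\subseteq\onegap_{\geq0}\subseteq\onegap$ by Lemma~\ref{inclusion-1F-to-1sharp}(1); the paper instead re-derives this by an explicit machine construction (simulate $M_n$ and, on each rejecting path, branch into one accepting and one rejecting path), which is just that lemma's proof specialized, so both are equally valid and yours is the more economical. For strictness, the paper builds a concrete witness family $\LL_{sp}$ with $L_n^{(+)}=\{x\# y\mid \#_0(x)=\#_0(y)+1\}$ and $L_n^{(-)}=\{x\# y\mid \#_0(x)=\#_0(y)\}$, places it in $\onesp$ via a gap-counting automaton, and excludes it from $\oneu$ by a cut-and-paste argument on the crossing states of a hypothetical polynomial-size unambiguous family. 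You instead argue abstractly: $\oneu=\onesp$ would give $\co\oneu=\co\onesp=\onesp=\oneu$ by Lemma~\ref{parity-complement}, contradicting the non-closure of $\oneu$ under complementation, which the paper records as known \cite{Kap12,Yam22b} and which also follows from Theorem~\ref{separate-co-oneu} (whose proof is independent of this proposition, so no circularity arises). Your route is shorter but outsources the combinatorial content to that cited fact, whereas the paper's route is self-contained and yields an explicit member of $\onesp\setminus\oneu$. Your appended sketch of a self-contained alternative via $T_n$ leaves the crucial fooling-set lower bound undone, but since your main argument does not depend on it, this does not affect correctness.
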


\begin{proof}
We first focus on the proof of $\oneu\subseteq \onesp$.
Let $\LL$ denote an arbitrary family $\{(L_n^{(+)},L_n^{(-)})\}_{n\in\nat}$ of promise problems in $\oneu$ and take a family $\MM=\{M_n\}_{n\in\nat}$ of polynomial-size 1nfa's solving $\LL$, where each $M_n$ is unambiguous (at least) on all valid inputs in $L_n^{(+)}\cup L_n^{(-)}$. From each $M_n$, we construct another 1nfa $N_n$ that behaves as follows.
On input $x$, start to simulate $M_n$. If $M_n$ enters an accepting state, then $N_n$ does the same. In contrast, whenever $M_n$ enters a rejecting state, $N_n$ branches out into two computation paths and it accepts on one path and rejects on the other. It then follows that $x\in L_n^{(+)}$ implies  $\#N_n(x)-\#\overline{N}_n(x)=1$ and $x\in L_n^{(-)}$ implies  $\#N_n(x)=\#\overline{N}_n(x)$. Hence, $\LL$ belongs to $\onesp$.

Next, we intend to show that $\oneu\neq \onesp$. We write $A$ to express the set $\{x\# y\mid x,y\in\{0,1\}^*\}$. The strings $x$ and $y$ in $x\# y$ in $A$ are respectively called the \emph{first block} and the \emph{second block} of $x\# y$.
For each index $n\in\nat$, we further define $L_n^{(+)}=\{x\# y\in A\mid \#_0(x)=\#_0(y)+1\}$ and $L_n^{(-)}=\{x\# y\in A\mid \#_0(x)=\#_0(y)\}$.
We then write $\LL_{sp}$ for the family $\{(L_n^{(+)},L_n^{(-)})\}_{n\in\nat}$ and claim that $\LL_{sp}$ belongs to $\onesp$. For this purpose, let us consider the following 1nfa $M_n$. On input $w$ of the form $x\# y$ in $A$, $M_n$ repeats the following procedure whenever reading a fresh input symbol until the right endmarker $\lhd$ is reached.  While reading the first block $x$ of $w$, if the currently accessed tape symbol $\sigma$ is $0$, then $M_n$ branches out into three computation paths.
On one path, $M_n$ immediately rejects, and on the two other paths, $M_n$ immediately accepts. When $\sigma$ is $1$, on the contrary, we just skip to the next input symbol.
The machine $M_n$ also skips $\#$ and continues the above procedure by swapping between accepting and rejecting states. On reading $\lhd$, $M_n$ branches out into two transitions with one accepting state and one rejecting state.
It then follows that $\#_0(x) = \#M_n(x) - \#\overline{M}_n(x)$ and $\#_0(y) = \#M_n(y)- \#\overline{M}_n(y)$. Thus, we obtain $\#_0(x)-\#_0(y) = \#M_n(x)+\#M_n(y) - ( \#\overline{M}_n(x) + \#\overline{M}_n(y) ) = \#M_n(x\# y) - \#\overline{M}_n(x\# y)$.
The last equality clearly implies that $\LL_{sp}\in \onesp$.

Finally, we claim that $\LL_{sp}\notin \oneu$. Assuming otherwise, we take a family $\NN=\{N_n\}_{n\in\nat}$ of polynomial-size 1nfa's solving $\LL_{sp}$ such that, for any $n\in\nat$, $st(N_n)\leq n^k$ holds for a certain absolute constant $k>0$ and $N_n$ is unambiguous (at least) on $L^{(+)}_n\cup L^{(-)}_n$. Let $Q_n$ denote the set of all inner states of $N_n$.
For every string $x\# y$ in $L_n^{(+)}$, the notation $\mu(x,y)$ denotes a unique inner state $q$ satisfying that there exists a unique accepting computation path $\gamma$ along which $N_n$ enters $q$ just after reading off $x\#$.
Fix an arbitrary $n\in\nat$ satisfying $n^k<2^{n/2}$. We set $I_n$ to be $\{x\# y\in L_n^{(+)}\mid \#_0(x)\geq 2^{n/2}\}$. We choose two strings $x_1\# y_1$ and $x_2\# y_2$ in $I_n$ for which $\#_0(x_1) \neq \#_0(x_2)$ and $\mu(x_1,y_1) = \mu(x_2,y_2)$. This is possible because the total number of different values of $\mu(x,y)$ over all strings $x\# y$ in $I_n$ is upper-bounded by $|Q_n|$.

Let us consider the following computation on the input $x_1\# y_2$. For simplicity, we write $q$ for the inner state $\mu(x_1,y_1)$. We first simulate $N_n$ on $x_1\#$ and then enter $q$. Starting with $q$, we simulate $N_n$ on $y_2$ and then enter our own accepting state whenever entering an accepting state of $N_n$. Note that this procedure forms a legitimate accepting computation path of $N_n$ on $x_1\# y_2$. Thus, $x_1\# y_2\in L_n^{(+)}$ follows. This is a clear contradiction to the definition of $L_n^{(+)}$.
\end{proof}


In what follows, we intend to prove the separation of $\co\oneu \nsubseteq \onen$.
In its proof, we use the notion of Kolmogorov complexity.
Given a binary string $x$, the notation $C(x)$ denotes the \emph{(unconditional) Kolmogorov complexity} of $x$; namely, the minimum length of a binary-encoded program $p$ such that a \emph{universal Turing machine} $U$ takes $p$ as an  input and produces $x$ on its output tape in a finite number of steps. See, e.g., \cite{LV97} for more detailed information.


Given numbers $k,m,n\in\nat^{+}$ and $i_1,i_2,\ldots,i_k\in[n]$ with $n\leq m$, we introduce the notation $\dbraleft i_1,i_2,\ldots,i_k \dbraright_m$ to represent the binary string $1^{i_1}0^{m-i_1}0 1^{i_2}0^{m-i_2}0 \cdots 0 1^{i_k}0^{m-i_k}$, where we treat both $1^0$ and $0^0$ as $\lambda$. Let $B_n(m,k)$ denote the set of all strings of the form $\dbraleft i_1,i_2,\ldots,i_k \dbraright_m$ with  $i_1,i_2,\ldots,i_k\in[0,n]_{\integer}$, provided that $n\leq m$.
We further write $B_n(m)$ for the infinite union $\bigcup_{k\in\nat^{+}} B_n(m,k)$.
It follows that $|\dbraleft i_1,i_2,\ldots,i_k \dbraright_m| = km+k-1$,  whereas $|[i_1,i_2,\ldots,i_k]| = \sum_{l=1}^{k}i_l+k$. See \cite{Yam22b} for more information.

\begin{theorem}\label{separate-co-oneu}
$\co\oneu\nsubseteq \onen$
\end{theorem}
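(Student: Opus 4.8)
The plan is to exhibit a single family $\LL=\{(L_n^{(+)},L_n^{(-)})\}_{n\in\nat}$ of promise problems over the instance pool $T_n=\{x\# y\mid x,y\in\{0,1\}^{2^n}\}$ that lies in $\co\oneu$ but not in $\onen$; since $\onereachu\subseteq\oneu$, this will at the same time reprove the cited consequence $\co\onereachu\nsubseteq\onen$ of \cite{Yam22b} in a self-contained way. I would design $\LL$ so that its \emph{complement} $\co\LL$ has a ``unique short witness'' shape: the positive instances of $\co\LL$ (i.e.\ the negative instances of $\LL$) are exactly those $x\# y$ for which there is a \emph{unique} certificate $c$ drawn from a polynomially large range $[p(n)]$ passing a fixed, left-to-right checkable test $V_c(x,y)$, and the promise forbids any instance from carrying two competing certificates. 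The two quantities that must be \emph{decoupled} are the size of the certificate (kept in $[p(n)]$, hence describable in $O(\log n)$ bits of finite control) and the size of the object that makes $\LL$ hard (a full, essentially random length-$2^n$ block); arranging this decoupling is the heart of the construction, and I would lift it from the $\co\onereachu$ witness of \cite{Yam22b}.

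Granting such a witness, the upper bound $\LL\in\co\oneu$ is the routine half. I would build a nonuniform family $\{M_n\}$ of $1$nfa's for $\co\LL$ that, reading $x\# y$, nondeterministically guesses $c\in[p(n)]$ while scanning $\rhd$, carries $c$ in its finite control, and checks $V_c(x,y)$ in a single sweep, accepting only on the branch whose guess succeeds. Because $V_c$ is finite-state checkable and $c$ ranges over a polynomial set, $sc(M_n)\le p(n)\cdot O(1)=\poly(n)$; because the promise rules out a second valid certificate, $M_n$ has \emph{exactly one} accepting path on every positive instance of $\co\LL$ and none on the negative ones, so $\#M_n(\cdot)\in\{0,1\}$ on all promised inputs. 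By the definition of $\oneu$ in Section~\ref{sec:def-complexity-class} this places $\co\LL$ in $\oneu$, i.e.\ $\LL\in\co\oneu$.

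The lower bound $\LL\notin\onen$ is where I would spend the effort, and I would route it through Kolmogorov complexity precisely in order to dodge the usual fooling-set obstruction. Suppose a polynomial-size family $\{N_n\}$ with $sc(N_n)\le s(n)=\poly(n)$ solved $\LL$. Since $N_n$ is one-way, its accepting instances are covered by at most $s(n)$ combinatorial ``rectangles'', one per inner state reachable just after $\rhd x\#$ is consumed; each such rectangle $A\times B$ consists only of valid-hence-positive instances of $\LL$. I would then fix blocks $x,y\in\{0,1\}^{2^n}$ that are Kolmogorov-random relative to $N_n$ (such strings exist by counting, since fewer than $2^{2^n}$ strings have $K(\cdot)<2^n$; see \cite{LV97}) and chosen so that $x\# y$ is a positive instance of $\LL$. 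The design of $V_c$ would guarantee that any monochromatic rectangle of $\LL$ through $x\# y$ is forced to be ``structured'', so that membership of $x\# y$ in it would yield a description of the pair $(x,y)$ of length $\poly(n)\ll 2^n$, contradicting incompressibility. As $\poly(n)$ rectangles cannot cover the random instance without some rectangle being this structured, no such $\{N_n\}$ exists.

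The main obstacle, and the reason a bespoke gadget is needed rather than plain equality or set-disjointness, is that the two halves pull in opposite directions, and in the promise setting they very nearly collide. Making $\LL$ hard wants the distinguishing content to involve an entire random length-$2^n$ block, whereas unambiguity wants every positive instance of $\co\LL$ pinned down by one certificate out of only polynomially many; a naive ``raw position'' certificate would sit in $[2^n]$ and be unstorable, while a genuinely small certificate tends to make the whole problem finite-state and hence easy on \emph{both} sides. Compounding this, the classical crossing/fooling-set lower bound fails verbatim here—the cross instances it would need to be rejected fall into the don't-care region left open by the promise—so the hardness must be extracted from a single incompressible instance. I therefore expect the technically demanding step to be the simultaneous verification that the test $V_c$ is finite-state checkable, uniquely satisfiable under the promise (for the upper bound), and structure-forcing on monochromatic rectangles through Kolmogorov-random blocks (for the lower bound); getting all three to hold for one $V_c$ is where the real work lies.
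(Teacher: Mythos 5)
Your high-level strategy---exhibit a promise family whose complement has a unique, finitely storable certificate (giving the $\co\oneu$ upper bound) while the family itself defeats polynomial-size 1nfa's via a Kolmogorov-complexity argument---is exactly the strategy of the paper's proof, so this is not a different route; the problem is that the proposal stops short of a proof precisely where the theorem's content lives. The test $V_c$ and the promise problem are never defined: you explicitly defer the construction to the $\co\onereachu$ witness of \cite{Yam22b} and concede that ``getting all three to hold for one $V_c$ is where the real work lies.'' Moreover, the one parameter you do commit to is unworkable: with blocks $x,y\in\{0,1\}^{2^n}$ from $T_n$, an instance on the hard (equality-like) side carries $2^n$ bits of content, so a certificate ranging over only $p(n)$ values cannot localize the discrepancy that the $\oneu$ machine for $\co\LL$ must verify in one sweep---this is the ``raw position in $[2^n]$'' obstruction you name but never resolve. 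The paper escapes it with polynomially long instances: $u,v\in B_n(n,n)$ are concatenations of $n$ unary-coded blocks $\dbraleft i\dbraright_n$ with $i\in[0,n]_{\integer}$, where $L_n^{(+)}$ demands exactly one differing block and $L_n^{(-)}$ demands all blocks equal. The unique certificate is then the index $e\in[n]$ of the differing block, checkable with polynomially many states, while a single left half $u$ still has Kolmogorov complexity about $n\log n$, enough to contradict the $O(\log n)$-bit reconstruction (from $n$ and one crossing state of the hypothetical 1nfa) that the cut-and-paste argument produces.

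Two further steps in your lower bound need repair. First, the assertion that each of the $s(n)$ rectangles induced by the state crossing $\#$ ``consists only of valid-hence-positive instances'' is false for promise problems: an accepting 1nfa may accept arbitrarily many invalid instances, and these land in the same rectangles; you yourself note that this is why fooling sets fail here, yet the covering step as written relies on the property you disclaim. Second, the decisive claim---that any monochromatic rectangle through a Kolmogorov-random instance forces a $\poly(n)$-bit description of that instance---is asserted without argument; in the paper this is exactly where the structure of $B_n(n,n)$ and the ``exactly one block differs'' promise are exploited, by reconstructing the incompressible $u$ through enumeration of candidates $w$ and simulation of $M_n$ from the recorded crossing state, any spurious candidate yielding an accepted string of the form $u\# w$ that the promise forbids. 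Without a concrete $V_c$ for which such a reconstruction goes through, the argument does not establish $\co\oneu\nsubseteq\onen$.
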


\begin{proof}
We first define an example family $\LL_U=\{(L_n^{(+)},L_n^{(-)})\}_{n\in\nat}$
of promise problems as follows.
For every index $n\in\nat$, we set $L_n^{(+)} = \{u\# v\mid u,v\in B_n(n,n), \exists!e\in[n]( (u)_{(e)}\neq (v)_{(e)} )\}$ and $L_n^{(-)}=\{u\# v\mid u,v\in B_n(n,n), \forall e\in[n]( (u)_{(e)} = (v)_{(e)} )\}$. Notice that $|u\# v|=2n^2+2n-1$ if $u\# v\in L_n^{(+)}\cup L_n^{(-)}$, since $u\in B_n(n,n)$ implies $|u|=n^2+n-1$.

It then follows that $\LL_{U}$ belongs to $\oneu$.
To see this fact, let us consider the following 1nfa $M_n$.
On input $x$ of the form $u\# v$, check if $u$ has the form $\dbraleft i_1,i_2,\ldots,i_n \dbraright_n$.
At the same time, nondeterministically choose an index $e\in[0]$, read $(u)_{(e)}$, and remember both $e$ and $(u)_{(e)}$.
Similarly, after moving to $v$, check if $v$ has the form $\dbraleft i'_1,i'_2,\ldots,i'_n \dbraright_n$ and read $(v)_{(e)}$. Finally, check if $(u)_{(e)}\neq (v)_{(e)}$. When $x\in L^{(+)}_n$, $M_n$ produces exactly one accepting computation path. When $x\in L^{(-)}_n$, on the contrary, all computation paths of $M_n$ are rejecting ones. Notice that there may be a chance of $M_n$'s producing more than one accepting paths on a certain ``invalid'' input.

Next, we focus on the complement of $\LL_{U}$, that is,  $\co\LL_{U}=\{(L_n^{(-)},L_n^{(+)})\}_{n\in\nat}$, which belongs to   $\co\oneu$.
In what follows, we intend to verify that $\co\LL_{U}\notin \onen$.
To lead to a contradiction, we assume the existence of a family $\MM=\{M_n\}_{n\in\nat}$ of polynomial-size 1nfa's that solves $\co\LL_{U}$. For each index $n\in\nat$, let $Q_n$ denote the set of all inner states of $M_n$.
Note that there exists a polynomial $p$ satisfying
$|Q_n|\leq p(n)$ for all $n\in\nat$.  It is possible to assume that all elements of $Q_n$ are expressed as numbers in $[0,|Q_n|-1]_{\integer}$.
Given any $n\in\nat$ and any string $u\in B_n(n,n)$, we define $S_n(u)$ to be the set of all inner states $q$ in $Q_n$ such that, in a certain accepting computation path of $M_n$ on the input $u\# u$, $M_n$ enters $q$ just after reading off $u\#$.

Take a sufficiently large integer $n$ satisfying $p(n)<2^n$. Let us consider the Kolmogorov complexity $C(w)$ of string $w$. We choose a string $u$ in $B_n(n,n)$ for which $C(u)\geq n\log{n}$.
Such a string exists  because, otherwise, $|B_n(n,n)|\leq |\{u\in\{0,1\}^{n^2+n-1} \mid C(u) < n\log{n}\}| < 2^{n\log{n}}$, a contradiction to $|B_n(n,n)|=(n+1)^n$.
We then choose an inner state $q\in S_n(u)$, which is the smallest value in $[0,|Q_n|-1]_{\integer}$.
Note that the length of the binary expression of $q$ is $O(\log{n})$ since $|Q_n|\leq p(n)$.

Let us design a deterministic algorithm, say, $\BB$ that behaves as follows. By choosing all strings $w$ in $B_n(n,n)$ inductively one by one, $\BB$ runs $M_n$ on the input $w\# w$ starting with the inner state $q$. If $M_n$ enters no accepting states on all computation paths, then $\BB$ outputs $w$ and halts. Otherwise, $\BB$ chooses another $w$ and continues the above procedure.
Let $w'$ denote the outcome of this algorithm $\BB$. The string $w'$ must be $u$ because, otherwise, $M_n$ also accepts $u\# w'$, a contradiction. Let $r_0$ denote the binary encoding of the  algorithm $\BB$. It is important to note that $|r_0|$ is a constant independent of $u$ and $w'$. It thus follows that $C(u)\leq |bin(n)|+|bin(q)|+ |r_0| +O(1) \leq O(\log{n})$, which is in contradiction to the inequality $C(u)\geq n\log{n}-1$.
\end{proof}


Since $\oneu\subseteq \onen$, Theorem  \ref{separate-co-oneu} instantly  yields the known separation of $\oneu\neq\co\oneu$ \cite{Yam22b,Yam22c}. This theorem also leads to the following important consequence regarding the complexity of $\onesp$.

\begin{corollary}\label{1N-vs-1SP}
$\onesp\nsubseteq \onen$.
\end{corollary}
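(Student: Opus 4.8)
The plan is to treat this as a formal consequence of three results already in hand, with essentially no new combinatorial work required; all the genuine difficulty sits inside Theorem~\ref{separate-co-oneu}, whose separating family I will simply inherit. The key observation is that the separating family witnessing $\co\oneu\nsubseteq\onen$ can be shown to live inside $\onesp$, after which the separation $\onesp\nsubseteq\onen$ is immediate.

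First I would record the chain of inclusions leading to $\co\oneu\subseteq\onesp$. Proposition~\ref{oneu-vs-oneparity} gives $\oneu\subseteq\onesp$ (indeed properly, but I only need the inclusion). Taking complements of both sides yields $\co\oneu\subseteq\co\onesp$. Then I invoke Lemma~\ref{parity-complement}, which establishes $\onesp=\co\onesp$, to rewrite the right-hand side and conclude
\[
\co\oneu\;\subseteq\;\co\onesp\;=\;\onesp.
\]
This is the only place where I use the complementation-closure of $\onesp$, and it is precisely what lets me convert a statement about $\co\oneu$ into one about $\onesp$.

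Next I would transfer the separation. Theorem~\ref{separate-co-oneu} provides a family, namely $\co\LL_U=\{(L_n^{(-)},L_n^{(+)})\}_{n\in\nat}$, that belongs to $\co\oneu$ but does not belong to $\onen$. By the inclusion just derived, $\co\LL_U\in\onesp$. Since this same family lies outside $\onen$, it simultaneously witnesses membership in $\onesp$ and non-membership in $\onen$, which is exactly the assertion $\onesp\nsubseteq\onen$.

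I do not anticipate a genuine obstacle here, since the statement is a formal corollary rather than an independent argument; the conceptual content is entirely borrowed from Theorem~\ref{separate-co-oneu}. The one point requiring care is the direction of the complementation step: I must make sure that $\oneu\subseteq\onesp$ together with $\onesp=\co\onesp$ really yields $\co\oneu\subseteq\onesp$ (and not merely $\co\oneu\subseteq\co\onesp$ with $\co\onesp$ left unsimplified), so that the $\co\oneu$-family produced by Theorem~\ref{separate-co-oneu} can be placed in $\onesp$ without alteration. Once that symmetry is applied correctly, the corollary follows in a single line.
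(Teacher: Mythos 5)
Your proposal is correct and follows essentially the same route as the paper: the paper also reduces the corollary to the containment $\co\oneu\subseteq\onesp$ (which it asserts in Section~\ref{sec:oneu-class} and which you justify, correctly, via $\oneu\subseteq\onesp$ together with $\onesp=\co\onesp$ from Lemma~\ref{parity-complement}) combined with Theorem~\ref{separate-co-oneu}. The only cosmetic difference is that the paper phrases the argument as a proof by contradiction, whereas you exhibit the witnessing family $\co\LL_U$ directly; the logical content is identical.
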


\begin{proof}
Assume that $\onesp\subseteq \onen$. Since $\oneu \subseteq \onesp$ by Proposition \ref{oneu-vs-oneparity}, it follows that $\co\oneu\subseteq \co\onesp = \onesp$. We thus obtain $\co\oneu\subseteq \onen$. This obviously contradicts Theorem \ref{separate-co-oneu}. Therefore, we can conclude that $\onesp\nsubseteq \onen$.
\end{proof}

\subsection{Complexity Class 1C$_{=}$}\label{sec:class-onecequal}

Let us discuss containment and separation relationships concerning  $\onecequal$.
A core of this subsection is an exploitation of a close connection between nonuniform (polynomial) state complexity classes of this work and one-tape linear-time Turing machines with linear-size advice discussed in \cite{TYL10,Yam10}.
This close connection helps us adapt a key lemma (Lemma 4.3) of \cite{Yam10} in our setting and exploit it to prove the class separations between $\onen$ and $\onecequal$, as shown below.

\begin{theorem}\label{N-vs-cequal}
\renewcommand{\labelitemi}{$\circ$}
(1) $\onen\subsetneqq \co\onecequal$ and $\co\onen\subsetneqq
\onecequal$.
(2) $\onen\nsubseteq\onecequal$.
\end{theorem}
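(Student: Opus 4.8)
The plan is to obtain part (1) from two easy containments together with Theorem \ref{separate-co-oneu}, and to prove the genuine separation in part (2) by a rank argument that exploits one-wayness. First I would record $\onen\subseteq\co\onecequal$: if $\LL=\{(L_n^{(+)},L_n^{(-)})\}_{n\in\nat}\in\onen$ is solved by $\{M_n\}_{n\in\nat}$, then $f_n(x)=\#M_n(x)$ lies in $\onesharp\subseteq\onegap$ (Lemma \ref{inclusion-1F-to-1sharp}) and is $>0$ on $L_n^{(+)}$ and $=0$ on $L_n^{(-)}$; reading $f_n$ as a gap witness for $\co\LL$ gives $\co\LL\in\onecequal$, i.e. $\LL\in\co\onecequal$, and complementing yields $\co\onen\subseteq\onecequal$. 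Next I would note $\oneu\subseteq\onecequal$: an $\onesharp$-witness $f_n$ for $\LL\in\oneu$ ($f_n=1$ on YES, $f_n=0$ on NO) gives $g_n=f_n-1\in\onegap$ (Lemma \ref{one-gap-character}(1)), which is $0$ on YES and $-1$ on NO, so $\LL\in\onecequal$; complementing, $\co\oneu\subseteq\co\onecequal$. Now Theorem \ref{separate-co-oneu} ($\co\oneu\nsubseteq\onen$) forces $\co\onecequal\nsubseteq\onen$, hence $\onen\subsetneqq\co\onecequal$; and since $\co\oneu\nsubseteq\onen$ is equivalent by complementation to $\oneu\nsubseteq\co\onen$, it forces $\onecequal\nsubseteq\co\onen$, hence $\co\onen\subsetneqq\onecequal$. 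This settles part (1).

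For part (2) I would exhibit an explicit family in $\onen\setminus\onecequal$, namely the inequality family $\LL_{\mathrm{NE}}=\{(L_n^{(+)},L_n^{(-)})\}_{n\in\nat}$ with $L_n^{(+)}=\{u\# v\mid u,v\in B_n(n,n),\ u\neq v\}$ and $L_n^{(-)}=\{u\# v\mid u,v\in B_n(n,n),\ u=v\}$, using the fixed-width encoding of Section \ref{sec:numbers}. Membership $\LL_{\mathrm{NE}}\in\onen$ is direct: a polynomial-size 1nfa guesses a coordinate $e\in[n]$, records $(e,(u)_{(e)})$ while scanning the first block (feasible with polynomially many states, since every coordinate has the fixed width $n+1$), then relocates coordinate $e$ in the second block and accepts iff $(v)_{(e)}\neq(u)_{(e)}$.

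For $\LL_{\mathrm{NE}}\notin\onecequal$ I would argue by contradiction with a rank bound. Suppose $\{M_n\}_{n\in\nat}$ with $sc(M_n)=s\le\poly(n)$ witnesses $\LL_{\mathrm{NE}}\in\onecequal$ via $g_n=\#M_n-\#\overline{M}_n$. After a count-preserving modification deferring every accept/reject decision to an absorbing accepting or rejecting sink state (adding only two states, so $s$ stays polynomial), every path scans all of $\rhd u\# v\lhd$, and the vector of path counts evolves linearly: writing $T_\sigma\in\nat^{s\times s}$ for the per-symbol count matrix, the boundary vector $\alpha(u)=T_{\#}T_uT_{\rhd}\,e_{q_0}\in\nat^{s}$ depends only on $u$, and $g_n(u\# v)=(e_{\hat q_{acc}}-e_{\hat q_{rej}})^{\top}T_{\lhd}T_v\,\alpha(u)=\gamma(v)^{\top}\alpha(u)$ for a vector $\gamma(v)\in\integer^{s}$ depending only on $v$. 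Hence the $B_n(n,n)\times B_n(n,n)$ matrix $G=[\,g_n(u\# v)\,]_{u,v}$ factors through $\integer^{s}$, giving $\mathrm{rank}_{\rational}(G)\le s\le\poly(n)$. On the other hand the $\onecequal$-promise makes $g_n(u\# v)=0$ exactly when $u\neq v$ and $g_n(u\# u)\neq0$, so $G$ is diagonal with nonzero diagonal and $\mathrm{rank}_{\rational}(G)=|B_n(n,n)|\ge n^{n}$. For large $n$ this contradicts the polynomial bound, so $\LL_{\mathrm{NE}}\notin\onecequal$.

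The step I expect to be the main obstacle is the bilinear decomposition $g_n(u\# v)=\gamma(v)^{\top}\alpha(u)$: one must ensure that the path-count vector at the separator $\#$ is genuinely independent of the suffix $v$ and that no computation path is lost to early halting, which is exactly why the count-preserving read-to-the-end normal form (and the absence of $\lambda$-moves) is essential. This is precisely the place where, following \cite{Yam10}, one may instead pass to one-tape linear-time $\mathrm{C}_{=}$ computation with linear advice and invoke its key counting lemma, which plays the role of the rank bound derived here.
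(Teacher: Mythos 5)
Your part (1) is correct and essentially the paper's own argument: the containment $\onen\subseteq\co\onecequal$ is the same easy gap-witness construction, and the strictness is derived from Theorem \ref{separate-co-oneu} exactly as in the paper (the paper routes through $\onesp\subseteq\co\onecequal$ and $\co\oneu\subseteq\onesp$, where you route through $\oneu\subseteq\onecequal$; these are interchangeable). Part (2) is where you genuinely diverge. The paper first proves Lemma \ref{property-1cequal}, a spanning-set property of $\onecequal$ extracted from its 1pfa characterization (a maximal linearly independent subset of the prefix state-vectors $\nu_{ini}^{(n)}M_{\rhd w}^{(n)}$, of size at most $p(n)$, controls the acceptance of all continuations), and then applies it to the family with $L_n^{(+)}=\{u\# v \mid Set(u)\neq Set(v)\}$ to force a forbidden string $y\# y$ into $L_n^{(+)}$. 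Your argument instead works directly with the counting matrices of the witnessing 1nfa: after the read-to-the-end normalization, $g_n(u\# v)=\gamma(v)^{\top}\alpha(u)$ factors the Hankel-type matrix $G$ through $\integer^{s}$, so $\mathrm{rank}_{\rational}(G)\leq sc(M_n)$, while the $\onecequal$ promise on the string-equality family makes $G$ diagonal with nonzero diagonal and hence of rank $|B_n(n,n)|\geq n^{n}$. Both proofs are at bottom the same dimension count on prefix state-vectors, but yours is self-contained and more elementary, avoiding both the 1pfa characterization and the affine-combination bookkeeping inside the paper's lemma; the paper's formulation buys a reusable lemma stated in parallel with its $\onep$ analogue (Lemma \ref{property-onep}). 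The one step you flagged as delicate --- that $\alpha(u)$ is independent of the suffix and that no computation path is lost to early halting --- is indeed the only thing to verify, and your absorbing-sink normalization (together with the absence of $\lambda$-moves in this model, which keeps all paths synchronized with the input position) handles it; paths that die or halt improperly contribute to neither count and are harmless. So the proposal is correct, with part (2) following a legitimately different route from the paper.
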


Recall from Section \ref{sec:basic-closure-prop} the closure property of $\onecequal$ under intersection. By sharp contrast, we prove the following non-closure property of $\onecequal$ as an immediate consequence of Theorem \ref{N-vs-cequal}.

\begin{corollary}\label{Cequal-complement}
$\onecequal$ is not closed under complementation.
\end{corollary}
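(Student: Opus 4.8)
The plan is to derive this corollary purely as a logical consequence of the two parts of Theorem \ref{N-vs-cequal}, via a short proof by contradiction. Suppose, toward a contradiction, that $\onecequal$ \emph{is} closed under complementation; this assumption is equivalent to the class equality $\onecequal = \co\onecequal$, since closure under complementation means that $\LL\in\onecequal$ implies $\co\LL\in\onecequal$ for every family $\LL$, and hence $\co\onecequal\subseteq\onecequal$ (the reverse inclusion being symmetric).

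First I would invoke Theorem \ref{N-vs-cequal}(1), which asserts in particular that $\onen\subsetneqq\co\onecequal$, so that $\onen\subseteq\co\onecequal$. (Equivalently, the companion inclusion $\co\onen\subseteq\onecequal$ from the same part could be used; these two are interchangeable by taking complements of every promise problem in a family, using that the complement operation on families defined in Section \ref{sec:nonuniform-family} is an involution.) Combining this inclusion with the standing assumption $\co\onecequal=\onecequal$ yields $\onen\subseteq\onecequal$. But this directly contradicts Theorem \ref{N-vs-cequal}(2), namely $\onen\nsubseteq\onecequal$. The contradiction shows that the assumption $\onecequal=\co\onecequal$ is untenable, so $\onecequal\neq\co\onecequal$, which is precisely the statement that $\onecequal$ is not closed under complementation.

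The entire argument is therefore routine logical bookkeeping once Theorem \ref{N-vs-cequal} is in hand; there is no genuine obstacle at the level of this corollary. The real difficulty has already been absorbed into the preceding theorem, and in particular into the separation \ref{N-vs-cequal}(2), $\onen\nsubseteq\onecequal$, whose proof (relying on the connection to one-tape linear-time Turing machines with linear-size advice and Kolmogorov-complexity arguments in the style of Theorem \ref{separate-co-oneu}) is where the substantive work lives. All this corollary needs is to witness the gap between the containment $\onen\subseteq\co\onecequal$ and the non-containment $\onen\nsubseteq\onecequal$, which is exactly what prevents $\onecequal$ and $\co\onecequal$ from coinciding.
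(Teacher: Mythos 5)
Your argument is correct and coincides exactly with the paper's own proof: both assume $\onecequal=\co\onecequal$, combine it with the inclusion $\onen\subseteq\co\onecequal$ from Theorem \ref{N-vs-cequal}(1) to get $\onen\subseteq\onecequal$, and contradict Theorem \ref{N-vs-cequal}(2). No issues.
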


\begin{proof}
Theorem \ref{N-vs-cequal}(1) implies $\onen\subseteq \co\onecequal$.  Moreover, if $\onecequal=\co\onecequal$, then $\onen\subseteq \onecequal$ follows. However, Theorem \ref{N-vs-cequal}(2) shows that $\onen\nsubseteq \onecequal$. The corollary thus follows immediately.
\end{proof}

This non-closure property of $\onecequal$ is helpful to prove the following statement.

\begin{proposition}\label{onesp-vs-onecequal}
$\onesp\subsetneqq \co\onecequal$ and $\onesp\subsetneqq \onecequal$.
\end{proposition}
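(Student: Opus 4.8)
The plan is to establish the two inclusions directly from the definitions and closure properties, and then to derive both strictnesses from a single complementation argument.

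For $\onesp\subseteq\co\onecequal$, I would observe that the defining condition of $\onesp$ is already a special case of that of $\co\onecequal$. If $\LL=\{(L_n^{(+)},L_n^{(-)})\}_{n\in\nat}\in\onesp$ is witnessed by a family $\{(f_n,D_n)\}_{n\in\nat}\in\onegap$ with $f_n(x)=1$ on $L_n^{(+)}$ and $f_n(x)=0$ on $L_n^{(-)}$, then the very same family witnesses $\LL\in\co\onecequal$: on positive instances $f_n(x)=1\neq0$ and on negative instances $f_n(x)=0$, which is exactly the requirement for $\co\onecequal$ (nonzero on $L_n^{(+)}$, zero on $L_n^{(-)}$). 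Thus this inclusion is immediate.

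For $\onesp\subseteq\onecequal$, I would apply a constant shift. Starting from the same witness $\{(f_n,D_n)\}_{n\in\nat}\in\onegap$, define $g_n=f_n-1$. The constant function $1$ is realized by a trivial $1$nfa producing exactly one accepting path, so it lies in $\onesharp\subseteq\onegap$; hence by the closure $\onegap=\onegap-\onegap$ of Lemma \ref{one-gap-character}(2) the family $\{(g_n,D_n)\}_{n\in\nat}$ is again in $\onegap$, over the unchanged domain $D_n$. Now $g_n(x)=0$ on $L_n^{(+)}$ and $g_n(x)=-1\neq0$ on $L_n^{(-)}$, which is precisely the $\onecequal$ condition, so $\LL\in\onecequal$. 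The only point needing care here is that subtracting a constant keeps the witness inside $\onegap$ and leaves the domain intact, both of which are guaranteed by Lemma \ref{one-gap-character}.

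The strictness of both inclusions I would obtain from one observation, combining $\onesp=\co\onesp$ (Lemma \ref{parity-complement}) with the failure of $\onecequal$ to be closed under complementation (Corollary \ref{Cequal-complement}). Suppose toward a contradiction that $\onesp=\onecequal$. Complementing and using $\onesp=\co\onesp$ gives $\co\onecequal=\co\onesp=\onesp=\onecequal$, contradicting Corollary \ref{Cequal-complement}. Likewise, if $\onesp=\co\onecequal$, then complementing gives $\co\onesp=\onecequal$, and since $\co\onesp=\onesp=\co\onecequal$ we again obtain $\onecequal=\co\onecequal$, the same contradiction. Hence both inclusions are proper. I expect no genuine obstacle in the calculations; the only delicate part is keeping the complementation bookkeeping airtight, and recognizing that the strictness here rests essentially on the self-complementarity of $\onesp$ together with the asymmetry of $\onecequal$, rather than on any new diagonalization.
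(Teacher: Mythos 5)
Your proof is correct and follows essentially the same route as the paper: both inclusions are near-immediate from the definitions, and both strictnesses are reduced, via $\onesp=\co\onesp$ (Lemma \ref{parity-complement}), to the non-closure of $\onecequal$ under complementation (Corollary \ref{Cequal-complement}). The only cosmetic difference is that you prove $\onesp\subseteq\onecequal$ directly by shifting the witness by the constant $1$ inside $\onegap$, whereas the paper obtains it from $\onesp\subseteq\co\onecequal$ by complementing both sides; both are valid one-line arguments.
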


\begin{proof}
The inclusion $\onesp\subseteq \co\onecequal$ is trivial. The separation  $\onesp\neq \co\onecequal$ follows directly from Corollary \ref{Cequal-complement} because, otherwise, we obtain  $\onecequal=\co\onecequal$ by Lemma \ref{parity-complement}, leading to a contradiction.  The latter claim of $\onesp\subsetneqq\onecequal$ in the proposition follows from $\onesp\subsetneqq \co\onecequal$ and $\onesp=\co\onesp$.
\end{proof}


As another consequence of Theorem \ref{N-vs-cequal}, we obtain the following.

\begin{proposition}\label{onecequal-vs-onep}
$\onecequal \subsetneqq \onep$.
\end{proposition}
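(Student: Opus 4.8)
The plan is to establish the inclusion $\onecequal \subseteq \onep$ by a direct gap-function construction, and then to derive strictness from the already-proven separation $\onen \nsubseteq \onecequal$ of Theorem \ref{N-vs-cequal}(2), after observing that $\onen$ sits inside $\onep$.

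For the inclusion, let $\LL = \{(L_n^{(+)}, L_n^{(-)})\}_{n\in\nat}$ be any family in $\onecequal$ and fix a witnessing gap family $\{(f_n, D_n)\}_{n\in\nat} \in \onegap$ with $L_n^{(+)} \cup L_n^{(-)} \subseteq D_n$, $f_n(x) = 0$ for $x \in L_n^{(+)}$, and $f_n(x) \neq 0$ for $x \in L_n^{(-)}$. I would define $g_n = 1 - f_n^2$ on $D_n$. On a YES instance $f_n(x)=0$ gives $g_n(x) = 1 > 0$, while on a NO instance $f_n(x) \neq 0$ forces $f_n(x)^2 \geq 1$, hence $g_n(x) \leq 0$; this is precisely the defining condition of $\onep$. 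It then remains to confirm that $\{(g_n, D_n)\}_{n\in\nat}$ lies in $\onegap$. The square $f_n^2 = f_n \cdot f_n$ belongs to $\onegap$ by the multiplication closure of Lemma \ref{closure-onesharp}; the constant family $\mathbf{1}$ (each underlying machine a 1dfa accepting everything, contributing a single accepting path, so $\#M_n(x)=1$) lies in $\onesharp \subseteq \onegap_{\geq0} \subseteq \onegap$ by Lemma \ref{inclusion-1F-to-1sharp}(1); and then $g_n = 1 - f_n^2 \in \onegap - \onegap = \onegap$ by Lemma \ref{one-gap-character}(2), with domain $\Sigma^* \cap D_n = D_n \supseteq L_n^{(+)}\cup L_n^{(-)}$. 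Thus $\LL \in \onep$.

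For the strictness, I would first record that $\onen \subseteq \onep$: if $\{(f_n,D_n)\}_{n\in\nat} \in \onesharp$ witnesses a family in $\onen$ (via the rephrasing in Section \ref{sec:def-complexity-class}, so that $f_n(x)>0$ on YES instances and $f_n(x)=0$ on NO instances), then $f_n \in \onesharp \subseteq \onegap$ itself meets the $\onep$ condition, since $f_n(x)>0$ on YES and $f_n(x)=0 \leq 0$ on NO. Now suppose toward a contradiction that $\onecequal = \onep$. Then $\onen \subseteq \onep = \onecequal$, which contradicts Theorem \ref{N-vs-cequal}(2). Hence the inclusion is proper, i.e.\ $\onecequal \subsetneqq \onep$.

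The only genuine content here is the existence of a separating witness, and that work is entirely discharged by Theorem \ref{N-vs-cequal}(2); so the sole remaining obstacle is the bookkeeping that the arithmetic expression $1 - f_n^2$ stays inside $\onegap$, which the closure results (Lemmas \ref{closure-onesharp}, \ref{one-gap-character}(2), and \ref{inclusion-1F-to-1sharp}(1)) handle cleanly. Note also that this argument does not require the self-complementarity $\onep=\co\onep$, making the inclusion direction self-contained.
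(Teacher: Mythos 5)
Your proof is correct, and the inclusion direction takes a genuinely different route from the paper's. The paper proves $\onecequal\subseteq\onep$ at the machine level: it invokes Lemma \ref{cequal-path-number}, whose pairing construction realizes the gap $(\#M_n(x)-\#\overline{M}_n(x))^2$ directly by a new 1nfa, concludes that $\LL\in\co\onep$, and then imports the external fact $\onep=\co\onep$ from \cite{Yam22a} to finish. You instead work at the function level: the same squaring idea is delegated to the multiplication closure of $\onegap$ (Lemma \ref{closure-onesharp}), and the extra affine shift $g_n = 1 - f_n^2$ flips the sign convention so that $\LL$ lands in $\onep$ directly, with $\onegap-\onegap=\onegap$ (Lemma \ref{one-gap-character}(2)) absorbing the subtraction. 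What your version buys is self-containment --- it avoids any appeal to $\onep=\co\onep$ and rests only on closure lemmas already proved in the paper --- at the cost of being slightly less explicit about the underlying automata. You also spell out $\onen\subseteq\onep$, which the paper uses without comment. The strictness argument (assume $\onecequal=\onep$, deduce $\onen\subseteq\onecequal$, contradict Theorem \ref{N-vs-cequal}(2)) is identical to the paper's.
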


\begin{proof}
Let $\LL$ denote any family in $\onecequal$ and consider a family $\MM=\{M_n\}_{n\in\nat}$ of polynomial-size 1nfa's for which $\MM$ ``witnesses'' the membership of $\LL$ to $\onecequal$.
Lemma \ref{cequal-path-number} provides another family $\{N_n\}_{n\in\nat}$ of polynomial-size 1nfa's that satisfies the lemma's conditions, namely, for any string $x$, (i) $\#M_n(x)=\#\overline{M}_n(x)$ implies $\#N_n(x)=\#\overline{N}_n(x)$ and (ii) $\#M_n(x)\neq \#\overline{M}_n(x)$ implies  $\#N_n(x)>\#\overline{N}_n(x)$. Therefore, $\LL$ belongs to $\co\onep$.
Since $\onep=\co\onep$  \cite{Yam22a}, this implies that $\onecequal\subseteq \onep$.

Next, we intend to prove that $\onecequal \neq \onep$. Notice that this inequality implies $\co\onecequal\nsubseteq \onep$ as well because of  $\onep=\co\onep$. Assume that $\onecequal=\onep$. Since $\onen\subseteq \onep$ \cite{Yam22a}, we obtain $\onen\subseteq \onecequal$. This clearly contradicts Theorem  \ref{N-vs-cequal}(2).
\end{proof}


In the rest of this subsection, we intend to prove Theorem \ref{N-vs-cequal}. Let us recall $1\mbox{-}\mathrm{C}_{=}\mathrm{LIN/lin}$, which is composed of all languages recognized by one-tape linear-time exact counting TMs equipped with linear-size advice; that is, the following acceptance criteria hold: each TM accepts exactly when there are equal numbers of accepting and rejecting computation paths on all inputs.
The desired proof requires an idea from \cite{Yam10}, more specifically,
a useful, characteristic property of $1\mbox{-}\mathrm{C}_{=}\mathrm{LIN/lin}$, presented in \cite[Lemma 4.3]{Yam10}, to demonstrate that $1\mbox{-}\mathrm{C}_{=}\mathrm{LIN/lin}$ is not closed under complementation.
As noted in Section \ref{sec:exploitation}, there is a close connection
between one-tape linear-time TMs and finite automata.
This connection seems to be adapted to one-tape linear-time ``advised'' TMs and ``nonuniform'' finite automata families. In particular, we look into a connection
between $1\mbox{-}\mathrm{C}_{=}\mathrm{LIN/lin}$ and $\onecequal$. We  wish to exploit this connection to translate the above property of $\mathrm{1\mbox{-}C_{=}LIN/lin}$ into the setting of $\onecequal$ and to
achieve the desired separation results of Theorem \ref{N-vs-cequal}.

\begin{lemma}\label{property-1cequal}
Let $\LL = \{(L_n^{(+)},L_n^{(-)})\}_{n\in\nat}$ denote any family in $\onecequal$ over alphabet $\Sigma$. There exists a polynomial $p$ that satisfies the following statement. Let $n$, $m$, and $l$ be any numbers in $\nat$ with $l\leq m-1$, let $z\in\Sigma^{l}$, and let $A_{n,m,l,z}= \{x\in\Sigma^{m-l}\mid xz\in L_n^{(+)}\}$. There exists a subset $S$ of $A_{n,m,l,z}$ with $|S|\leq p(n)$ such that, for any $y\in\Sigma^l$, if $\{wy\mid w\in S\}\subseteq L_n^{(+)}$, then $\{xy\mid x\in A_{n,m,l,z}\}\subseteq L_n^{(+)}$.
\end{lemma}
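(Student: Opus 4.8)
The plan is to exploit the fact that the gap of a one-way finite automaton factors as a bilinear form whose ``rank'' is controlled by the number of inner states. Since $\LL\in\onecequal$, I would fix a family $\{(f_n,D_n)\}_{n\in\nat}\in\onegap$ witnessing this, together with a family $\{M_n\}_{n\in\nat}$ of polynomial-size 1nfa's computing the gaps, and take a polynomial $p$ with $sc(M_n)\le p(n)$ for all $n$. By Lemma~\ref{cequal-path-number} I may assume the gap is always nonnegative, so that for every promised string $u$ we have $u\in L_n^{(+)}$ iff $\#M_n(u)-\#\overline{M}_n(u)=0$. Writing $Q_n$ for the state set of $M_n$, for a prefix $x$ I would record the vector $v_x\in\nat^{Q_n}$ whose $q$-th entry counts the computation paths of $M_n$ reaching $q$ just after reading $\rhd x$, and for a suffix $y$ the vector $\gamma_y\in\integer^{Q_n}$ whose $q$-th entry is the number of accepting minus rejecting paths produced when $M_n$ is started in $q$ and made to read $y\lhd$. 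A direct count then yields the key identity $\#M_n(xy)-\#\overline{M}_n(xy)=\langle v_x,\gamma_y\rangle$, valid for all prefixes $x$ of length $m-l$ and suffixes $y$ of length $l$.

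With this decomposition in hand, the construction of $S$ becomes pure linear algebra. The vectors $\{v_x\mid x\in A_{n,m,l,z}\}$ all lie in $\rational^{Q_n}$, a space of dimension $|Q_n|\le p(n)$, so their $\rational$-linear span $W$ has dimension at most $p(n)$. I would choose $S\subseteq A_{n,m,l,z}$ so that $\{v_w\mid w\in S\}$ is a basis of $W$; this forces $|S|\le\dim W\le p(n)$, and crucially this bound depends only on $n$, not on $m$, $l$, or $z$, exactly as the statement demands.

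It then remains to verify the transfer property. Fix $y\in\Sigma^l$ and assume $\{wy\mid w\in S\}\subseteq L_n^{(+)}$. Each such $wy$ is promised, so by the choice of $M_n$ we get $\langle v_w,\gamma_y\rangle=\#M_n(wy)-\#\overline{M}_n(wy)=0$ for every $w\in S$. Since $\{v_w\mid w\in S\}$ spans $W$ and $\langle\cdot,\gamma_y\rangle$ is linear, this linear functional vanishes on all of $W$; in particular $\langle v_x,\gamma_y\rangle=0$, i.e. $\#M_n(xy)=\#\overline{M}_n(xy)$, for every $x\in A_{n,m,l,z}$. Provided $xy$ is a promised instance, the nonnegative-gap property then forces $xy\in L_n^{(+)}$, which is the desired conclusion.

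The main obstacle I anticipate is twofold. First, the bilinear identity must be set up carefully: one has to check that the prefix really is captured by a single vector in $\nat^{Q_n}$ (this uses that $M_n$ makes no $\lambda$-moves, so after reading $\rhd x$ every surviving path sits at the same head position), and that the rank bound is genuinely $|Q_n|$ rather than something growing with $m$ even though the individual entries of $v_x$ may be exponential in $m$. Second, there is a promise subtlety: the conclusion $xy\in L_n^{(+)}$ requires $xy$ to be valid, whereas the hypothesis only supplies validity of $wy$ for $w\in S$ and of $xz$ for $x\in A_{n,m,l,z}$. I would handle this by working in the regime in which the lemma is invoked, where all strings of the relevant length are promised, so that the gap characterization of $L_n^{(+)}$ is exact; this mirrors the treatment of the analogous property of $1\mbox{-}\mathrm{C}_{=}\mathrm{LIN}/lin$ in \cite{Yam10}, whose connection to $\onecequal$ is precisely what motivates the statement.
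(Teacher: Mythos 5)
Your proposal is correct and follows essentially the same route as the paper: the paper passes to the $1$pfa characterization of $\onecequal$ and takes a maximal linearly independent subset of the prefix vectors $\nu_{ini}^{(n)}M_{\triangleright w}^{(n)}$, whereas you work directly with $1$nfa path-count vectors and the bilinear gap decomposition $\langle v_x,\gamma_y\rangle$, but the core idea --- the span of the prefix vectors has dimension at most $|Q_n|\leq p(n)$, and a linear functional vanishing on a basis of that span vanishes on all of it --- is identical, as is the treatment of the promise subtlety. The only point needing a word of care in your version is that paths of $M_n$ which halt while still inside $\triangleright x$ are not captured by $\langle v_x,\gamma_y\rangle$; this is repaired either by invoking Lemma \ref{branching-normal-form} or by appending one constant coordinate to $v_x$ and $\gamma_y$, neither of which affects the polynomial bound.
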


Meanwhile, we postpone the proof of Lemma \ref{property-1cequal} and we wish  to prove Theorem \ref{N-vs-cequal} with the use of the lemma.

\vs{-2}
\begin{proofof}{Theorem \ref{N-vs-cequal}}
(1) We attempt to verify that $\onen\subsetneqq \co\onecequal$.
We first claim that $\onen\subseteq \co\onecequal$. For any family $\LL=\{(L_n^{(+)},L_n^{(-)})\}_{n\in\nat}$ of promise problems in $\onen$, let us consider a family $\FF=\{(f_n,D_n)\}_{n\in\nat}$ of partial functions in $\onesharp$ such that (*)  $f_n(x)>0$ for all $x\in L_n^{(+)}$ and $f_n(x)=0$ for all $x\in L_n^{(-)}$.
Since $\onesharp\subseteq \onegap_{\geq0}$ by Lemma \ref{inclusion-1F-to-1sharp}(1), $\FF$ belongs to $\onegap_{\geq0}$. Condition (*) then implies that $\LL$ belongs to $\co\onecequal$ by the definition of $\onecequal$. Therefore, we conclude that $\onen\subseteq \co\onecequal$.

The remaining task is to prove $\onen\neq \co\onecequal$. Since $\onesp\subseteq \co\onecequal$, $\onen = \co\onecequal$ implies $\onesp\subseteq \onen$, which further implies $\co\oneu\subseteq \onen$ by lemma \ref{parity-complement} and Proposition \ref{oneu-vs-oneparity}.
Since $\co\oneu\nsubseteq\onen$ by Theorem \ref{separate-co-oneu},
we obtain the desired separation $\onen\neq \co\onecequal$.

(2) Here, we intend to verify that $\onen\nsubseteq \onecequal$.
Our example family $\LL_N=\{(L_n^{(+)},L_n^{(-)})\}_{n\in\nat}$ is introduced  as follows. Let $\Sigma=\{0,1\}$. Assume that $n\geq1$. For simplicity, we write $I_n$ for the set $B_{n^2}(n^2,n)$ ($=\{\dbraleft i_1,i_2,\ldots,i_n\dbraright_{n^2} \mid i_1,i_2,\ldots,i_n\in[0,n^2]_{\integer}\}$). Notice that  $|z|=n^3+n-1$ holds for any string $z\in I_n$. The desired family $\LL_{N}$ is obtained by setting $L_n^{(+)} = \{u\# v\mid u,v\in I_n, Set(u)\neq Set(v)\}$ and $L_n^{(-)} = \{u\# v\mid u,v\in I_n, Set(u)=Set(v)\}$ for each index $n\in\nat^+$. It is important to remark that, for any $u\in I_n$, $u\# u$ always belongs to $L_n^{(-)}$.
It is not difficult to show that $\LL_N$ belongs to $\onen$.

Hereafter, we wish to prove by contradiction that $\LL_N\notin \onecequal$. Assuming that $\LL_{N}\in \onecequal$, we apply Lemma \ref{property-1cequal} to $\LL_{N}$ and take a polynomial $p$ that satisfies the lemma. Take a sufficiently large number $n$. We then set $m=2n^3+2n-1$ and $l=n^3-3n-1$. We choose the specific string $z=\dbraleft 1,2,\ldots,n\dbraright_{n^2}$ in $I_n$ and consider the set $A_{n,m,l,z}$, which equals $\{x\#\in\Sigma^{m-l}\mid x\#z\in L_n^{(+)}\}$.
There exists a subset $S$ of $A_{n,m,l,z}$ with $|S|\leq p(n)$ satisfying the lemma. For convenience, we introduce the notation $P_n$ to denote the set $\{\dbraleft i_1,i_2,\ldots,i_n \dbraright_{n^2}\in I_n \mid \text{ $(i_1,i_2,\ldots,i_n)$ is a permutation of $(1,2,\ldots,n)$ }\}$.
Choose a string $y$ in $I_n- \{u\mid u\in P_n\text{ or } \exists w [w\#\in S \wedge Set(w)=Set(u)]\}$. Since $y\# z\in L_n^{(+)}$, the string $y\#$ belongs to $A_{n,m,l,z}$. It also follows by the definition of $y$ that $\{w\#y\mid w\#\in S\}\subseteq L_n^{(+)}$.  The lemma then concludes that $\{x\#y\mid x\#\in A_{n,m,l,z}\}\subseteq L_n^{(+)}$.
Since $y\#\in A_{n,m,l,z}$, we obtain $y\# y\in L_n^{(+)}$ in  contradiction to the definition of $\LL_N$.
In conclusion, $\LL_N\notin \onecequal$ follows.
\end{proofof}


For the proof of Lemma \ref{property-1cequal}, it is useful to recall from \cite{Yam22a} another matrix way of defining $\onep$ using \emph{one-way probabilistic finite automata} (or 1pfa's, for short) instead of using 1nfa's (as given in Definition \ref{def:class-by-gap}(3)).
A 1pfa $N$ is defined as a sextuple $(Q,\Sigma, \nu_{ini}, \{M_{\sigma}\}_{\sigma\in\check{\Sigma}}, Q_{acc}, Q_{rej})$, where $\nu_{ini}$ is an initial state (row) vector of $|Q|$ dimension with rational entries, each $M_{\sigma}$ is a $|Q|\times|Q|$ \emph{stochastic matrix}\footnote{A nonnegative square matrix $M$ is \emph{stochastic} if every row of $M$ sums up to exactly $1$. In the past literature, columns of a matrix are used for the definition of stochasticity.} with rational entries, and $Q_{acc}$ and $Q_{rej}$ ($\subseteq Q$) are respectively the set of accepting inner states and that of rejecting inner states. For each index $h\in\{acc,rej\}$, we prepare a row vector $\xi_{Q_{h}}$ whose $q$-entry is $1$ if $q\in Q_h$ and $0$ otherwise. Given a length-$n$ string  $x=x_1x_2\cdots x_n$ in $\check{\Sigma}^*$, we define $M_x$ as $M_{x_1}M_{x_2}\cdots M_{x_n}$. The acceptance (resp., rejection) probability of $N$ on an input $x\in\Sigma^*$ is set to be $p_{acc}(x)=\nu_{ini}M_{\triangleright x\triangleleft}\xi_{Q_{acc}}^T$ (resp., $p_{rej}(x)=\nu_{ini}M_{\triangleright x\triangleleft}\xi_{Q_{rej}}^T$), where the superscript $T$ means ``transpose''.


In \cite{Yam22a}, $\onep$ is introduced using families of polynomial-size 1pfa's with unbounded-error probability as follows.
We say that $\{M^{(n)}_{\sigma}\}_{\sigma\in\check{\Sigma},n\in\nat}$ is \emph{polynomially manageable} if there is a polynomial $p$ such that any entries $r$ of each matrix $M^{(n)}_{\sigma}$ are lower-bounded by $1/p(n)$.

\begin{lemma}{\rm \cite{Yam22a}}\label{onecequalC-character}
Given a family $\LL=\{(L_n^{(+)},L_n^{(-)})\}_{n\in\nat}$ of promise problems, $\LL$ is in $\onep$ iff there exists a family $\{M_n\}_{n\in\nat}$ of polynomial-size 1pfa's $M_n$ of the form
$(Q_n,\Sigma,\nu_{ini}^{(n)}, \{M_{\sigma}^{(n)}\}_{\sigma\in\check{\Sigma}},  Q_{acc,n},Q_{rej,n})$
satisfying that (1) $\{M^{(n)}_{\sigma}\}_{\sigma\in\check{\Sigma},n\in\nat}$ is polynomially manageable and (2) for any $n\in\nat$, $p_{acc,n}(x)>1/2$ holds for all $x\in L_n^{(+)}$ and $p_{rej,n}(x)\geq 1/2$
holds for all $x\in L_n^{(-)}$.
\end{lemma}


For $\onecequal$, we can give a similar characterization using 1pfa's.

\begin{lemma}\label{exact-counting-1pfa}
Given a family $\LL=\{(L_n^{(+)},L_n^{(-)})\}_{n\in\nat}$ of promise problems, $\LL$ is in $\onecequal$ iff there exists a family $\{M_n\}_{n\in\nat}$ of polynomial-size 1pfa's $M_n$ of the form $(Q_n,\Sigma,\nu_{ini}^{(n)}, \{M_{\sigma}^{(n)}\}_{\sigma\in\check{\Sigma}},  Q_{acc,n},Q_{rej,n})$ such that (1) $\{M^{(n)}_{\sigma}\}_{\sigma\in\check{\Sigma},n\in\nat}$ is polynomially manageable and (2) for any $n\in\nat$, $p_{acc,n}(x)=1/2$ holds for all strings $x\in L_n^{(+)}$ and $p_{acc,n}(x)\neq 1/2$ holds for all $x\in L_n^{(-)}$, where $p_{acc,n}(x)$ indicates the acceptance probability of $M_n$ on input $x$.
\end{lemma}

\begin{proofsketch}
Any 1nfa can be treated as a 1pfa whose computation paths are produced with equal probability. This shows the ``only if'' direction.
For the ``if'' direction, we take a family $\{M_n\}_{n\in\nat}$ of polynomial-size 1pfa's satisfying all the conditions of the lemma. When $M_n$ in inner state $p$ produces $k_p$ branches with inner states $q_1,q_2,\ldots,q_{k_p}$ with probabilities $\frac{m_1}{e_{n,p}},  \frac{m_2}{e_{n,p}}, \ldots, \frac{m_{k_p}}{e_{n,p}}$, respectively, we can produce $m_1e_{n,p},m_2e_{n,p},\ldots,m_{k_p}e_{n,p}$ branches with the same inner states $q_1,q_2,\ldots,q_{k_p}$, respectively, where $e_{n,p}$ is a polynomially-bounded positive integer. Notice that $\sum_{i=1}^{k_p}m_i = e_{n,p}$. It follows that each computation path $\gamma=(p_1,p_2,\ldots,p_{|\rhd{x}\lhd|})$ of $M_n$ generated with probability $\alpha$ is simulated by $\alpha\prod_{i=1}^{|\rhd{x}\lhd|} e_{n,p_i}$ computation paths of $N_n$. This implies that $\#N_n(x)+\#\overline{N}_n(x) =\sum_{\gamma} (\prod_{i=1}^{|\rhd{x}\lhd|}e_{n,p_i})$, where $\gamma=(p_1,p_2,\ldots,p_{|\rhd{x}\lhd|})$ ranges over all possible computation paths of $M_n$. Moreover, it follows that $p_{acc,n}=1/2$ iff $\#N_n(x)=1/2(\#N_n(x)+\#\overline{N}_n(x))$.
\end{proofsketch}

Now, we return to the proof of Lemma \ref{property-1cequal}.

\vs{-2}
\begin{proofof}{Lemma \ref{property-1cequal}}
Let $\LL=\{(L_n^{(+)},L_n^{(-)})\}_{n\in\nat}$ denote any family of promise problems in $\onecequal$.
By Lemma \ref{exact-counting-1pfa}, there exists a family $\NN=\{N_n\}_{n\in\nat}$ of polynomial-size 1pfa's solving $\LL$ with the following conditions: for any index $n\in\nat$,  $p_{acc,n}(x)=1/2$ holds for all $x\in L_n^{(+)}$ and holds $p_{acc,n}(x)\neq 1/2$ for all $x\in L_n^{(-)}$.
The polynomial state complexity of $\NN$ helps us choose a suitable polynomial $p$ for which $N_n$'s inner-state set $Q_n$ has at most $p(n)$ elements for any index $n\in\nat$.
In the following argument, we focus our attention only on valid strings in $L_n^{(+)}\cup L_n^{(-)}$.

Hereafter, we fix $n$ arbitrarily and
assume that $N_n$ has the form $(Q_n,\Sigma, \nu_{ini}^{(n)}, \{M_{\sigma}^{(n)}\}_{\sigma\in\check{\Sigma}}, Q_{acc,n}, Q_{rej,n})$ with the (row) vectors $\xi_{Q_{acc,n}}$ and $\xi_{Q_{rej,n}}$ induced from $Q_{acc,n}$ and $Q_{rej,n}$, respectively.
We remark that  $p_{acc,n}(x)=\nu_{ini}^{(n)} M_{\triangleright x \triangleleft}^{(n)} \xi_{Q_{acc,n}}^T$ and  $p_{rej,n}(x)=\nu_{ini}^{(n)} M_{\triangleright x \triangleleft}^{(n)} \xi_{Q_{rej,n}}^T$ hold for any $x\in L_n^{(+)}\cup L_n^{(-)}$.

Let us take two integers $m,l\in\nat$ with $l<m-1$ and a string $z\in\Sigma^l$ arbitrarily.
The lemma is trivial when $l=0$. Therefore, we assume that $l>0$. For the set $A_{n,m,l,z}$ in the premise of the lemma, if $|A_{n,m,l,z}|\leq |Q_n|$, then it suffices to define $S$ to be  $A_{n,m,l,z}$ because of $|Q_n|\leq p(n)$. In what follows, we examine the case of $|A_{n,m,l,z}|>|Q_n|$.

Let us consider the set $V_n=\{\nu_{ini}^{(n)} M_{\triangleleft w}^{(n)} \mid w\in A_{n,m,l,z}\}$ induced from $A_{n,m,l,z}$.
Since $V_n$ consists only of $|Q_n|$-dimensional vectors, it is possible to  select a maximal subset $S'$ of linearly independent vectors in $V_n$.
Note that $|S'|\leq |Q_n|\leq p(n)$. From the set $S'$, we define $S$ to be the set $\{w\in A_{n,m,l,z}\mid \nu_{ini}^{(n)} M_{\triangleright w}^{(n)} \in S'\}$. It then follows that, for any valid string $x\in L^{(+)}_n\cup L^{(-)}_n$, there always exists a series $\{\alpha^{(x)}_w\}_{w\in S}$ of nonnegative real numbers for which  $\nu_{ini}^{(n)} M_{\triangleright x}^{(n)} = \sum_{w\in S} \alpha^{(x)}_w \nu_{ini}^{(n)} M_{\triangleright w}^{(n)}$ holds with
$\sum_{w\in S} \alpha^{(x)}_w =1$.
Let $y$ denote any string in $(L_n^{(+)}\cup L_n^{(-)})\cap \Sigma^l$ for which
$\{wy\mid w\in S\} \subseteq L_n^{(+)}$. Notice that, for any $w\in S$, since $wy\in L_n^{(+)}$, we obtain $p_{acc,n}(wy)=1/2$. For every $x\in A_{n,m,l,z}$, it thus follows that  $p_{acc,n}(xy)= \nu_{ini}^{(n)} M_{\triangleright x}^{(n)} M_{y\triangleleft}^{(n)} \xi_{Q_{acc,n}}^T = (\sum_{w\in S}\nu_{ini}^{(n)} M_{\triangleright w}^{(n)}) M_{y\triangleleft}^{(n)} \xi_{Q_{acc,n}}^T = \sum_{w\in S} \alpha^{(x)}_w p_{acc,n}(wy) = \frac{1}{2}\sum_{w\in S}\alpha^{(x)}_{w} =  1/2$. This implies that $xy$ belongs to $L_n^{(+)}$.

This completes the proof.
\end{proofof}

\subsection{Complexity Class 1P}\label{sec:onep-class}

We turn our attention to $\onep$, whose elements are characterized by gap function families in $\onegap$. In the polynomial-time setting, $\pp$ is known to be as hard as the polynomial hierarchy \cite{Tod91}. A key statement of this subsection is the separation between $\oneparity$ and $\onep$.

\begin{theorem}\label{parity-vs-onep}
$\oneparity\nsubseteq \onep$.
\end{theorem}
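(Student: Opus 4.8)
The plan is to separate the two classes by exhibiting one explicit family that lies in $\oneparity$ via a path-counting argument but is excluded from $\onep$ by a sign-rank lower bound on the Hadamard matrix. Fix the alphabet $\Sigma=\{0,1,\#\}$. For each $n\in\nat$, let the valid instances be the strings $u\# v$ with $u=u_1\cdots u_n$ and $v=v_1\cdots v_n$ in $\{0,1\}^n$, and set $L_n^{(+)}=\{u\# v\mid \sum_{i=1}^{n}u_iv_i \text{ is odd}\}$ and $L_n^{(-)}=\{u\# v\mid \sum_{i=1}^{n}u_iv_i \text{ is even}\}$. First I would show $\LL=\{(L_n^{(+)},L_n^{(-)})\}_{n\in\nat}\in\oneparity$ by constructing a 1nfa $M_n$ with $O(n)$ states whose accepting paths count the integer inner product $\sum_i u_iv_i$. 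The key observation enabling a small state complexity is that the offset is position-independent: while scanning $u$, at each $j$ with $u_j=1$, $M_n$ may nondeterministically \emph{select} $j$; the number of symbols strictly between $u_j$ and $v_j$ is always $(n-j)+1+(j-1)=n$, so $M_n$ need only count exactly $n$ subsequent symbols and then accept iff the next symbol, which is necessarily $v_j$, equals $1$. Thus $\#M_n(u\# v)=\sum_i u_iv_i$, whose parity matches the YES/NO split, witnessing $\LL\in\oneparity$ through $f_n=\#M_n\in\onesharp$.

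Next I would assume for contradiction that $\LL\in\onep$ and invoke the probabilistic-automaton characterization of $\onep$ from \cite{Yam22a} (the analogue of the 1pfa characterization already used for $\onecequal$): there is a family $\{N_n\}_{n\in\nat}$ of polynomial-size 1pfa's, with inner-state sets $Q_n$ satisfying $|Q_n|\le p(n)$ for a fixed polynomial $p$, such that $p_{acc,n}(u\# v)>1/2$ exactly when $\sum_i u_iv_i$ is odd and $p_{acc,n}(u\# v)\le 1/2$ otherwise. I would then factor the acceptance probability as a bilinear form, writing $\alpha_u=\nu_{ini}^{(n)}M_{\triangleright}^{(n)}M_u^{(n)}$ and $\gamma_v=M_{\#}^{(n)}M_v^{(n)}M_{\triangleleft}^{(n)}\xi_{Q_{acc,n}}^{T}$ in $\real^{|Q_n|}$, so that $p_{acc,n}(u\# v)=\alpha_u\gamma_v$. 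Consequently the $2^n\times 2^n$ matrix $P_n$ defined by $P_n[u,v]=p_{acc,n}(u\# v)$ has rank at most $|Q_n|\le p(n)$.

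Finally I would connect this to the Hadamard matrix $H_n$ defined by $H_n[u,v]=(-1)^{\langle u,v\rangle}$, where $\langle u,v\rangle=\sum_i u_iv_i$. Setting $G_n=P_n-\tfrac12 J$ with $J$ the all-ones matrix, the threshold conditions give $G_n[u,v]>0$ on YES positions (where $H_n=-1$) and $G_n[u,v]\le 0$ on NO positions (where $H_n=+1$). To turn this weak threshold condition into a strict sign representation, I would choose $\epsilon>0$ smaller than the minimum of $G_n$ over the finitely many YES positions and consider $-G_n+\epsilon J=-P_n+(\tfrac12+\epsilon)J$: on NO positions it is at least $\epsilon>0$, and on YES positions it is strictly negative, so it strictly sign-represents $H_n$, while its rank is at most $p(n)+1$. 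By Forster's sign-rank lower bound, sign-rank$(H_n)\ge 2^{n}/\|H_n\|_{2}=2^{n/2}$, whence $p(n)+1\ge 2^{n/2}$, which fails for all large $n$. This contradiction yields $\LL\notin\onep$ and hence $\oneparity\nsubseteq\onep$.

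The routine parts are the 1nfa construction and the bilinear factorization; the genuine obstacle is the sign-rank bound, which is exactly where Forster's theorem must be invoked, since only the \emph{sign pattern} of the low-rank matrix $G_n$ is constrained and an ordinary rank argument (as used for $\onecequal$ in Lemma~\ref{property-1cequal}) is insufficient for a threshold class. A secondary technical point worth care is the passage from the non-strict inequality $p_{acc,n}\le 1/2$ on NO instances to a strict sign representation, handled above by the $\epsilon$-perturbation, and the verification that the inner-product family truly resides in $\oneparity$ with only $O(n)$ states via the constant offset-$n$ alignment.
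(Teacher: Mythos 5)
Your proof is correct, but it takes a genuinely different route from the paper in the step that actually separates the classes. Both arguments agree on the skeleton: a hard family based on the inner product mod~2 (you use single bits with an $O(n)$-state ``count exactly $n$ symbols'' alignment trick; the paper packs the bits into $\floors{\log n}$-bit blocks remembered in inner states --- both give membership in $\oneparity$ by the same path-counting), followed by the 1pfa characterization of $\onep$ and the bilinear factorization $p_{acc,n}(u\# v)=\alpha_u\gamma_v$, which bounds the rank of the acceptance-probability matrix by $p(n)$. The divergence is in how low rank is turned into a contradiction. The paper stays elementary: it extracts from the rank bound a structural statement (Lemma~\ref{property-onep}) saying that once a suffix set realizes all $2^{p(n)}$ answer patterns over a spanning set of prefixes, every prefix must admit both a YES and a NO completion, and then it exhibits a prefix (e.g.\ the all-zero one) with only NO completions. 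You instead observe that the threshold condition forces the $\pm1$ inner-product (Hadamard) matrix to have sign-rank at most $p(n)+1$ --- your $\epsilon$-perturbation correctly converts the non-strict inequality on NO instances into a strict sign representation of rank $\mathrm{rank}(P_n)+1$ --- and you invoke Forster's theorem, $\mathrm{signrank}(H_n)\geq 2^{n}/\|H_n\|=2^{n/2}$, to finish. Your route is shorter and isolates exactly why a plain rank argument (as in Lemma~\ref{property-1cequal} for $\onecequal$) does not suffice for a threshold class, but it imports a nontrivial external theorem; the paper's route is longer but self-contained linear algebra. Both are valid proofs of $\oneparity\nsubseteq\onep$.
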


Meanwhile, we postpone the description of the proof of this theorem.
The theorem makes us
obtain another class separation between $\onesp$ and $\oneparity$.

\begin{proposition}\label{onesp-vs-oneparity}
$\onesp\subsetneqq \oneparity$.
\end{proposition}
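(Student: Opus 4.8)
The plan is to prove Proposition \ref{onesp-vs-oneparity}, namely $\onesp\subsetneqq\oneparity$, by establishing the inclusion $\onesp\subseteq\oneparity$ directly and then deriving the properness $\onesp\neq\oneparity$ from the already-stated separation Theorem \ref{parity-vs-onep} ($\oneparity\nsubseteq\onep$) together with results proven earlier in the excerpt.

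First I would prove the inclusion $\onesp\subseteq\oneparity$. Let $\LL=\{(L_n^{(+)},L_n^{(-)})\}_{n\in\nat}$ be any family in $\onesp$, witnessed by a family $\{M_n\}_{n\in\nat}$ of polynomial-size 1nfa's and a gap function family $\{(f_n,D_n)\}_{n\in\nat}\in\onegap$ satisfying $f_n(x)=1$ for all $x\in L_n^{(+)}$ and $f_n(x)=0$ for all $x\in L_n^{(-)}$, where $f_n(x)=\#M_n(x)-\#\overline{M}_n(x)$. The goal is to build a counting function family in $\onesharp$ whose values have the correct parity, i.e.\ odd on $L_n^{(+)}$ and even on $L_n^{(-)}$. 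The natural move is to apply Lemma \ref{cequal-path-number} (or a direct construction in its spirit) to obtain a 1nfa $N_n$ with $\#N_n(x)-\#\overline{N}_n(x)=(\#M_n(x)-\#\overline{M}_n(x))^2=f_n(x)^2$, so that the gap is $1$ on positive instances and $0$ on negative instances, exactly as before but now guaranteed nonnegative. From such an $N_n$, using the branching normal form of Lemma \ref{branching-normal-form} I can assume $\#N_n(x)+\#\overline{N}_n(x)=c^{|x|+2}$ for a fixed constant $c$, and then define a counting function $g_n(x)=\#N_n(x)$. Since $\#N_n(x)-\#\overline{N}_n(x)=f_n(x)^2$ and the total path count is the fixed even number $c^{|x|+2}$ (taking $c$ even, or adjusting by a parity-preserving padding path), the parity of $\#N_n(x)$ is controlled precisely by $f_n(x)^2\bmod 2$, which is $1$ when $x\in L_n^{(+)}$ and $0$ when $x\in L_n^{(-)}$. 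This yields a family in $\onesharp$ whose values are odd on $L_n^{(+)}$ and even on $L_n^{(-)}$, so $\LL\in\oneparity$.

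For the properness, I would argue as in the other separation proofs of this section. Suppose toward a contradiction that $\onesp=\oneparity$. By Proposition \ref{onecequal-vs-onep} we have $\onecequal\subseteq\onep$, and by Proposition \ref{onesp-vs-onecequal} we have $\onesp\subseteq\onecequal$; chaining these gives $\onesp\subseteq\onep$. Combined with the assumption $\oneparity=\onesp$, this would force $\oneparity\subseteq\onep$, directly contradicting Theorem \ref{parity-vs-onep}, which asserts $\oneparity\nsubseteq\onep$. Hence $\onesp\neq\oneparity$, and together with the inclusion this establishes $\onesp\subsetneqq\oneparity$.

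The main obstacle I anticipate is the bookkeeping in the inclusion step: I must ensure the total number of computation paths has a fixed, known parity independent of the input, so that the parity of the accepting-path count faithfully mirrors the value of $f_n(x)^2$. The branching normal form delivers a total path count of the form $c^{|x|+2}$, but I need this quantity to be even (equivalently, I should pick $c$ even or append a single dummy rejecting branch) so that $\#N_n(x)$ and $f_n(x)^2$ share the same parity; a careless choice of $c$ could flip the parity uniformly and spoil the correspondence. Verifying that the squaring construction of Lemma \ref{cequal-path-number} composes correctly with the branching normal form while keeping the state complexity polynomial is the one place requiring genuine care, though each ingredient is already available in the excerpt.
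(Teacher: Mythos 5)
Your properness argument is exactly the paper's: assume $\onesp=\oneparity$, chain $\onesp\subseteq\onecequal\subseteq\onep$ (Propositions \ref{onesp-vs-onecequal} and \ref{onecequal-vs-onep}) to get $\oneparity\subseteq\onep$, and contradict Theorem \ref{parity-vs-onep}. That half is fine.

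The inclusion $\onesp\subseteq\oneparity$, however, has a genuine gap, and it is precisely the ``obstacle'' you flagged but did not resolve. For any 1nfa $N_n$, the total path count $T(x)=\#N_n(x)+\#\overline{N}_n(x)$ and the gap $g(x)=\#N_n(x)-\#\overline{N}_n(x)$ satisfy $T(x)+g(x)=2\#N_n(x)$, so they always have the \emph{same} parity. Your construction demands a fixed total $T(x)=c^{|x|+2}$ (whose parity is constant once $c$ and $|x|$ are fixed) together with a gap $f_n(x)^2$ that is $1$ on positive instances and $0$ on negative ones, i.e., a gap whose parity varies with $x$. No machine can satisfy both; choosing $c$ even breaks the positive instances and choosing $c$ odd breaks the negative ones, and appending a dummy branch shifts $T$ and $g$ by the same amount, so it cannot repair this. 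Moreover, even if one could arrange a known total, the identity $\#N_n(x)=(T(x)+g(x))/2$ shows that the parity of $\#N_n(x)$ is \emph{not} determined by $g(x)\bmod 2$ alone (compare $T=4,g=0$ giving $2$ with $T=6,g=0$ giving $3$), so the claim that ``the parity of $\#N_n(x)$ is controlled precisely by $f_n(x)^2\bmod 2$'' is false. The squaring step via Lemma \ref{cequal-path-number} is also unnecessary, since $f_n$ already takes only the values $0$ and $1$ on valid inputs.

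The paper's fix is to exploit the parity identity in the opposite direction: build $N_n$ so that $\#N_n(x)$ equals the \emph{total} path count of $M_n$ (each accepting path of $M_n$ stays a single accepting path; each non-accepting path branches into one accepting and one rejecting path, so $\#N_n(x)=\#M_n(x)+\#\overline{M}_n(x)$). Since $\#M_n(x)+\#\overline{M}_n(x)\equiv \#M_n(x)-\#\overline{M}_n(x)=f_n(x)\pmod 2$, one gets $\#N_n(x)=2\#M_n(x)-1$ (odd) on $L_n^{(+)}$ and $\#N_n(x)=2\#M_n(x)$ (even) on $L_n^{(-)}$, with no normal form or squaring needed. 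You would need to replace your counting step with something of this kind for the inclusion to go through.
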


\begin{proof}
We first claim that $\onesp\subseteq \oneparity$. Let $\LL=\{(L_n^{(+)},L_n^{(-)})\}_{n\in\nat}$ denote any family in $\onesp$ and choose a family $\{(f_n,D_n)\}_{n\in\nat}$ in $\onegap$ that ``witnesses'' the membership of $\LL$ to $\onesp$.
Let $\{M_n\}_{n\in\nat}$ denote a family of polynomial-size 1nfa's satisfying $f_n(x)=\#M_n(x)-\#\overline{M}_n(x)$ for all $n\in\nat$ and all $x\in D_n$. We then modify $M_n$ into $N_n$ as follows.
The machine $N_n$ first runs $M_n$ on input $x$. On each computation path  of $M_n$, if $M_n$ reaches an accepting state, then $N_n$ does the same; otherwise, $N_n$ branches out into two extra computation paths and $N_n$ accepts on one path and rejects on the other path.
If $x\in L_n^{(+)}$, then we obtain $M_n(x)-\overline{M}_n(x)=1$ because of $f_n(x)=1$, and thus $\overline{M}_n(x)=M_n(x)-1$ follows. Hence, $\#N_n(x)$ equals $\#M_n(x)+\#\overline{M}_n(x)=2\#M_n(x)-1$, which is an odd number.
In contrast, if $x\in L_n^{(-)}$, then $\#N_n(x)$ equals $\#M_n(x)+\#\overline{M}_n(x)=2\#M_n(x)$, which is an even number, because of  $f_n(x)=0$.
Finally, we define $g_n$ by setting $g_n(x)=\#N_n(x)$ for all $x$. Since $\{(g_n,D_n)\}_{n\in\nat}$ belongs to $\onesharp$, we conclude that $\LL$ is in $\oneparity$.

Next, we assert that $\oneparity\neq \onesp$.  Assume otherwise. We then obtain $\oneparity\subseteq \onecequal$ since $\onesp\subseteq \onecequal$ by Proposition \ref{onesp-vs-onecequal}. As a consequence, $\oneparity\subseteq \onep$ follows. However, this contradicts Theorem \ref{parity-vs-onep}. Therefore, we conclude that $\oneparity\neq \onesp$.
\end{proof}

To prove Theorem \ref{parity-vs-onep}, nonetheless, we need to resort a close connection between one-tape linear-time TMs and finite automata and adapt it in our setting, in particular, between $1\mbox{-}\mathrm{PLIN/lin}$ and $\onep$, where $1\mbox{-}\mathrm{PLIN/lin}$ consists of languages recognized by one-tape linear-time unbounded-error probabilistic TMs with linear-size advice.
In a demonstration of the power of $\cfl$ over $1\mbox{-}\mathrm{PLIN/lin}$, a useful property of $1\mbox{-}\mathrm{PLIN/lin}$ was presented in \cite[Lemma 4.7]{Yam10}.
A similar property can be shown in our setting of promise problem families by exploiting the connection between $1\mbox{-}\mathrm{PLIN/lin}$ and $\onep$.

\begin{lemma}\label{property-onep}
Let $\{(L_n^{(+)},L_n^{(-)})\}_{n\in\nat}$ be any family in $\onep$ over alphabet $\Sigma$. There exists a polynomial $p$ that satisfies the following statement. Let $n$, $m$, and $l$ denote any numbers in $\nat^{+}$ with $l\leq m-1$ and $p(n)<|\Sigma|^{m-l}$.
If $1\leq |(L_n^{(+)}\cup L_n^{(-)})\cap \Sigma^{m}|<|\Sigma|^{m}$, then there exist a number $k\in\nat^{+}$ and a set $S=\{w_1,w_2,\ldots,w_{k}\} \subseteq \Sigma^{m-l}$ with $k\leq p(n)$ for which the following implication holds: for any subset $R\subseteq \Sigma^l$, if $|\{a(y) \mid y\in R \}|\geq 2^{k}$, then it follows that, for any $x\in \Sigma^{m-l}$, there exists a pair $y,y'\in R$ satisfying that  $xy\in L_n^{(+)}$ and $xy'\notin L_n^{(+)}$, where $a(y) = a^{(y)}_1a^{(y)}_2\cdots a^{(y)}_{k}$, and  $a^{(y)}_i=1$ if $w_iy\in L_n^{(+)}$, and $a^{(y)}_i=0$ if $w_iy\notin L_n^{(+)}$ for any index $i\in[k]$.
\end{lemma}

Intuitively, the lemma says that it is possible to choose a certain number of prefixes $w_1,w_2,\ldots,w_k$ such that, for any prefix $x$, two suffixes $y$ and $y'$ can differentiate between $xy$ and $xy'$ regarding their memberships to $L^{(+)}_n$ if at least $2^k$ suffixes $y$ produce distinct sequences $a(y)$ expressing the memberships of $w_iy$'s to $L^{(+)}_n$.

\ms

Using Lemma \ref{property-onep}, we wish to prove the desired class separation of Theorem \ref{parity-vs-onep}.
For two binary strings $x$ and $y$ of equal length, the \emph{bitwise inner product} $x\odot y$ of $x$ and $y$ is defined to be $\sum_{i=1}^{n}x_iy_i$, where $x=x_1x_2\cdots x_n$ and $y=y_1y_2\cdots y_n$ with $x_i,y_i\in\{0,1\}$ for all $i\in[n]$.

\vs{-2}
\begin{proofof}{Theorem \ref{parity-vs-onep}}
Assuming that $\oneparity\subseteq \onep$, we intend to lead to a  contradiction. We first define an example family $\LL_{\parity} =\{(L_n^{(+)},L_n^{(-)})\}_{n\in\nat}$ of promise problems as follows. Fix $n\in\nat$ arbitrarily and define $J_n$ to be the set $\{u_1\dollar u_2\dollar \cdots \dollar u_n\mid \forall i\in[n] [u_i\in\{0,1\}^{\floors{\log{n}}}]\}$, where $\dollar$ is a designated separator not in $\{0,1\}$. Notice that $|u|=n(\floors{\log{n}}+1) -1$ for each string $u\in J_n$.
Given two strings $u,v\in J_n$, we further define $u\:\square\: v$ to express the value   $\sum_{j=1}^{n}(u_j\odot v_j\:(\mathrm{mod}\:2))$, where $u=u_1\dollar u_2\dollar \cdots \dollar u_n$ and  $v=v_1\dollar v_2\dollar \cdots \dollar v_n$.
We further define $L_n^{(+)} = \{u\# v\mid u,v\in J_n, u\:\square\: v\: \equiv 0 \:(\mathrm{mod}\:2)\}$ and $L_n^{(-)}= \{u\# v\mid u,v\in J_n\} - L_n^{(+)}$. It follows that $|u\# v|=2|u|+1 = 2n(\floors{\log{n}}+1)-1$ for any string $u\# v\in L_n^{(+)}\cup L_n^{(-)}$.

Let us consider the 1nfa $N_n$ that works as follows. On input $x$ of the form $u\# v$ with $u,v\in J_n$, guess  (i.e., nondeterministically choose) a number $i\in[n]$, read $u_i$ out of  $u$, and remember $(i,u_i)$ in the form of inner states.
This is possible because of $|u_i|=\floors{\log{n}}$.
After passing $\#$, read $v_i$ out of $v$ and calculate the value $a_i=u_i\odot v_i\;(\mathrm{mod}\:2)$. If $a_i=1$, then accept $x$, and otherwise, reject $x$.
From the definition of $N_n$, it follows that $\LL_{\parity}$ is in  $\oneparity$. Our assumption then concludes that $\LL_{\oplus}$ belongs to  $\onep$.

Apply Lemma \ref{property-onep} and take a polynomial $p$ provided by the lemma.  We fix a sufficiently large number $n\in\nat$ satisfying $p(n)<2^{(m+1)/2}$, where $m=2n(\floors{\log{n}}+1)-1$.
We then set $l=(m+1)/2$. Notice that $m-l=l-1$. Since $m>1$, we obtain $l<m-1$.
For these values $(l,m)$, there is a subset $S=\{w_1,w_2,\ldots,w_{k}\}$ of $\{0,1\}^{m-l}$ that satisfies the lemma. Note that $1\leq k\leq p(n)$. For each string $r=r_1r_2\cdots r_{k}$ with $r_i\in\{0,1\}$ for all $i\in[k]$,
we choose a string $y_r\in\{0,1\}^{l}$ satisfying $w_i\:\square\: y_r \equiv r_i\:(\mathrm{mod}\:2)$ for all indices $i\in[k]$.
Remember that all chosen $y_r$'s are distinct.
We further take a string $x\in\{0,1\}^{m-l}$ satisfying $x\:\square\: y_r \equiv 0\:(\mathrm{mod}\:2)$ for all $r\in\{0,1\}^{k}$.
To draw a conclusion of the lemma, we take the set $R = \{\#y_r\mid r\in \{0,1\}^{k}\}$.
For each string $\#y$ in $R$, we define $a(\#y) = a_1a_2\cdots a_{k}$ by setting $a_i \equiv w_i\:\square\: y\:(\mathrm{mod}\:2)$ for any index $i\in[k]$.
The definition of $R$ implies that $|\{a(\#y) \mid \#y\in R \}| \geq   2^{k}$.
By the conclusion of the lemma, a certain pair $\#y,\#y'\in R$ exists and it satisfies both $x\#y\in L_n^{(+)}$ and $x\#y'\notin L_n^{(+)}$. In other words, $x\:\square\: y\equiv 1\:(\mathrm{mod}\:2)$ and $x\:\square y'\equiv 0\:(\mathrm{mod}\:2)$. This is in  contradiction to the choice of $x$. Therefore, we conclude that $\LL_{\oplus}\notin \onep$.
\end{proofof}


The proof of Lemma \ref{property-onep} requires a nice property regarding the success probability of 1pfa's. Recall the description of these 1pfa's from Section \ref{sec:class-onecequal}.
It is proven in \cite[Claim 2]{Yam22a} that, for any given 1pfa $M$, there exists another 1pfa $N$ with $L(M)=L(N)$ for which, for any $x$, if $M$ accepts $x$ with probability more than $1/2$ (resp., rejects $x$ with probability at least $1/2$), then $N$ accepts (resp., rejects) $x$ with probability \emph{more than} $1/2$.
It is possible to prove the same property for nonuniform families of unbounded-error 1pfa's.

\begin{lemma}\label{error-bound-qpfa}
Let $\LL=\{(L_n^{(+)},L_n^{(-)})\}_{n\in\nat} \in\onep$. There exist a polynomial $p$ and a family $\MM=\{M_n\}_{n\in\nat}$ of 1pfa's such that, for any $n\in\nat$, (1) $st(M_n)\leq p(n)$, (2) for all $x\in L_n^{(+)}$, $p_{acc,n}(x)>1/2$, and (3) for all $x\in L_n^{(-)}$, $p_{rej,n}(x)>1/2$.
\end{lemma}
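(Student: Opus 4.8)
The plan is to start from the $1$pfa characterization of $\onep$ recalled just above the lemma: there are a polynomial $p$ and a family $\{M_n\}_{n\in\nat}$ of polynomial-size $1$pfa's with $p_{acc,n}(x)>1/2$ on $L_n^{(+)}$ and $p_{acc,n}(x)\le 1/2$ on $L_n^{(-)}$. We may assume $Q_n=Q_{acc,n}\cup Q_{rej,n}$, so that $p_{acc,n}(x)+p_{rej,n}(x)=1$, by reclassifying every non-accepting state as rejecting; this preserves both conditions. The only instances violating the desired conclusion are the $x\in L_n^{(-)}$ with $p_{acc,n}(x)=1/2$ \emph{exactly}, so the task is to perturb each $M_n$ so that such ties are broken downward while genuinely accepting inputs stay strictly above $1/2$. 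This mirrors \cite[Claim 2]{Yam22a} for a single automaton, the new point being uniformity across the family.

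The key quantitative ingredient is a gap bound. Since all entries of $\nu_{ini}^{(n)}$ and of each $M_{\sigma}^{(n)}$ are rationals, let $D_n$ be a common denominator; then every entry of $M_{\triangleright x\triangleleft}^{(n)}$ is an integer multiple of $D_n^{-(|x|+2)}$, whence $p_{acc,n}(x)\in D_n^{-(|x|+2)}\integer$. Consequently, whenever $p_{acc,n}(x)\ne 1/2$ we have $|p_{acc,n}(x)-1/2|\ge 1/(2D_n^{|x|+2})$; in particular, on $L_n^{(+)}$ the excess $p_{acc,n}(x)-1/2$ is at least $1/(2D_n^{|x|+2})$.

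I would then realize a length-dependent downward correction through a ``decaying'' reference $1$pfa $R_n$ of constant state complexity. The machine $R_n$ first tosses a fair coin; on one branch it rejects outright, and on the other it runs a decay gadget that stays in a distinguished \emph{alive} state with probability $\rho_n=1/(2D_n)$ on each symbol read and otherwise falls into an absorbing accepting state, with the convention that remaining alive at the right endmarker leads to rejection. Then $p^{R}_{acc,n}(x)=\tfrac12\bigl(1-\mu_n(x)\bigr)$, where $\mu_n(x)=\rho_n^{\,|x|+2}=(2D_n)^{-(|x|+2)}$ is the survival probability, so $0<\mu_n(x)<D_n^{-(|x|+2)}$. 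Defining $N_n$ to run $M_n$ and $R_n$ each with probability $1/2$ gives $p^{N}_{acc,n}(x)-\tfrac12=\tfrac12\bigl(p_{acc,n}(x)-\tfrac12\bigr)-\tfrac14\mu_n(x)$. A short calculation using the gap bound then shows $p^{N}_{acc,n}(x)>1/2$ on $L_n^{(+)}$ (the scaled excess $\ge 1/(4D_n^{|x|+2})$ strictly dominates the correction $\tfrac14\mu_n(x)<1/(4D_n^{|x|+2})$) and $p^{N}_{acc,n}(x)<1/2$ on $L_n^{(-)}$ in both the strict and the tie cases, so that $p^{N}_{rej,n}(x)>1/2$ there. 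Since $st(N_n)\le st(M_n)+O(1)$, the family $\{N_n\}_{n\in\nat}$ is still polynomial-size, which yields the lemma.

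The main obstacle I anticipate is precisely that the margin $|p_{acc,n}(x)-1/2|$ is only guaranteed to be inverse-exponential in $|x|$, so no fixed additive or multiplicative tie-breaking penalty can work simultaneously for all input lengths; the correction itself must decay at a rate strictly faster than the gap, which is exactly what choosing the per-step survival probability below $1/D_n$ accomplishes. The remaining care is bookkeeping: making the decay gadget honest about the endmarkers and confirming that the added states together with the $n$-dependent rational $\rho_n$ do not disturb the polynomial bound on state complexity, since only $O(1)$ states are added and $\rho_n$ is absorbed into the transition matrices.
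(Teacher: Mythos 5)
The paper itself supplies no proof of this lemma (it is stated right after deferring to the single-automaton analogue, Claim~2 of \cite{Yam22a}), and your argument --- a common-denominator lower bound of order $D_n^{-O(|x|)}$ on $|p_{acc,n}(x)-1/2|$ whenever it is nonzero, combined with a length-decaying mixture that breaks the ties on $L_n^{(-)}$ strictly downward at a rate too small to overturn genuine gaps on $L_n^{(+)}$, all at a cost of $O(1)$ extra states --- is exactly the standard route for this statement and is correct. The one small repair: because the initial vector $\nu_{ini}^{(n)}$ is itself rational, $p_{acc,n}(x)$ lies in $D_n^{-(|x|+3)}\integer$ rather than $D_n^{-(|x|+2)}\integer$, so with your choice $\rho_n=1/(2D_n)$ the comparison can fail for short inputs when $D_n$ is large; either fold $\nu_{ini}^{(n)}$ into $M_{\triangleright}^{(n)}$ first (making the initial distribution a point mass) or take the per-step survival probability to be, say, $1/(2D_n^{2})$, so that the correction $\tfrac14\mu_n(x)$ still decays strictly faster than the guaranteed half-gap $\tfrac14 D_n^{-(|x|+3)}$.
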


We return to Lemma \ref{property-onep} and hereafter provide its proof.

\begin{proofof}{Lemma \ref{property-onep}}
Let $\LL=\{(L_n^{(+)},L_n^{(-)})\}_{n\in\nat}$ denote any family of promise problems over alphabet, say, $\Sigma$ in $\onep$.
By Lemma \ref{error-bound-qpfa}, we can take a family $\{N_n\}_{n\in\nat}$ of polynomial-size 1pfa's satisfying the following conditions for all indices $n\in\nat$:  (*) $p_{acc,n}(x)>1/2$ holds for any $x\in L_n^{(+)}$ and  $p_{rej,n}(x)>1/2$ holds for any $x\in L_n^{(-)}$.
In the following argument, we only consider valid strings in $L_n^{(+)}\cup L_n^{(-)}$ for any $n\in\nat$.
Let $Q_n$, $Q_{acc,n}$, and $Q_{rej,n}$ respectively denote the sets of all inner states, of all accepting inner states, and of all rejecting inner states of $N_n$. Note that there is a polynomial $p$ satisfying $|Q_n|\leq p(n)$ for all $n\in\nat$.

We fix $n$ arbitrarily. Let us consider three vectors $\nu_{ini}^{(n)}$,  $\xi_{Q_{acc,n}}$, and $\xi_{Q_{rej,n}}$, and a set $\{M_{\sigma}^{(n)}\}_{\sigma\in\check{\Sigma}}$ of probability transition matrices of $N_n$, where $\check{\Sigma}=\Sigma\cup\{\rhd,\lhd\}$.
Fix two numbers $m,l\in\nat^{+}$ satisfying $l\leq m-1$ and $p(n)<|\Sigma|^{m-l}$. Assume that $1\leq |L^{(n)}\cap\Sigma^{m}|<|\Sigma|^{m}$. We then take a maximal subset $S'$ of linearly independent vectors in $V_n= \{\nu_{ini}^{(n)} M_{\triangleright w}^{(n)} \mid w\in \Sigma^{m-l}\}$. Since $V_n$ contains only $|Q_n|$-dimensional vectors, there are at most $p(n)$ linearly independent vectors. Thus, $|S'|\leq p(n)$ follows.
Next, we define $S=\{w\in\Sigma^{m-l}\mid \nu_{ini}^{(n)} M_{\triangleright w}^{(n)} \in S'\}$. By setting $k=|S|$, we obtain $|S'| \leq k\leq p(n)$.
Let us assume that $S$ is expressed as $\{w_1,w_2,\ldots,w_{k}\}$.
For each string $y\in\Sigma^l$, we write $a(y)$ for the binary string $a_1a_2\cdots a_{k}$ satisfying that, for any index $i\in[k]$, $a_i=1$ if $w_iy\in L_n^{(+)}$, and $a_i=0$ if $w_iy\notin L_n^{(+)}$.
We then choose an arbitrary subset $R$ of $\Sigma^l$ for which $|\{a(y) \mid  y\in R  \}|\geq 2^{k}$. In the case that no such $R$ exists,
the lemma is vacuously true.

Fix an arbitrary string $x\in \Sigma^{m-l}$ and take a series $\{\alpha_i\}_{i\in[k]}$ of real numbers satisfying $\nu_{ini}^{(n)} M_{\triangleright x}^{(n)} = \sum_{i\in[k]}\alpha_i(\nu_{ini}^{(n)} M_{\triangleright w_i}^{(n)})$.
We next claim that $\sum_{i\in[k]}\alpha_i=1$. Given a (row) vector $\xi$ of $|Q_n|$-dimension, we succinctly write $\|\xi\|_{sum}$ for the sum of the values of all entries of $\xi$.
Note that $\|\nu_{ini}^{(n)} M_{\triangleright  w_i}^{(n)}\|_{sum}=1$ holds for all $i\in[k]$ since $M^{(n)}_{\triangleright w_i}$ is a stochastic matrix and $\|\nu_{ini}^{(n)}\|_{sum}=1$.
It then follows that  $\|\sum_{i\in[k]}\alpha_i(\nu_{ini}^{(n)} M_{\rhd w_i}^{(n)})\|_{sum} = \sum_{i\in[k]}\alpha_i \| \nu_{ini}^{(n)} M_{\triangleright w_i}^{(n)}\|_{sum}= \sum_{i\in[k]}\alpha_i$. From this fact, we instantly obtain $\sum_{i\in[k]}\alpha_i=1$ because of $\|\nu_{ini}^{(n)} M_{\triangleright x}^{(n)}\|_{sum} = 1$.

Given any $y\in\Sigma^l$ and any $i\in[k]$, there exists a series  $\{\beta_{iy}\}_{i,y}$ of real numbers for which $p_{acc,n}(w_iy) = 1/2+\beta_{iy}$. Note that $-1/2\leq \beta_{iy} \leq 1/2$ but $\beta_{iy}\neq0$ due to Condition (*).
It then follows that $p_{acc,n}(xy)= \nu_{ini}^{(n)} M_{\triangleright x}^{(n)} M_{y\triangleleft}^{(n)} \xi_{Q_{acc,n}}^T =
\sum_{i\in[k]} \alpha_i (\nu_{ini}^{(n)} M_{\triangleright w_i}^{(n)} ) M_{y\triangleleft}^{(n)} \xi_{Q_{acc,n}}^T
= \sum_{i\in[k]}\alpha_i p_{acc,n}(w_iy) =\sum_{i\in[k]} \alpha_i (\frac{1}{2}+\beta_{iy}) = \frac{1}{2}\sum_{i\in[k]}\alpha_i + \sum_{i\in[k]}\alpha_i\beta_{iy}
= \frac{1}{2} + \sum_{i\in[k]} \alpha_i\beta_{iy} = \frac{1}{2}+ \sum_{i\in[k]} (-1)^{1-a(y)[i]}|\beta_{iy}|\alpha_i$, where $a(y)[i]$ denotes the $i$th bit of the string $a(y)$.

The desired strings $y$ and $y'$ are defined as follows. We take a $k$-bit-long string $b=b_1b_2\cdots b_{k}$ such that $b_i=1$ if $\alpha_i>0$, and $b_i=0$ if $\alpha_i\leq 0$ for any index $i\in[k]$.
From this $b$, a string $y\in R$ is chosen to satisfy $a(y)=b$.
Such a string $y$ exists because $|\{a(y')\mid y'\in R\}|\geq 2^{k}$ implies $b\in\{a(y')\mid y'\in R\}$. For this $y$, we obtain
$(-1)^{1-a(y)[i]}\alpha_i = (-1)^{1-b_i}\alpha_i =|\alpha_i|$. Therefore, it follows that
$\sum_{i\in[k]} (-1)^{1-a(y)[i]} |\beta_{iy}|\alpha_i = \sum_{i\in[k]} |\beta_{iy}||\alpha_i|>0$. This instantly implies $p_{acc,n}(xy)>1/2$. For the other string $y'$, we first define $\bar{b}$ to be the bitwise negation of $b$ and then choose $y'$ to satisfy  $a(y')=\bar{b}$. This implies that $(-1)^{1-a(y')[i]}\alpha_i = (-1)^{1-\overline{b_i}}\alpha_i = (-1)^{b_i}\alpha_i = -|\alpha_i|$. It thus follows that $\sum_{i\in[k]}(-1)^{1-a(y')[i]}|\beta_{iy'}|\alpha_i = - \sum_{i\in[k]} |\beta_{iy'}||\alpha_i|<0$. From this inequality,  $p_{acc,n}(xy')<1/2$ follows. We thus conclude that $xy\in L_n^{(+)}$ and $xy'\notin L_n^{(+)}$.
\end{proofof}

\subsection{Implications to Partial Function Classes}\label{sec:implications}

In Sections \ref{sec:oneu-class}--\ref{sec:onep-class}, we have demonstrated numerous separations among counting complexity classes.
Here, we shift our attention to separations of function classes and wish to  strengthen Lemma \ref{inclusion-1F-to-1sharp} by verifying the two additional separations shown below.
The proofs of these separations heavily relay on the results of Theorem \ref{N-vs-cequal}(1) and $\oned\neq \onen$.

\begin{proposition}\label{separation-onef}
(1) $\onesharp\neq\onegap_{\geq0}$. (2) $\onef_{\integer}\neq \onegap$.
\end{proposition}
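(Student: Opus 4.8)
The plan for both separations is the same in spirit: assume the claimed equality of function classes, convert it into a collapse of two counting complexity classes, and contradict a separation already established in Sections~\ref{sec:oneu-class}--\ref{sec:onep-class}. Recall that $\onesharp\subseteq\onegap_{\geq0}$ and $\onef_{\integer}\subseteq\onegap$ hold by Lemma~\ref{inclusion-1F-to-1sharp}, so in each case only the properness needs to be argued.

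For part (1), I would first record a $\onegap_{\geq0}$-characterization of $\co\onecequal$. Given $\LL\in\co\onecequal$, its complement lies in $\onecequal$, witnessed by some $\GG=\{(g_n,D_n)\}_{n\in\nat}\in\onegap$ with $g_n(x)=0$ on $L_n^{(-)}$ and $g_n(x)\neq0$ on $L_n^{(+)}$. Applying the squaring construction behind Lemma~\ref{cequal-path-number} to the underlying 1nfa produces a gap function $h_n=g_n^2$, which lies in $\onegap_{\geq0}$ and satisfies $h_n(x)=0$ exactly when $g_n(x)=0$ and $h_n(x)>0$ exactly when $g_n(x)\neq0$; in particular $h_n(x)>0$ on $L_n^{(+)}$ and $h_n(x)=0$ on $L_n^{(-)}$. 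Conversely, any $h\in\onegap_{\geq0}$ with $h_n>0$ on $L_n^{(+)}$ and $h_n=0$ on $L_n^{(-)}$ already witnesses membership of $\co\LL$ in $\onecequal$. Hence $\LL\in\co\onecequal$ iff such an $h\in\onegap_{\geq0}$ exists. Since the rephrasing of $\onen$ in Section~\ref{sec:def-complexity-class} is verbatim the same statement with $\onesharp$ in place of $\onegap_{\geq0}$, the assumption $\onesharp=\onegap_{\geq0}$ would force $\onen=\co\onecequal$, contradicting $\onen\subsetneqq\co\onecequal$ from Theorem~\ref{N-vs-cequal}(1). Thus $\onesharp\subsetneqq\onegap_{\geq0}$.

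For part (2), I would show that $\onef_{\integer}=\onegap$ collapses $\onep$ into $\oned$. Take any $\LL\in\onep$ witnessed by $\FF=\{(f_n,D_n)\}_{n\in\nat}\in\onegap$ with $f_n(x)>0$ on $L_n^{(+)}$ and $f_n(x)\leq0$ on $L_n^{(-)}$. Under the assumption $\FF\in\onef_{\integer}$, so a polynomial-size family of 1dft's computes $f_n^{(trans)}$. By the encoding $trans$ of Section~\ref{sec:numbers} ($trans(0)=\lambda$, positive values begin with $1$, negative values begin with $0$), the sign of $f_n(x)$ is read off from the \emph{first} output symbol: $f_n(x)>0$ iff that symbol is $1$. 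A 1dfa can therefore simulate the 1dft step by step, tracking whether a non-blank symbol has been written and whether the first such symbol is $1$, and accept iff it is; this device is deterministic, of polynomial size, and solves $\LL$. Hence $\onep\subseteq\oned$, and since clearly $\onesp\subseteq\onep$, we get $\onesp\subseteq\oned\subseteq\onen$, contradicting $\onesp\nsubseteq\onen$ from Corollary~\ref{1N-vs-1SP}. Therefore $\onef_{\integer}\subsetneqq\onegap$.

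The main obstacle in each part is the faithful translation between the function-class equality and the complexity-class collapse. For (1) the delicate point is confirming that squaring the gap function keeps the witness polynomial-size and preserves the exact zero/nonzero dichotomy over the whole promised domain, so that the two characterizations align word for word. For (2) the delicate point is the 1dft-to-1dfa simulation: one must check that inspecting only the first output symbol suffices to decide the $\onep$-threshold under $trans$, and that buffering the 1dft's possibly multi-symbol single-step writes costs only a constant-factor increase in state complexity.
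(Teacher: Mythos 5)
Your proposal is correct and follows essentially the same route as the paper: for (1) you square the $\onegap$ witness of a $\onecequal$ problem (the paper invokes Lemma~\ref{closure-onesharp} for this) and use the assumed equality to collapse $\co\onecequal$ to $\onen$ (the paper phrases it as $\onecequal\subseteq\co\onen$), contradicting Theorem~\ref{N-vs-cequal}(1); for (2) you use the identical 1dft-to-1dfa sign-detection on the leading symbol of $trans(f_n(x))$ to collapse $\onep$ into $\oned$. The only cosmetic difference is the final contradiction in (2), which you draw from Corollary~\ref{1N-vs-1SP} rather than from $\oned\neq\onen$ as the paper does; both are valid.
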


\begin{proof}
(1) The following argument requires in part the use of Theorem \ref{N-vs-cequal}(1). To show the desired separation of $\onesharp\neq \onegap_{\geq0}$, it suffices to prove that $\onesharp=\onegap_{\geq0}$ implies $\onecequal=\co\onen$ because Theorem \ref{N-vs-cequal}(1) implies $\onecequal\neq \co\onen$, contradicting $\onecequal=\co\onen$. Assuming $\onesharp=\onegap_{\geq0}$, let us consider a family $\LL=\{(L_n^{(+)},L_n^{(-)})\}_{n\in\nat}$ in $\onecequal$ and take a family $\GG=\{(g_n,D_n)\}_{n\in\nat}$ in $\onegap$ such that, for all $n\in\nat$, $g_n(x)=0$ holds for all $x\in L_n^{(+)}$ and  $g_n(x)\neq0$ holds for all $x\in L_n^{(-)}$.
Notice that $D_n$ equals $L^{(+)}_n\cup L^{(-)}_n$ for all $n\in\nat$.
We then set $f_n(x)=g_n(x)^2$ for all $x\in D_n$.
It then follows that $f_n(x)=0$ for all $x\in L_n^{(+)}$, and  $f_n(x)>0$ for all $x\in L_n^{(-)}$.
By the closure property of $\onegap$ under multiplication (Lemma \ref{closure-onesharp}),
the family $\FF = \{(f_n,D_n)\}_{n\in\nat}$ belongs to  $\onegap_{\geq0}$. Our assumption further implies that $\FF$ falls in $\onesharp$. Consequently, $\LL$ is in $\co\onen$.

(2) We next prove that $\onef_{\integer}\neq \onegap$.
Our goal is to show that (*) $\onef_{\integer}=\onegap$ implies $\oned=\onep$. Since $\onen\subseteq \onep$, $\oned\neq\onen$ \cite{SS78} implies $\oned\neq\onep$. Therefore, if (*) is true, then we can conclude that $\onef_{\integer}\neq\onegap$. For the proof of (*), we now assume that $\onef_{\integer} = \onegap$. Let $\LL$ denote any family in $\onep$ and take a family $\FF = \{(f_n,D_n)\}_{n\in\nat}$ in $\onegap$ that ``witnesses'' the membership of $\LL$ to $\onep$; namely, $f_n(x)>0$ for all $x\in L_n^{(+)}$ and $f_n(x)\leq 0$ for all $x\in L_n^{(-)}$.
By our assumption, $\FF$ belongs to $\onef_{\integer}$, in other words, $\{(f_n^{(trans)},D_n)\}_{n\in\nat}$ is in $\onef$.
There exists a family $\{M_n\}_{n\in\nat}$ of polynomial-size 1dft's $M_n$  computing $f_n^{(trans)}$ on $D_n$ for all $n\in\nat$.
Let us consider the following 1dfa $N_n$. On input $x\in D_n$, run $M_n$ on $x$ and accept it exactly when $M_n$ produces a string of the form $1y$. It then follows that $f_n^{(trans)}(x)>0$ iff $trans(f_n(x))\in 1\Sigma^*$ iff $N_n$ accepts $x$.
Thus, the family $\{N_n\}_{n\in\nat}$ indeed solves $\LL$. This implies that $\LL$ is in $\oned$, as requested.
\end{proof}

\section{Relations to Nonuniform Families of Pushdown Automata}\label{sec:relation-pushdown}

In this section, we turn our attention to pushdown automata families, which were initially studied in \cite{Yam21a}.
It is proven in \cite{Yam21a} that $\onen$ and $\onedpd$ are incomparable; namely, both $\onen\nsubseteq \onedpd$ and $\onedpd\nsubseteq \onen$ hold. In what follows, we intend to strengthen the former separation to $\oneu\nsubseteq \onedpd$ and the latter one to $\onedpd\nsubseteq \onep$.
Since $\onen\subseteq \onenpd$, $\onedpd\neq\onenpd$ also follows from $\onen\nsubseteq \onedpd$.
This shows a strength and a limitation of counting complexity classes.

\begin{theorem}\label{onedpd-separation}
(1) $\oneu\nsubseteq \onedpd$.
(2) $\onedpd \nsubseteq \onep$
\end{theorem}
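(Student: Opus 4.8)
The plan is to prove each separation by exhibiting an explicit family of promise problems together with a constructive membership proof for one class and a hardness argument ruling out membership in the other. For part (1) the witness will lie in $\oneu$ and resist all polynomial-size $1$dpda families; for part (2) the witness will be recognized by a simple polynomial-size $1$dpda yet violate the structural property of $\onep$ recorded in Lemma \ref{property-onep}. In both cases the heart of the matter is that a one-way deterministic stack can relate the two halves of an input only in reversed (nested) order.

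\textbf{Part (1).} I would take as the witness the family $\LL_U$ from Theorem \ref{separate-co-oneu}, for which $L_n^{(+)} = \{u\# v \mid u,v\in B_n(n,n),\ \exists! e\in[n]\,((u)_{(e)}\neq (v)_{(e)})\}$ and $L_n^{(-)} = \{u\# v \mid u,v\in B_n(n,n),\ \forall e\in[n]\,((u)_{(e)}=(v)_{(e)})\}$. That theorem already exhibits an unambiguous polynomial-size $1$nfa guessing the unique discrepant block index $e\in[n]$ and verifying it, so $\LL_U\in\oneu$ with no extra work. It then remains to prove $\LL_U\notin\onedpd$, strengthening the known $\onen\nsubseteq\onedpd$ of \cite{Yam21}. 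The obstacle is that $\LL_U$ forces a $1$dpda to test the blocks of $u$ and $v$ in the \emph{same} order $e=1,2,\dots,n$, whereas a single stack, having read $u$ first, exposes information about the blocks of $u$ only in the reverse order $e=n,n-1,\dots,1$. I would adapt the lower-bound technique of \cite{Yam21}: analyze the stack-height profile of the $1$dpda while it scans $v$, observe that the push/pop matchings between positions of $u$ and positions of $v$ must be non-crossing, and conclude that matching block $e$ of $u$ against block $e$ of $v$ for all $e$ simultaneously would require crossing matchings, which one stack cannot realize, so that within stack-state complexity $p(n)$ the machine cannot distinguish $u\# u\in L_n^{(-)}$ from the nearby $u\# v\in L_n^{(+)}$ obtained by altering a single block. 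Closure of $\onedpd$ under complementation also lets one argue equivalently about the copy/equality family $\co\LL_U$.

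\textbf{Part (2).} I would define a ``reversed inner-product'' family. Writing $u = u_1\dollar\cdots\dollar u_n$ and $\tilde v = v_n\dollar\cdots\dollar v_1$ with each $u_i,v_i\in\{0,1\}^{\floors{\log n}}$, set $L_n^{(+)} = \{u\#\tilde v \mid \sum_{i=1}^n u_i\odot v_i\equiv 1\ (\mathrm{mod}\ 2)\}$ and $L_n^{(-)} = \{u\#\tilde v\} - L_n^{(+)}$. Membership $\LL\in\onedpd$ is now easy and is exactly where the reversal pays off: a $1$dpda pushes the blocks $u_1,\dots,u_n$ (so $u_n$ sits on top), and while scanning $\tilde v$ it pops $u_i$ just as it reads $v_i$, updating a single parity bit $\bigoplus_i (u_i\odot v_i)$ in its finite control; the stack alphabet needs only $O(n)$ symbols and the control only polynomially many states, so its stack-state complexity is polynomial. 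For $\LL\notin\onep$ I would mirror the proof of Theorem \ref{parity-vs-onep} essentially verbatim, applying Lemma \ref{property-onep} with the cut at $\#$: the inner-product function has $2^{p(n)}$ realizable sign patterns (its high unbounded-error, i.e.\ sign-rank, complexity), which supplies the required set $R$ with $|\{a(y)\mid a(y)\!\!\downarrow,\ y\in R\}|\geq 2^{p(n)}$, while choosing the first half $u$ orthogonal $(\mathrm{mod}\ 2)$ to all selected second halves places every corresponding $u\#\tilde v$ in $L_n^{(-)}$ and contradicts the lemma's conclusion that some such instance lies in $L_n^{(+)}$.

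The main obstacle is the $\onedpd$ lower bound in part (1). Naive counting is useless here: since the instances have length polynomial in $n$, a $1$dpda stack can physically store all of $u$, so the number of reachable configurations after reading $u\#$ far exceeds the number of distinct first halves, and any fooling-set argument based on configuration count cannot succeed. The separation must instead exploit the LIFO discipline itself, namely that the exposed portion of the stack encodes the late blocks of $u$ first, and turn the resulting order mismatch into a rigorous impossibility of same-order block comparison. Carrying this out carefully, for instance via a crossing-sequence and non-crossing-matching analysis at the boundary combined with an incompressibility argument selecting a Kolmogorov-random $u\in B_n(n,n)$ as in Theorem \ref{separate-co-oneu}, is the delicate step; parts of it I expect to import directly from the $\onedpd$ lower bound of \cite{Yam21}.
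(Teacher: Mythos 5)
Your part (2) is essentially the paper's own proof. The paper takes $L_n^{(+)}=\{u^R\# v\mid u,v\in J_n,\ \sum_{i=1}^{n}u_i\odot v_i\equiv 1\ (\mathrm{mod}\ 2)\}$, i.e.\ it reverses the first half where you reverse the second; either way one half is pushed and then popped in step with the other while a single parity bit is maintained, and the non-membership in $\onep$ is obtained by rerunning the proof of Theorem \ref{parity-vs-onep} through Lemma \ref{property-onep}, exactly as you propose. (One small point: pushing a block $u_i$ symbol by symbol leaves that block reversed on the stack, so you should either reverse within blocks too or buffer $\floors{\log n}$ bits in the finite control, which costs only polynomially many states.)

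For part (1) you have the paper's witness (the same ``exactly one differing block'' family, with blocks drawn from $\Sigma^{n}$ rather than $B_n(n,n)$, and the same reduction to $\co\LL$ via $\onedpd=\co\onedpd$) and the two correct ingredients, namely an incompressible first half $u$ and the LIFO order mismatch. But the lower bound itself is not there. Asserting that push/pop matchings are non-crossing and that same-order block comparison ``would require crossing matchings'' is an intuition, not an argument: a $1$dpda is not obliged to compare blocks at all, so one must show that \emph{whatever} it does on $u\# u$, a Kolmogorov-random $u$ becomes compressible. The paper's proof consists of exactly the steps your sketch omits: (i) the stack height just after reading $u\#$ is at least $\alpha_n=\frac{n^2}{\log p(n)}-3$, since otherwise $u$ is recoverable from the surface configuration $(q,\gamma_0)$ by cycling through all candidates $v$; (ii) for each triple $(q_1,q_2,a)$ the set $D_{q_1,q_2,a}$ of surface substrings (those carrying the machine from $q_1$ to $q_2$ while consuming the top symbol $a$ and never dipping below it) is a singleton whenever it meets the accepting computation on $u\#u$ with a witness of length at most $n^2+1$, because otherwise substitution would force acceptance of a non-diagonal string; (iii) the accepting computation contains a surface substring $y$ with $\alpha_n\le |y|\le n^2+1$; and (iv) $u$ is then reconstructible from $(n,q_1,q_2,a)$ together with the complement of $y$ in $u\#u$, contradicting $C(u)\ge n^2-1$. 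These steps are the entire content of the separation, and ``import from \cite{Yam21}'' does not supply them, since the family and promise there are different. Until this (or an equivalent) compression argument is carried out explicitly, part (1) is not proved.
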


To understand the behaviors of a 1dpda, the key is the stack height history of the 1dpda. An analytic technique based on Kolmogorov complexity was used in \cite{Yam22e} for stack height history to prove an analogue of the so-called \emph{swapping lemma} \cite{Yam08,Yam16}.

A core argument in the following proof of (1) roughly goes as follows. Assume that an input $w$ of high Kolmogorov complexity is partitioned into $xyz$ so that a given 1dpda stores the information $a\gamma$ obtained from $x$ onto a stack, accesses the string $a$ while processing $y$, and consumes the entire string $\gamma$ for $z$. If $y$ is relatively long and uniquely determined (from little information), then it is possible to significantly compress $w$, a contradiction to the choice of $w$.

\begin{proofof}{Theorem \ref{onedpd-separation}}
(1) Since $\onedpd=\co\onedpd$ \cite{Yam21a}, $\oneu\subseteq\onedpd$ is equivalent to $\co\oneu\subseteq\onedpd$.
In what follows, we intend to prove that $\co\oneu\nsubseteq \onedpd$.
Let us define a family $\LL = \{(L_n^{(+)},L_n^{(-)})\}_{n\in\nat}$ of promise problems, which will be used for the desired separation, as follows.
Let $\Sigma=\{0,1\}$ and take a designated separator $\#$ not in $\Sigma$. For simplicity, we write $\Sigma_{\#}$ for $\Sigma\cup\{\#\}$ and $\Theta_{\#}$ for $\Sigma^*\cup \Sigma^*\#\Sigma^*$. For each index $n\in\nat$, we intend to view each string $x$ in $\Sigma^{n^2}$ of the form $x_1x_2\cdots x_n$ with $x_i\in\Sigma^n$ for each $i\in[n]$ as a series $(x_1,x_2,\ldots,x_n)$ of separate $n$-bit strings. In this series, the notation $(x)_{(e)}$ denotes its $e$th entry $x_e$.
For clarity reason, hereafter, each $(x)_{(e)}$ is referred to as a \emph{block} of $x$. We define  $L_n^{(+)} = \{ u\# v\mid u,v\in \Sigma^{n^2}, \exists! e\in[n]((u)_{(e)}\neq (v)_{(e)})\}$ and $L_n^{(-)} = \{u\# v\mid u,v\in \Sigma^{n^2}\} - L_n^{(+)}$. It then follows that $\LL$ belongs to $\oneu$.

Since $\co\LL\in\co\oneu$, it suffices to verify that $\co\LL\notin \onedpd$. Toward a desired contradiction, we assume that $\co\LL\in \onedpd$ and
take a nonuniform family $\{M_n\}_{n\in\nat}$ of polynomial-size 1dpda's that solves $\co\LL$; namely, all strings in $L_n^{(-)}$ are accepted by $M_n$ and all strings in $L_n^{(+)}$ are rejected by $M_n$.
For each index $n\in\nat$, let $M_n=(Q_n,\Sigma_{\#}, \{\rhd,\lhd\}, \Gamma_n,\delta_n, q_{0,n}, \bot, Q_{acc,n},Q_{rej,n})$ and let $e_n$ denote the push size of $M_n$.
There is a polynomial $p$ satisfying $ssc(M_n) =|Q_n||\Gamma_n^{\leq e_n}|\leq p(n)$ for all $n\in\nat$.
It is possible to assume that $M_n$ has only one accepting state, say, $q_{acc,n}$. Moreover, we assume that $M_n$ always empties its stack at the end of computation.
This is always possible by making a series of $\lambda$-moves after reading $\lhd$. See also \cite{Yam21b} for relevant issues.
A \emph{configuration} of $M_n$ is a triplet $(q,\sigma w,\gamma)$ indicating that $M_n$ is in inner state $q$, a tape head is scanning $\sigma$, $\sigma w$ is a suffix of an input together with the endmarker(s), and $\gamma$ is the current stack content. The notation $\vdash_{M_n}$ refers to a transition between two configurations in a single step and $\vdash^*_{M_n}$ is the  transitive closure of $\vdash_{M_n}$.

In the following argument, we fix a sufficiently large number $n\in\nat$ and introduce several notations to simplify the argument.
Let $T_n = Q_n\times Q_n\times \Gamma_n^{(-)}$, where $\Gamma_n^{(-)} = \Gamma_n-\{\bot\}$.
For each element $(q_1,q_2,a)$ in $T_n$,  $D_{q_1,q_2,a}$ denotes the set of all strings $y\in\Theta_{\#}$ such that, for any $z\in\Theta_{\#}$ and any $\gamma\in(\Gamma^{(-)})^*\bot$, $(q_1,yz,a\gamma)\vdash^*_{M_n} (q_2,z,\gamma)$ holds and, while reading $y$, the stack height of $M_n$ does not go below $|\gamma|$; namely, $M_n$ does not access any symbol in $\gamma$. Given a string $u$, we denote by $\gamma_{u}$ the stack content obtained just after reading off $\rhd u$.

We write $I_n$ for the set $\{u\# v\mid u,v\in \Sigma^{n^2}\}$.
Let us recall the Kolmogorov complexity $C(x)$ of string $x$ from the proof of Theorem \ref{separate-co-oneu}.
By a simple counting argument (as in the proof of Theorem \ref{separate-co-oneu}), we can show that there exists a string $u\in \Sigma^{n^2}$ for which $C(u)$ is at least $n^2-1$.
Assume that, on the input $u\# u$, $M_n$ produces a deterministic computation in which $M_n$ enters a unique inner state $q$ with a stack content $\gamma_{u\#}$ obtained just after reading $\#$.
For any element $(q_1,q_2,a)\in T_n$, $E_{q_1,q_2,a}$ denotes the set of all strings $y$ such that, for  certain elements $x,z,a,\gamma$ satisfying $u\# u = xyz$,  $(q_0,\rhd xyz\lhd,\bot)\vdash^*_{M_n} (q_1,yz\lhd,a\gamma)\vdash^*_{M_n} (q_2,z\lhd,\gamma) \vdash^*_{M_n} (q_{acc,n},\varepsilon,\bot)$ holds.
For such a decomposition $(x,y,z)$ of $u\# u$, the notation $sh_{u}(x,y)$ expresses the stack height $|\gamma|$.
Clearly, $E_{q_1,q_2,a}\subseteq D_{q_1,q_2,a}$ follows. We set  $\tilde{E}_n$ to be the union $\bigcup_{(q_1,q_2,a)\in T_n} E_{q_1,q_2,a}$.

For notational simplicity, we hereafter write $\gamma_0$ for $\gamma_{u\#}$, which is the stack content obtained just after reading $\#$.
We first prove that the stack height $|\gamma_0|$ is at least $\alpha_n = \frac{n^2}{\log{p(n)}}-3$.
To show this statement, we assume that $|\gamma_0|<\alpha_n$. Since $q\in Q_n$ and $\gamma_0\in (\Gamma_n)^{|\gamma_0|}$,  it follows that  $C(q)\leq \ceilings{\log|Q_n|} +O(1) \leq \log{p(n)} +O(1)$ and $C(\gamma_0) \leq |\gamma_0|\cdot \ceilings{\log|\Gamma_n|} +O(1) \leq \alpha_n\log{p(n)} +O(1)$.
The last inequality comes from the fact of $|Q_n||\Gamma_n^{\leq e_n}|\leq p(n)$. It is possible to uniquely identify $u$ by cycling through all strings $v\in\Sigma^{n^2}$ and running $M_n$ on $v$ starting with  $(q,\gamma_0)$.
We thus obtain  $C(u)\leq C(n) + C(\pair{q,\gamma_0}) \leq \log{n} + O(1) \leq \alpha_n\log{p(n)}+ \log{p(n)}+\log{n}+ O(1)$. Because  $C(u)\geq n^2-1$ and $n$ is sufficiently large, we conclude that  $\alpha_n\log{p(n)} +\log{p(n)}+\log{n}+ O(1)\geq n^2$. This inequality gives a lower bound of $\alpha_n$ as $\alpha_n\geq \frac{n^2}{\log{p(n)}} - \frac{\log{p(n)}+\log{n}+O(1)}{\log{p(n)}}\geq \frac{n^2}{\log{p(n)}}-2$, which contradicts the definition of $\alpha_n$.

We next prove that, for each element $(q_1,q_2,a)\in T_n$, if $E_{q_1,q_2,a}\cap \Sigma_{\#}^{\leq n^2+1}\neq\setempty$, then $D_{q_1,q_2,a}$ has at most one element. Toward a contradiction, we assume that there are two distinct strings $y$ and $y'$ in $D_{q_1,q_2,a}$ such that either $y$ or $y'$ (or both) is in $E_{q_1,q_2,a}\cap \Sigma_{\#}^{\leq n^2+1}$.
Here, we consider the case where $y$ is in $E_{q_1,q_2,a}$ satisfying $u\#u = xyz$ for two appropriate strings $x$ and $z$.
Since $y,y'\in D_{q_1,q_2,a}$, we can freely replace this $y$ in $u\#u$ by $y'$. The obtained string $xy'z$ must be accepted as well. However, since $y\neq y'$ and $|y|\leq n^2+1$, $xy'z$ does not have the form $v\# v$ for any $v\in \Sigma^{n^2}$. This is a  clear contradiction.

Recall that the stack height $|\gamma_0|$ is at least $\alpha_n$.  We wish to show that there exists a string $y$ satisfying the following  condition:
\begin{center}
(*) $y\in \tilde{E}_n$ and $\alpha_n\leq |y|\leq n^2+1$.
\end{center}
Assuming that no string $y$ satisfies Condition (*), we take six strings $x,z,y_1,y_2,y_3,y_4$ with $r\#r = xyz$ and $y=y_1y_2\#y_3y_4$ for which $y$ and $y_2\#y_3$ are in $\tilde{E}_n$, $|y|>n^2+1$, $|y_2\#y_3|<\alpha_n$, $sh_u(x,y) = sh_u(xy_1,y_2\#y_3)+1$,  $M_n$'s stack height does not go below $|\gamma_x|-1$ while reading $y_2$, and $M_n$'s stack height does not go below $|\gamma_{xy}|-1$ while reading $y_4$.
Note that there are two elements $(q'_1, q'_2, a')$ and $(q''_1,q''_2,a'')$ in $T_n$ for which $y\in E_{q'_1,q'_2,a'}$ and $y_2\#y_3\in E_{q''_1,q''_2,a''}$.
Let us study the following two cases (i)--(ii) separately.

(i) If $|y_1y_2|\geq|y_3y_4|$, then the uniqueness of $y_1$ helps us  construct $y_1$ from the information on $(n,|y_1|,q'_1,q''_1,a',a'',b)$, where $b$ denotes a unique stack symbol satisfying $\gamma_{xy_1} = b\gamma_{x}$. Since $u=xy_1y_2$, it is therefore possible to construct $u$ algorithmically from $(n,x,|y_1|,y_2,q'_1,q''_1,a',a'',b)$. We then remark that $C(|y_1|)\leq \log|y_1|+O(1)\leq 3\log{n}$ and $|y_1|=|y|-|y_2\#y_3|-|y_4|\geq n^2+1-\alpha_n- \frac{n^2+1}{2} \geq \frac{n^2+1}{2}-\alpha_n$
since $|y_4|\leq \frac{|y|}{2}=\frac{n^2+1}{2}$.
Thus, we obtain $C(\pair{x,y_2}) \leq |x| + |y_2| + 2\log{n} +O(1) \leq n^2-|y_1| +2\log{n}+O(1) \leq \frac{n^2}{2} +\alpha_n +3\log{n}$.
It thus follows that $C(u) \leq C(n)+ C(|y_1|)+C(\pair{x,y_2})+C(\pair{q'_1,q''_1})+C(\pair{a',a''})+C(b) +3\log{n} +O(1)  \leq \frac{n^2}{2} + \alpha_n + 4\log{n}+4\log{p(n)}+O(1) \leq \frac{2n^2}{3}$.
This contradicts the inequality $C(u)\geq n^2-1$ since $n$ is sufficiently large.

(ii) In contrast, if $|y_3y_4|>|y_1y_2|$, then we can construct $y_4$ from $(n,|y_4|,q'_2,q''_2,b)$, where $b$ satisfies $\gamma_{u\#y_3} = b\gamma_{u\#y_3y_4}$. Since $u=y_3y_4z$, $u$ is algorithmically constructible from $(n,|y_4|,y_3,z,q'_2,q''_2,b)$. An argument similar to the first case leads to a contradiction. Therefore, Condition (*) holds.

To wrap up our argument, we take a string $y$ that satisfies Condition (*).
For this string $y$, there is an element $(q_1,q_2,a)$ in $T_n$ for which $D_{q_1,q_2,a}$ contains $y$.
There are two cases to discuss separately.
In the first case where $y$ does not include $\#$, we take $z=z_1\# z_2$ so that $u=xyz_1=z_2$.
Since $D_{q_1,q_2,a}=\{y\}$, $y$ is uniquely identified from the information on $(n,q_1,q_2,a)$.
This implies that $u$ is algorithmically constructible from $(n,x,q_0,q_1,q_2,a,z_1)$. It follows that $C(u)\leq C(n)+ C(\pair{q_0,q_1,q_2})+C(a)+ C(\pair{x,z_1})+O(1)\leq n^2-\alpha_n + 3\log{n} + 3\log{p(n)}+O(1) \leq n^2 - \frac{\alpha_n}{2}$. This contradicts the fact that $C(u)\geq n^2-1$ because $\alpha_n$ is strictly increasing as $n$ grows. 
Next, we examine the second case where $y$ contains $\#$. Let $y=y_1\#y_2$ and  $x=y_2x_2$ for appropriate strings $x_2,y_1,y_2$. Since $u$ equals $y_2x_2y_1$, we apply an argument similar to the first case  using the extra information on $x_2$ and the length $|y_2|$ of $y_2$ and then derive a contradiction.


(2) Let us recall the set $J_n$ from the proof of Theorem  \ref{parity-vs-onep}. We further define $L_n^{(+)} = \{u^R\# v\mid u,v\in J_n, \sum_{i=1}^{n}u_i\odot v_i\equiv 1\:(\mathrm{mod}\:2)\}$ and $L_n^{(-)} = \{u^R\# v\mid u,v\in J_n\} - L_n^{(+)}$, where $u^R$ is the \emph{reverse} of $u$. It is easy to solve the promise problem $(L_n^{(+)},L_n^{(-)})$ by a 1dpda defined as follows. On input $u\# v^R$ with $u,v\in J_n$, push $u$ into a stack. After reading $\#$, pop $n$ strings $(u^R_1,u^R_2,\ldots,u^R_n)$ in this order and calculate all values $a_i=u_i\odot v_i\:(\mathrm{mod} \:2)$. Finally, compute $a=\sum_{i=1}^{n}a_i\:(\mathrm{mod}\:2)$ and enter an accepting state or a rejecting state depending respectively on $a=1$ or $a=0$. Since this 1dpda uses only polynomially many inner states in $n$, the family $\LL_{\odot}=\{(L_n^{(+)},L_n^{(-)})\}_{n\in\nat}$ belongs to $\onedpd$.

For the claim of $\LL_{\odot}\notin\onep$, its proof can be carried out  analogously to the proof of Theorem \ref{parity-vs-onep}.
\end{proofof}


Since $\LL_{\odot}$ in the above proof can be solved by an appropriate 1-turn\footnote{A \emph{turn} of a pushdown automaton is a transition change between an increasing phase and a non-increasing phase of stack height of the pushdown automaton. See, e.g., \cite{GS66,Yam22d}. This notion should not be confused with a ``turn'' of a tape head of a finite automaton.} 1dpda family, we can strengthen Theorem \ref{onedpd-separation}(2) to $\mathrm{1t1DPD}\nsubseteq \onep$, where the prefix ``1t'' indicates ``1 turn''.

\begin{corollary}
$\mathrm{1t1DPD}\nsubseteq \onep$.
\end{corollary}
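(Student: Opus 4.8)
The plan is to reuse verbatim the family $\LL_{\odot}=\{(L_n^{(+)},L_n^{(-)})\}_{n\in\nat}$ constructed in the proof of Theorem~\ref{onedpd-separation}(2) and to observe that the 1dpda family already exhibited there as a witness for $\LL_{\odot}\in\onedpd$ is in fact a $1$-turn family. Recall that, on an input whose first block encodes $u$ and whose second block encodes $v$ (one of them written in reverse so that the stack matching proceeds in the correct order), the deterministic pushdown automaton first pushes the entire first block onto its stack symbol by symbol, then reads the separator $\#$, and finally pops the pushed symbols one block at a time while reading the second block, accumulating the parity of $\sum_{i=1}^{n} u_i\odot v_i$ in its finite control. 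Thus every push occurs strictly before $\#$ is read and every pop occurs strictly after it, so the stack height rises monotonically throughout the first block and is non-increasing thereafter (including the final emptying of the stack). Hence there is exactly one change between an increasing phase and a non-increasing phase, i.e., the machine makes exactly one turn.

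First I would formalize this observation. Since the witnessing family already has polynomial stack-state complexity (as verified in the proof of Theorem~\ref{onedpd-separation}(2)) and each of its members is a $1$-turn 1dpda, the family lies in the class $\mathrm{1t1DPD}$ of families of promise problems solvable by polynomial-size $1$-turn 1dpda families. Therefore $\LL_{\odot}\in\mathrm{1t1DPD}$. Here I would take care with the bookkeeping of the turn count: one must confirm that no push ever occurs once the popping phase has begun, and in particular that any $\lambda$-move used during the popping or the final emptying of the stack leaves the phase non-increasing. Both facts are immediate from the explicit transitions described in the construction.

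Next, the proof of Theorem~\ref{onedpd-separation}(2) already establishes the separation $\LL_{\odot}\notin\onep$, by an argument carried out analogously to that of Theorem~\ref{parity-vs-onep} and relying on Lemma~\ref{property-onep}. Combining the two facts, $\LL_{\odot}$ is a member of $\mathrm{1t1DPD}$ that does not belong to $\onep$, which immediately yields the desired strengthening $\mathrm{1t1DPD}\nsubseteq\onep$.

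I do not expect any serious obstacle, since the corollary merely upgrades the witnessing machine of Theorem~\ref{onedpd-separation}(2) from an arbitrary 1dpda to a $1$-turn one while keeping the lower-bound argument untouched. The only point requiring mild care is phrasing the definition of a polynomial-size $1$-turn 1dpda family and checking that the existing construction genuinely meets it; the separation itself is inherited directly from Theorem~\ref{onedpd-separation}(2).
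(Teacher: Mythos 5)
Your proposal is correct and follows essentially the same route as the paper: the paper also obtains the corollary by observing that the witnessing 1dpda family for $\LL_{\odot}$ in Theorem~\ref{onedpd-separation}(2) pushes only while reading the first block and pops only thereafter, hence is $1$-turn, while the non-membership $\LL_{\odot}\notin\onep$ is inherited unchanged. No gaps.
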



We close this section by demonstrating the separation between $\onen$ and $\onenpd$.

\begin{proposition}\label{onen-vs-onenpd}
$\onen \subsetneqq \onenpd$.
\end{proposition}

\begin{proof}
The inclusion $\onen\subseteq \onenpd$ is trivial. We next assert that $\onen\neq \onenpd$. If $\onen=\onenpd$, then we obtain $\onedpd\subseteq \onen \subseteq \onep$. However, this inclusion contradicts the fact that $\onedpd\nsubseteq \onep$ of  Theorem \ref{onedpd-separation}(2). Therefore, $\onenpd$ differs from $\onen$, as requested.
\end{proof}

\section{Non-Closure Properties of 1\# under Functional Operations}\label{sec:functional-operation}

Under the typical functional operations of \emph{addition} and \emph{multiplication}, we have shown in Lemma \ref{closure-onesharp} the closure properties of $\onesharp$ and $\onegap$. Here, we wish to expand the scope of such functional operations. In this section, we particularly examine six additional functional operations.


Since standard division and subtraction are not applicable to natural numbers, we introduce their restricted variants for our purpose.
We define the \emph{integer division} $\oslash$ by setting $a\oslash b = \floors{a/b}$ when $b\neq0$ for any numbers $a,b\in\nat$.
The special case of integer division, $a\oslash 2$, is referred to as the \emph{integer division by two}. We also define the \emph{proper subtraction} $\ominus$ as $a\ominus b = a - b$ if $a\geq b$ and $a\ominus b=0$ otherwise for any nonnegative integers $a$ and $b$. The \emph{proper decrement} is a special case of proper subtraction, defined by $a\ominus 1$.


Now, we wish to expand the above number operations to ``functional operations'' in a natural way.
Given two partial functions $(f,D)$ and $(g,E)$ with $D,E\subseteq \Sigma^*$ for a fixed alphabet $\Sigma$, we define
$f\oslash g$, $f\oslash2$, $f\ominus g$, and $f\ominus 1$ by setting  $(f\oslash g)(x) = f(x)\oslash g(x)$, $(f\oslash2)(x)=f(x)\oslash2$, $(f\ominus g)(x) = f(x)\ominus g(x)$, and $(f\ominus 1)(x) = f(x)\ominus 1$ for all strings $x$ in $D\cap E$.
Furthermore, we introduce two more functional operations, $\max(f,g)$ and $\min(f,g)$, by setting $\max(f,g)(x)=\max\{f(x),g(x)\}$ and $\min(f,g)(x)=\min\{f(x),g(x)\}$ for any $x\in D\cap E$.

Hereafter, we intend to verify the non-closure properties of $\onesharp$ under the above-mentioned six functional operations.

\begin{theorem}\label{functional-operation}
The counting function class $\onesharp$ is not closed under the following six functional operations: minimum, maximum, proper subtraction, integer division, proper decrement, and integer division by two.
\end{theorem}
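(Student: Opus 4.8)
The plan is to refute each closure by assuming it and deriving one of the already-established unconditional separations. All six arguments share the same skeleton: take a family witnessing membership in some counting class, apply the operation (together with the closure under addition and multiplication from Lemma \ref{closure-onesharp}) to manufacture a $\onesharp$-function of a more restrictive shape, and then read off a class collapse that is known to fail.

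I would dispose of proper subtraction first, since it is the cleanest. If $\onesharp$ were closed under $\ominus$, then every $h\in\onegap_{\geq0}$ --- which by Lemma \ref{one-gap-character}(1) equals $f_1-f_2$ for some $f_1,f_2\in\onesharp$ with $f_1\geq f_2$ on the common domain --- would satisfy $h=f_1\ominus f_2\in\onesharp$. Together with $\onesharp\subseteq\onegap_{\geq0}$ from Lemma \ref{inclusion-1F-to-1sharp}(1), this forces $\onesharp=\onegap_{\geq0}$, contradicting Proposition \ref{separation-onef}(1). For integer division by two I would use the parity bit: given a $\oneparity$-witness $f$ (odd on $L_n^{(+)}$, even on $L_n^{(-)}$), closure under $\oslash2$ puts $p=f\oslash2$ and hence $2p=p+p$ into $\onesharp$, so the gap function $f-2p=f\bmod 2$ equals $1$ on $L_n^{(+)}$ and $0$ on $L_n^{(-)}$; this is exactly a $\onesp$-witness, giving $\oneparity\subseteq\onesp$ against Proposition \ref{onesp-vs-oneparity}. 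General integer division then follows for free, because feeding the constant function $\mathbf{2}\in\onesharp$ into $\oslash$ reproduces $\oslash2$, so a witness against closure under $\oslash2$ is automatically a witness against closure under $\oslash$.

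For minimum I would exploit the sign trick: if $f$ witnesses a $\onen$-family and $\mathbf{1}\in\onesharp$ denotes the constant $1$, then $\min(f,\mathbf{1})$ equals $1$ on $L_n^{(+)}$ and $0$ on $L_n^{(-)}$, i.e.\ a $\oneu$-witness, so min-closure yields $\onen\subseteq\oneu$, contradicting $\oneu\subsetneqq\onen$ \cite{Yam22b}. To handle maximum without a fresh lower bound, I would first show that over $\onesharp$ the two operations are interchangeable through a ``complement-to-total'' device: putting two machines into a common branching normal form (Lemma \ref{branching-normal-form}) with total path count $T=c^{|x|+2}$ makes $T-f$ a $\onesharp$-function whenever $f\in\onesharp$ with that total, and since $T-(\cdot)$ reverses order one has $\min(f,g)=T-\max(T-f,T-g)$ and symmetrically for $\max$. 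Hence closure under $\max$ implies closure under $\min$, and the previous argument applies.

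The genuinely resistant case is proper decrement, and I expect it to be the main obstacle. Because $f\ominus1$ only peels off a single unit, it cannot extract a gap that sits on top of a large, input-dependent offset, so none of the complement-to-total manipulations produce a $\onesharp$ indicator and every reduction used above breaks down here. My plan is therefore a direct construction carrying its own lower bound. I would take $L_n^{(+)}=\{u\#v\mid u,v\in B_n(n,n),\ |\{e\in[n]:(u)_{(e)}\neq(v)_{(e)}\}|=2\}$ and let $L_n^{(-)}$ be the same set with the count equal to $1$; the 1nfa that nondeterministically guesses a block $e$, remembers the $O(\log n)$-bit value $(u)_{(e)}$, and accepts iff $(u)_{(e)}\neq(v)_{(e)}$ realizes a $\onesharp$-function taking value exactly $2$ on $L_n^{(+)}$ and exactly $1$ on $L_n^{(-)}$. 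Closure under $\ominus1$ would then place the $\{0,1\}$-valued indicator of $L_n^{(+)}$ into $\onesharp$, i.e.\ $(L_n^{(+)},L_n^{(-)})\in\oneu$. The crux is to prove that this family is \emph{not} in $\oneu$, which I would establish by a Kolmogorov-incompressibility argument in the style of Theorem \ref{separate-co-oneu}: a polynomial-size unambiguous 1nfa cannot, along a single accepting path, both certify two differing blocks and verify that all remaining blocks agree, so a high-complexity string could be reconstructed from the bounded information the automaton carries across the separator $\#$, a contradiction. Making this counting argument rigorous --- in particular pinning down exactly which bounded state information crosses $\#$ --- is where the real work lies.
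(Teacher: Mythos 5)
Most of your reductions are sound and coincide with the paper's: the cases of integer division by two (via $f-2(f\oslash 2)=f\bmod 2$ against Proposition \ref{onesp-vs-oneparity}), integer division (via the constant $\mathbf{2}$), and minimum (via $\min(f,\mathbf{1})$ against $\oneu\neq\onen$) are exactly the paper's arguments. Your treatment of proper subtraction is a genuine and valid alternative: the paper derives it from proper decrement, whereas you observe directly that closure under $\ominus$ would force $\onegap_{\geq0}\subseteq\onesharp$ via Lemma \ref{one-gap-character}(1), contradicting Proposition \ref{separation-onef}(1); this is cleaner and self-contained.

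There are, however, two gaps. For maximum, the identity $\min(f,g)=T-\max(T-f,T-g)$ does not yield ``max-closure implies min-closure'': the hypothetical $\onesharp$-machine computing $\max(T-f,T-g)$ carries its \emph{own} branching total $T'$, which you cannot force to satisfy $T'\leq T$, so the outer $T-(\cdot)$ only lands you in $\onegap=\onesharp-\onesharp$, not back in $\onesharp$. This is repairable --- $\onegap$ membership of the resulting $\{0,1\}$-valued function already places the family in $\onesp$, which suffices for a contradiction --- but as stated the interchangeability claim is false, and the paper instead applies $\max$ directly to a normalized $\onecequal$-witness and then subtracts inside $\onegap$. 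The serious gap is proper decrement. Your assertion that ``every reduction used above breaks down here'' is incorrect: under the closure assumption, $f\ominus 1\in\onesharp$, hence $f-(f\ominus 1)\in\onesharp-\onesharp=\onegap$ by Lemma \ref{one-gap-character}(1), and for an $\onen$-witness $f$ this difference equals $1$ on $L_n^{(+)}$ and $0$ on $L_n^{(-)}$, i.e.\ it is a $\onesp$-witness; thus $\onen\subseteq\onesp\subseteq\onecequal$, contradicting Theorem \ref{N-vs-cequal}(2). This is precisely the paper's argument --- the point you missed is that one need only extract a $\onegap$ indicator, not a $\onesharp$ one. Your substitute, a direct construction requiring a new Kolmogorov-complexity lower bound showing that the ``exactly two vs.\ exactly one differing blocks'' family lies outside $\oneu$, is left unproven (you concede this yourself), so the decrement case of your proposal is incomplete as written.
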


\begin{proof}
The non-closure properties under proper subtraction and integer division follow immediately from those under proper decrement and integer division by two. Therefore, it suffices to target the following four operations: (1) proper decrement, (2) integer division by two, (3) maximum, and (4) minimum. Following the arguments given in \cite{OH93}, we relate those non-closure properties to the separations of nonuniform state complexity classes shown in Section \ref{sec:relationship} so that the separation results of the section immediately lead to the desired non-closure properties of the theorem.

(1) We begin with the case of proper decrement. Toward a contradiction, assuming that $\onesharp$ is closed under proper decrement,  we wish to prove that $\onen\subseteq \onesp$. If this statement is true, then we obtain $\onen\subseteq \onecequal$ since $\onesp\subseteq \onecequal\cap\co\onecequal$ (Proposition \ref{onesp-vs-onecequal}).
However, this contradicts Theorem \ref{N-vs-cequal}(2) and we thus conclude that $\onesharp$ is not closed under proper decrement.

In what follows, we concentrate on proving that $\onen\subseteq \onesp$.
Let $\LL=\{(L_n^{(+)},L_n^{(-)})\}_{n\in\nat}$ denote any family of promise problems in $\onen$ and take a family $\{M_n\}_{n\in\nat}$ of polynomial-size 1nfa's that solves $\LL$.
For each index $n\in\nat$, we set $D_n$ to be $L_n^{(+)}\cup L_n^{(-)}$ and define $f_n(x)=\#M_n(x)$ for all strings $x\in D_n$.
Let us consider the function family  $\FF=\{(f_n,D_n)\}_{n\in\nat}$. We further define $h_n(x)=f_n(x)\ominus 1$ and set $\HH=\{(h_n,D_n)\}_{n\in\nat}$. Since $\FF\in\onesharp$, our assumption implies that $\HH\in\onesharp$. There is a family $\NN=\{N_n\}_{n\in\nat}$ of polynomial-size 1nfa's that ``witnesses'' the membership of $\HH$ to $\onesharp$.
By Lemma \ref{branching-normal-form}, we can assume that $N_n$ is in a branching normal form and it thus produces exactly $c^{|\rhd x\lhd|}$ ($= c^{|x|+2}$) computation paths for all valid inputs $x$, where $c>0$ is a certain fixed constant.
For convenience, we write $g_n(x)$ for $-h_n(x)$ for all strings $x\in D_n$.  Since   $\{(g_n,D_n)\}_{n\in\nat}$ belongs to $\onegap$ by Lemma \ref{one-gap-character}(1), we conclude from Lemma \ref{closure-onesharp} that the family $\FF+\GG = \{(f_n+g_n,D_n)\}_{n\in\nat}$ also belongs to $\onegap$.
We remark that, if $x\in L_n^{(+)}$ then $(f_n+g_n)(x) = f_n(x)+g_n(x)= \#M_n(x) - (\#M_n(x) - 1) =1$, and if $x\in L_n^{(-)}$ then $(f_n+g_n)(x) = f_n(x)+g_n(x)=0-0=0$. This implies that $\LL$ is in $\onesp$, as requested.

(2) We next focus on the integer division by two. Assume that $\onesharp$ is closed under integer division by two. From this assumption, we intend to derive the equality of $\onesp=\oneparity$. This clearly contradicts Proposition \ref{onesp-vs-oneparity}. Let $\LL=\{(L_n^{(+)},L_n^{(-)})\}_{n\in\nat}$ denote any family of promise problems in $\oneparity$.
There exists a family $\FF=\{(f_n,D_n)\}_{n\in\nat}$ of partial functions in $\onesharp$ that ``witnesses'' the membership of $\LL$ to $\oneparity$. Let us further take a family $\{M_n\}_{n\in\nat}$ of polynomial-size 1nfa's that ``witnesses'' the fact of ``$\FF\in\onesharp$''. It is possible to assume by Lemma \ref{branching-normal-form} that $M_n$ is in a branching normal form.

We define $\HH=\{(h_n,D_n)\}_{n\in\nat}$ by setting $h_n(x) = 2(f_n(x)\oslash 2)$ for all $n\in\nat$ and $x\in D_n$. Since $\HH$ is in $\onesharp$ by our assumption, there is another family $\{N_n\}_{n\in\nat}$ of polynomial-size 1nfa's that ``witnesses'' the membership of $\HH$ to $\onesharp$.
We then define $g_n(x)=c^{|x|+2}-f_n(x)$.
Note that, by Lemma \ref{one-gap-character}(1), the family $\GG = \{(g_n,D_n)\}_{n\in\nat}$ belongs to $\onegap$. Lemma \ref{closure-onesharp} then implies that $\FF+\GG = \{(g_n+h_n,D_n)\}_{n\in\nat}$ also falls in $\onegap$. Given any index $n$ and any string $x$, it follows that $x\in L_n^{(+)}$ implies $g_n(x)+h_n(x)=c^{|x|+2}$ and that $x\in L_n^{(-)}$ implies $g_n(x)+h_n(x)=c^{|x|+2}-1$. Hence, $\LL$ belongs to $\co\onesp$, which also equals $\onesp$ by Lemma \ref{parity-complement}. Therefore, we obtain  $\oneparity\subseteq \onesp$. Since $\onesp\subseteq \oneparity$ holds (by the first part of the proof of Proposition \ref{onesp-vs-oneparity}), we conclude that $\oneparity=\onesp$.

(3) We then target the maximum operation. In this case, we want to show that, if $\onesharp$ is closed under maximum, then $\onecequal$ collapses to $\onesp$. Since $\onecequal\neq \onesp$ by Proposition \ref{onesp-vs-onecequal}, we conclude the desired non-closure property of $\onesharp$.
We begin with taking
any family $\LL=\{(L_n^{(+)},L_n^{(-)})\}_{n\in\nat}$ of promise problems in $\onecequal$. There exists a family $\NN=\{N_n\}_{n\in\nat}$ of polynomial-size 1nfa's that ``witnesses'' the membership of $\LL$ to $\onecequal$.
By Lemmas \ref{branching-normal-form} and \ref{cequal-path-number}, we can assume that $\#N_n(x)=(2c)^{|x|+2}/2$ holds for any $x\in L_n^{(+)}$, and  $\#\overline{N}_n(x)<(2c)^{|x|+2}/2$ holds for any $x\in L_n^{(-)}$, where $c\geq1$ is a certain fixed constant.

Next, we define $D_n=L_n^{(+)}\cup L_n^{(-)}$ for all $n\in\nat$. Moreover, we set $f_n(x)=\#N_n(x)$ and $g_n(x)= (2c)^{|x|+2}/2 -1$ for all numbers $n\in\nat$ and all valid strings $x\in D_n$. Clearly, $\{(f_n,D_n)\}_{n\in\nat}$ and  $\{(g_n,D_n)\}_{n\in\nat}$ are in $\onesharp$.
From $f_n$ and $g_n$, we define $h_n=\max\{f_n,g_n\}$. Since $\onesharp$ is closed under maximum by our assumption, $\{(h_n,D_n)\}_{n\in\nat}$ also belongs to $\onesharp$.
It thus follows that $h_n(x)=(2c)^{|x|+2}/2$ for all $x\in L_n^{(+)}$, and  $h_n(x)=(2c)^{|x|+2}/2-1$ for all $x\in L_n^{(-)}$.
By setting $k_n(x)=-g_n(x)$, the family $\KK=\{(k_n,D_n)\}_{n\in\nat}$ belongs to $\onegap$. Lemma \ref{closure-onesharp} then implies that the family $\HH+\KK = \{(h_n+k_n,D_n)\}_{n\in\nat}$ also belongs to $\onegap$. Note that $x\in L_n^{(+)}$ implies  $h_n(x)+k_n(x)=1$ and that $x\in L_n^{(-)}$ implies $h_n(x)+k_n(x)=0$.
This indicates that $\LL$ is in $\onesp$.

(4) Finally, we consider the minimum operation. Assuming that $\onesharp$ is closed under minimum, we intend to prove $\onen=\oneu$ because this contradicts the early result of $\onen\neq\oneu$ \cite{Yam22b,Yam22c}. Let us consider an arbitrary family $\LL=\{(L_n^{(+)},L_n^{(-)}\}_{n\in\nat}$ in $\onen$ and take a family $\FF=\{(f_n,D_n)\}_{n\in\nat}$ in $\onesharp$ satisfying that $D_n=L_n^{(+)}\cup L_n^{(-)}$ for all $n\in\nat$ and that, for any $x\in D_n$, $x\in L_n^{(+)}$ implies $f_n(x)>0$, and $x\in L_n^{(-)}$ implies $f_n(x)=0$.
By setting $g_n(x)=1$ for all $n$ and all valid $x$, we define $h_n=\min\{f_n,g_n\}$. Note that the family $\{(h_n,D_n)\}_{n\in\nat}$ belongs to $\onesharp$ since  $\onesharp$ is closed under minimum. The definition of $h_n$ makes  $\LL$ fall in $\oneu$. We thus conclude that $\onen=\oneu$, as requested.
\end{proof}

\section{A Short Discussion and Open Problems}\label{sec:discussion}

Toward the full understandings of the essence of nondeterminism, counting has remained as an important research subject in computational complexity theory. To promote the better understandings of the nature of counting in various low complexity classes, we have initiated in this work a study of ``counting'' within the framework of nonuniform models of polynomial-size finite automata families.
This work has made a significant contribution to expanding and exploring the existing world of nonuniform (polynomial) state complexity classes \cite{BL77,Gef12,Kap09,Kap12,Kap14,KP15,SS78, Yam19a,Yam19b,Yam21a,Yam22a,Yam22b,Yam23b} with a special focus on ``counting''.
Throughout this work, we have demonstrated numerous containments and separations among counting complexity classes. These results are summarized in Figure \ref{fig:class-separations}.

For future research, hereafter, we briefly argue 7 interesting topics concerning ``counting''.

\renewcommand{\labelitemi}{$\circ$}
\begin{enumerate}
  \setlength{\topsep}{-2mm}%
  \setlength{\itemsep}{1mm}%
  \setlength{\parskip}{0cm}%

\item Unfortunately, Figure \ref{fig:class-separations} is not yet complete. For example, it is not yet clear that  $\onecequal\nsubseteq \oneparity$, $\oneparity\nsubseteq \onenpd$, and $\onecequal\nsubseteq \onedpd$ hold. Therefore, it is of importance to complete the figure by proving all the missing containments and separations.

\item It is known that the unbounded-error probabilistic complexity classes $\pp$ (polynomial time) and $\pl$ (logarithmic space) are both closed under union and intersection \cite{AO96,BRS95}. Is it true that $\onep$ is also closed under union and intersection? We conjecture that this is not the case but its proof is not yet known so far.

\item In Sections \ref{sec:basic-rel-among} and \ref{sec:functional-operation}, we have studied closure properties of $\onesharp$ and $\onegap$ under functional operations. Interestingly, many non-closure properties of $\onesharp$ are proven in Section \ref{sec:implications} and \ref{sec:functional-operation} by reducing them to separations of nonuniform state complexity classes. It is desirable to find more interesting functional operations under which $\onesharp$ and $\onegap$ are closed.

\item The number of ``turns'' in pushdown automata has played a key role in Section \ref{sec:relation-pushdown}, where we have briefly referred to the strength of the 1-turn restriction of $\onedpd$. It is possible to expand this notion to the \emph{$k$-turn restriction} of $\onedpd$, denoted $k\mathrm{t1DPD}$, for any fixed $k\in\nat^{+}$ and to discuss various structural properties of $k\mathrm{t1DPD}$ (as well as its nondeterministic variant, $k\mathrm{t1NPD}$). See, e.g.,
    \cite{GS66,Yam22d} for basic properties of $k$-turn 1dpda's.

\item In this work, we have not discussed ``two-way'' variants of nonuniform state complexity classes. Two-way head moves seem to dramatically change the landscape of nonuniform (polynomial) state complexity classes. Using two-way finite automata, we can define the two-way variants of $\onesharp$ and $\onegap$, which we analogously write $\twosharp$ and $\twogap$, respectively. These function classes help us define two-way variants of $\oneparity$, $\onecequal$, etc. To express them, we use the prefix ``2'', such as $\twoparity$ and $\twocequal$. For these complexity classes, is it true that, for instance,  $\twoparity\nsubseteq 2^{\oned}$ and $\twocequal\nsubseteq 2^{\oned}$? Here,  $2^{\oned}$ is defined similarly to $\oned$ but using families of ``exponential-size'' 1dfa's. See, e.g., \cite{Kap09,Kap12,Yam22a} for more information on this complexity class.

\item Let us recall from \cite{Kap14,SS78} that $\twod$ and $\twon$ are closely related to $\dl/\poly$ and $\nl/\poly$. When  all valid instances of the $n$th promise problem in a given family $\LL =\{(L^{(+)}_n,L^{(-)}_n)\}_{n\in\nat}$ are restricted to length at most polynomial in $n$ for a fixed polynomial,  we say that $\LL$ has a \emph{polynomial ceiling} \cite{Yam22a}. Given a complexity class $\CC$ of families of promise problems, the notation $\CC/\poly$ refers to a restriction of $\CC$ on families of polynomial ceilings. Using this succinct notation, for example, we obtain $\twod/\poly$ and $\twon/\poly$ \cite{Kap09,Kap12}. Refer also to \cite{Yam22a} for other complexity classes. In a similar fashion, we can introduce $\twosharp/\poly$ and $\twogap/\poly$ from $\twosharp$ and $\twogap$. What kind of  connection exists between $\twosharp/\poly$ (resp.,  $\twogap/\poly$) to $\sharpl/\poly$ (resp., $\gapl/\poly$)? Can we expand such a connection to log-space counting complexity classes (such as $\parityl$, $\cequall$, and $\pl$ \cite{AJ93,BJL+91,OH93})?

\item As we have briefly discussed in Section \ref{sec:def-counting-func}, counting and gap  functions can be computed by a restricted form of \emph{weighted automata}. We may be able to expand $\onesharp$ and $\onegap$ further using ``general'' weighted automata with positive rational or real weights. It is interesting to explore the properties of these generalized counting and gap functions.
\end{enumerate}

\let\oldbibliography\thebibliography
\renewcommand{\thebibliography}[1]{%
  \oldbibliography{#1}%
  \setlength{\itemsep}{-2pt}%
}
\bibliographystyle{alpha}

\end{document}